\newcommand{\HIGH}[1]{{\color{black}{#1}}}
\newcommand{\high}[1]{{\color{black}{#1}}}
\begin{document}

\title{%\LARGE \bf
Node-based Service-Balanced Scheduling 
for Provably Guaranteed Throughput and Evacuation Time Performance
}

%\author{Bo Ji and Jie Wu
%\thanks{B. Ji and J. Wu are both with Dept. of CIS at Temple University, Philadelphia, PA.
%Email: \{boji, jiewu\}@temple.edu.}
%}
%\author{\IEEEauthorblockN{Bo Ji and Jie Wu}
%\IEEEauthorblockA{Department of Computer and Information Sciences\\
%Temple University, Philadelphia, PA\\
%Email:\{boji, jiewu\}@temple.edu}
%}

\author{Yu~Sang,
Gagan~R.~Gupta, 
and Bo~Ji~\IEEEmembership{Member,~IEEE} 
\thanks{This work was supported by the NSF under Grant CNS-1651947.
\HIGH{A preliminary version of this work was presented at the IEEE INFOCOM,
%International Conference on Computer Communications (INFOCOM), 
San Francisco, CA, April 10 -- 15, 2016 \cite{ji16infocom}.}}
\thanks{Yu Sang (yu.sang@temple.edu) and Bo Ji (boji@temple.edu) are with Dept. of CIS at Temple University. 
Gagan R. Gupta (gagan.gupta@iitdalumni.com) is with AT\&T Labs.
Bo Ji is the corresponding author.}
%Emails: yu.sang@temple.edu, gagan.gupta@iitdalumni.com, boji@temple.edu.
}

\newtheorem{theorem}{\it Theorem}
\newtheorem{proposition}{\it Proposition}
\newtheorem{lemma}{\it Lemma}
\newtheorem{corollary}{\it Corollary}
%\newcounter{definition}
\newtheorem{definition}{\it Definition}
\newtheorem{conjecture}{\it Conjecture}

\newcommand{\ChromaticIndex}{\mathcal{X}^{\prime}}
\newcommand{\Graph}{G}
\newcommand{\Vertex}{V}
\newcommand{\Edge}{E}
\newcommand{\Matching}{\mathcal{M}}

\newcommand{\Odd}{\mathcal{O}}
\newcommand{\xr}{{x_r}}
\newcommand{\xrj}{{x_{r_j}}}
\newcommand{\DIFF}{{\textstyle \frac{D^+}{dt^+}}}
\newcommand{\diff}{{\textstyle \frac{d}{dt}}}
\newcommand{\Expect}{\operatorname{E}}
\newcommand{\Var}{\operatorname{Var}}
\newcommand{\mT}{\mathcal{T}}
\newcommand{\mC}{\mathcal{C}}
\newcommand{\mH}{\mathcal{H}}

\newcommand{\argmax}{\operatornamewithlimits{argmax}}

\DeclareGraphicsExtensions{.pdf,.eps}

\IEEEtitleabstractindextext{%
\begin{abstract}
This paper focuses on the design of \emph{provably efficient online} link scheduling algorithms for multi-hop 
wireless networks. We consider single-hop traffic and the one-hop interference model. The objective is twofold: 
1) \emph{maximizing the throughput} when the flow sources continuously inject packets into the network, and
2) \emph{minimizing the evacuation time} when there are no future packet arrivals. 
The prior work mostly employs the link-based approach, which leads to throughput-efficient algorithms 
but often does not guarantee satisfactory evacuation time performance. In this paper, 
\HIGH{we propose a novel Node-based Service-Balanced (NSB) online scheduling algorithm.
NSB aims to give scheduling opportunities to heavily congested nodes in a balanced manner, by maximizing
the total weight of the scheduled nodes in each scheduling cycle, where the weight of a node is 
determined by its workload and whether the node was scheduled in the previous scheduling cycle(s).}
We rigorously prove that NSB guarantees to achieve an efficiency 
ratio no worse (or no smaller) than $2/3$ for the throughput and an approximation ratio no worse 
(or no greater) than $3/2$ for the evacuation time. It is remarkable that NSB is both throughput-optimal 
and evacuation-time-optimal if the underlying network graph is bipartite. Further, we develop a lower-complexity 
NSB algorithm, called LC-NSB, which provides the same performance guarantees as NSB.
Finally, we conduct numerical experiments to elucidate our theoretical results.

%In the past decades, great research progress has been made towards each individual direction. 
%However, how to design an efficient online scheduling algorithm that can provide provably 
%guaranteed performance for both the throughput and the evacuation time remains an important
%open question. 
\end{abstract}

\begin{IEEEkeywords}
%Computer Society, IEEE, IEEEtran, journal, \LaTeX, paper, template.
Wireless scheduling, node-based approach, service-balanced, throughput, evacuation time, provable performance guarantees.
\end{IEEEkeywords}}

\maketitle
\section{Introduction} \label{sec:intro}
%%%%%%%%%%%%%%%%%%%%%%%%%%%%%%%%%%%%%%%%%%%%%%%%%%%%%%%%%%%%%
Resource allocation is an important problem in wireless networks. Various functionalities
at different layers (transport, network, MAC, and PHY) need to be carefully designed so as to 
efficiently allocate network resources and achieve optimal or near-optimal network performance. 
Among these critical functionalities, link scheduling at the MAC layer, which, at each time decides 
which subset of non-interfering links can transmit data, is perhaps the most challenging component
and has attracted a great deal of research effort in the past decades (see \cite{georgiadis06,lin06c} 
and references therein).

In this paper, we focus on the design of \emph{provably efficient online} link scheduling algorithms 
for multi-hop wireless networks with \emph{single-hop} traffic under the \emph{one-hop} 
interference model\footnote{The packets of single-hop traffic traverse only one link before leaving 
the system. \HIGH{The one-hop interference model is also called the node-exclusive or the primary 
interference model, where two links sharing a common node cannot be active at the same time.
This model can properly represent practical wireless networks based on Bluetooth or FH-CDMA 
technologies \cite{hajek88,sarkar05,lin06,lin06c,joo09}.}}.
%interference model. 
%\HIGH{The one-hop interference model is also called the node-exclusive or the primary interference 
%model \cite{hajek88,lin06c}, where two links sharing a common node cannot be active at the same time.}
%The objective is twofold: 1) \emph{maximizing the throughput} 
%(i.e., stabilizing the network under any feasible traffic load) when the flow sources continuously inject 
%packets into the network, and 2) \emph{minimizing the evacuation time} (i.e., minimizing the time interval 
%needed for evacuating all the existing packets) when there are no future packet arrivals. Both throughput 
%and evacuation time are important metrics of the network performance. 
%While throughput is widely shared as the first-order performance metric, 
%which characterizes the long-term average traffic load that can 
%be supported by the network, evacuation time is a critical metric in the settings without future arrivals. 
%One practical example of such settings is environmental monitoring and data collection in wireless sensor 
%networks, where all the measurement data periodically generated by different nodes at the same time, 
%need to be transferred to one or multiple sinks within a minimum time for further processing. Evacuation 
%time metric is also closely related to the delay performance in scenarios with arrivals \cite{guptathesis}, as well 
%as meeting real-time scheduling requirements (e.g., packet delivery with deadlines) \cite{soldati09,kang13}.
While throughput is widely shared as the first-order performance metric, which characterizes the long-term 
average traffic load that can be supported by the network, 
\HIGH{evacuation time is also of critical importance due to the following reasons.
First, draining all existing packets within a minimum amount of time is a major concern in the settings 
without future arrivals.
One practical example is environmental monitoring using wireless sensor networks, 
where all measurement data periodically generated by different nodes at the same time, 
need to be transmitted to one or multiple sinks for further processing.
Second, evacuation time is also highly correlated with the delay performance in the settings with arrivals. 
For example, evacuation-time optimality is a necessary condition for the strongest delay notion 
of sample-path optimality \cite{guptathesis}.
Third, it is quite relevant to timely transmission of delay-sensitive data traffic (e.g., deadline-constrained 
packet delivery) \cite{soldati09,singh16}.}

However, these different metrics may lead to conflicting scheduling decisions -- an algorithm designed 
for optimizing one metric may be detrimental to the other metric (see \cite{guptathesis} for such examples). 
Therefore, it is challenging to design an efficient scheduling algorithm that can provide provably guaranteed 
performance for both metrics at the same time. 

While throughput has been extensively studied since the seminal work by Tassiulas and Ephremides 
\cite{tassiulas92} and is now well understood, evacuation time is much less studied. In the no-arrival 
setting, the minimum evacuation time problem \high{is equivalent} to the \emph{multigraph\footnote{In 
a multigraph, more than one edge, called multi-edge, is allowed between two nodes.} edge coloring problem}
\HIGH{due to the following: each multi-edge corresponds to a packet waiting to be transmitted over the link 
between the nodes of the multi-edge; each color corresponds to a feasible schedule (or a matching); finding 
the chromatic index (i.e., the minimum number of colors such that, each multi-edge is assigned a color and 
two multi-edges sharing a common node cannot have the same color) is equivalent to minimizing the time 
for evacuating all the packets by finding a matching at a time.}
Since edge-coloring is a classic NP-hard problem \cite{holyer81}, a rich body of research has focused on 
developing approximation algorithms (see \cite{stiebitz12} for a good survey). 
These algorithms employ a popular recoloring technique that requires computing the colors all at once, 
and yield a complexity that depends on the number of multi-edges. This, however, renders them 
unsuitable for application in a network with arrivals. This is because the complexity would become 
impractically high when there are a large number of packets (or multi-edges) in the network.
Therefore, it is desirable to have an \emph{online} scheduling algorithm that at each time quickly 
computes one schedule (or color) based on the current network state (e.g., the queue lengths) and 
yields a complexity that only depends on the node count $n$ and/or the link count $m$.

Most existing online scheduling algorithms either make scheduling decisions based on the link load (such as 
Maximum Weighted Matching (MWM) \cite{tassiulas92} and Greedy Maximal Matching (GMM) \cite{lin06,joo09}) 
or are load agnostic (such as Maximal Matching (MM) \cite{lin06,wu07}). While these algorithms are throughput-efficient, 
none of them can guarantee an approximation ratio better (or smaller) than $2$ for the evacuation time \cite{guptathesis}. 
In contrast, several prior work \cite{mekkittikul98,tabatabaee09,guptathesis,ji15ciss} proposes algorithms based 
on the node workload (i.e., packets to transmit or receive), such as the Lazy Heaviest Port First (LHPF) 
algorithms, which are both throughput-optimal and evacuation-time-optimal in input-queued switches (which 
can be described as bipartite graphs) \cite{guptathesis}. The \emph{key intuition} behind the node-based 
approach is that the minimum evacuation time is lower bounded by the largest workload at the nodes and 
the odd-size cycles, and this lower bound is asymptotically tight \cite{kahn96}.
%: $\max \{\text{maximum node degree}, \max_{Z \in test} \left\lceil 2 \#~\text{multi-edges of}~Z / (\#~\text{nodes of}~Z-1) \right\rceil \}$,
\high{Hence, giving a higher priority to scheduling nodes with heavy workload leads to better evacuation time performance,
while the link-based approach that fails to respect this crucial fact results in unsatisfactory evacuation time performance.}

While the node-based approach seems quite promising, the scheduling performance of the node-based algorithms 
is not well understood, and the existing studies are mostly limited to bipartite graphs \cite{mekkittikul98,tabatabaee09,guptathesis}. 
Very recent work of \cite{ji15ciss} considers general network graphs and shows that the Maximum Vertex-weighted 
Matching (MVM) algorithm can guarantee an approximation ratio no worse (or no greater) than $3/2$ for the evacuation 
time. However, throughput performance of MVM remains unknown. 

There is several other related work. In \cite{georgiadis15}, the authors study the connection between 
throughput and (expected) minimum evacuation time, but no algorithms with provable performance 
guarantees are provided. The work of \cite{tassiulas95,tassiulas99,soldati09} considers the minimum 
evacuation time problem for multi-hop traffic in some special scenarios (e.g., special network topologies 
or wireline networks without interference).

In this paper, \emph{the goal is to develop efficient online link scheduling algorithms that can provide 
provably guaranteed performance for both throughput and evacuation time.} We summarize our 
contributions as follows.

First, we propose a Node-based Service-Balanced (NSB) scheduling algorithm that makes scheduling 
decisions based on the node workload and whether the node was scheduled in the previous time-slot(s). 
NSB has a complexity of $O(m \sqrt{n} \log{n})$.
We rigorously prove that NSB guarantees to achieve an approximation ratio no worse (or no greater) 
than $3/2$ for the evacuation time and an efficiency ratio no worse (or no smaller) than $2/3$ for the 
throughput. It is remarkable that NSB is both throughput-optimal and evacuation-time-optimal if the underlying 
network graph is bipartite. The \emph{key novelty} of NSB is that it takes a node-based approach and 
gives balanced scheduling opportunities to the bottleneck nodes with heavy workload. A novel application 
of \emph{graph-factor theory} is adopted to analyze how NSB schedules the heavy nodes (Lemma~\ref{lem:existence}).

Second, from the performance analysis for NSB, we learn that in order to achieve the same performance guarantees, 
what really matters is the priority or the ranking of the nodes, rather than the exact weight of the nodes. 
Using this insight, we develop the Lower-Complexity NSB (LC-NSB) algorithm. We show that LC-NSB 
can provide the same performance guarantees as NSB, while enjoying a lower complexity of $O(m \sqrt{n})$.

In Table~\ref{tab:com}, we summarize the guaranteed performance of NSB and LC-NSB as well as 
several most relevant online algorithms in the literature. 
\emph{As can be seen, none of the existing algorithms strike a more balanced performance
guarantees than NSB and LC-NSB in both dimensions of throughput and evacuation time.}
%\emph{To the best of our knowledge, none of the existing algorithms can provide better performance 
%guarantees than NSB and LC-NSB in both dimensions of throughput and evacuation time.}
%
Finally, we conduct numerical experiments to validate our theoretical results and compare the empirical
performance of various algorithms.

%\begin{table}
%\centering
%\begin{tabular}{|c|c|c|c|c|c|}\hline
%  \multirow{2}{*}{Algorithm} & \multirow{2}{*}{Complexity} & \multicolumn{2}{|c|}{General} & \multicolumn{2}{|c|}{Bipartite} \\  \cline{3-6}
%  & & Throughput & Evacuation Time & Throughput & Evacuation Time \\ \hline
%  MWM \cite{tassiulas92,shah06} & $mn$ & 1 & 2 & 1 & 2 \\ \hline
%  GMM \cite{lin06,joo09} & $m \log{m}$ & $\ge \frac{1}{2}$ & 2 & $\ge \frac{1}{2}$ & 2 \\ \hline
%  MM \cite{lin06,wu07} &  & $\ge \frac{1}{2}$ & 2 & $\ge \frac{1}{2}$ & 2 \\ \hline
%  MVM \cite{guptathesis,ji15ciss} & $m \sqrt{n} \log{n}$ & $\le \frac{3}{4}$ & $\le \frac{3}{2}$ & 1 & 1 \\ \hline
%  NSB (this paper) & $m \sqrt{n} \log{n}$ & $\ge \frac{2}{3}$ & $\le \frac{3}{2}$ & 1 & 1 \\ \hline
%\end{tabular}
%\caption{Comparison}
%\label{tab:com}
%\end{table}

\begin{table}
\scriptsize
\centering
\begin{tabular}{|c|c|c|c|c|c|}\hline
  \multirow{2}{*}{Algorithm} & \multirow{2}{*}{Complexity} & \multicolumn{2}{|c|}{$\gamma$ (Throughput)} & \multicolumn{2}{|c|}{$\eta$ (Evacuation time)}  \\ \cline{3-6}
  & & General & Bipartite & General & Bipartite \\ \hline
  MWM & $O(mn)$ & 1 & 1 & 2 & 2 \\ \hline
  GMM & $O(m \log{m})$ & $\ge 1/2$ & $\ge 1/2$ & 2 & 2\\ \hline
  MM & $O(m)$ & $\ge 1/2$ & $\ge 1/2$ & 2 & 2 \\ \hline
%EDITED BY GRG 7/16 
% MVM & $O(m \sqrt{n} \log{n})$ & $\le 3/4$ & 1 & $\le 3/2$ & 1 \\ \hline
	 MVM & $O(m \sqrt{n} \log{n})$ & ? & 1 & $\le 3/2$ & 1 \\ \hline
  \textbf{NSB} & $O(m \sqrt{n} \log{n})$ & $\ge 2/3$ & 1 & $\le 3/2$ & 1 \\ \hline
  \textbf{LC-NSB} & $O(m \sqrt{n})$ & $\ge 2/3$ & 1 & $\le 3/2$ & 1 \\ \hline
\end{tabular}
\caption{Performance comparison of NSB and LC-NSB with several most relevant online algorithms in the literature.
	     The efficiency ratio $\gamma$ and the approximation ratio $\eta$ are used for comparing the performance 
	     of throughput and evacuation time, respectively. (See formal definitions of $\gamma$ and $\eta$ in 
	     Section~\ref{sec:model}.) For both $\gamma$ and $\eta$, a value closer to 1 is better.
	     The complexity provided here is for making a scheduling decision at each time.
}
\label{tab:com}
%\vspace{-10pt}
\end{table}

%\begin{table}
%\centering
%\begin{tabular}{|c|c|c|c|}\hline
%  Algorithm & Complexity & $\gamma$ (Throughput) & $\eta$ (Evacuation time)  \\ \hline
%  MWM & $O(mn)$ & 1 & 2 \\ \hline
%  GMM & $O(m \log{m})$ & $\ge 1/2$ & 2 \\ \hline
%  MM & O(m) & $\ge 1/2$ & 2 \\ \hline
%  MVM & $O(m \sqrt{n} \log{n})$ & $\le 3/4$ & $\le 3/2$ \\ \hline
%  \textbf{NSB} & $O(m \sqrt{n} \log{n})$ & $\ge 2/3$ & $\le 3/2$ \\ \hline
%  \textbf{LC-NSB} & $O(m \sqrt{n})$ & $\ge 2/3$ & $\le 3/2$ \\ \hline
%\end{tabular}
%\caption{Performance comparison of the proposed NSB and LC-NSB algorithm with several most 
%relevant algorithms in the literature.
%}
%\label{tab:com}
%%\vspace{-20pt}
%\end{table}

The remainder of this paper is organized as follows. First, we describe the system model and 
the performance metrics in Section~\ref{sec:model}. Then, we propose the NSB algorithm and 
analyze its performance in Section~\ref{sec:nsb}. A lower-complexity NSB algorithm with the 
same performance guarantees is developed in Section~\ref{sec:lc-nsb}. Finally, we conduct 
numerical experiments in Section~\ref{sec:sim} and make concluding remarks in Section~\ref{sec:con}.
%\high{In Section~\ref{sec:notations}, we present a table that consists of the key notations for quick reference. 
\high{Some detailed proofs are provided in Section~\ref{sec:proofs}.}

%%%%%%%%%%%%%%%%%%%%%%%%%%%%%%%%%%%%%%%%%%%
\section{System Model} \label{sec:model}
%%%%%%%%%%%%%%%%%%%%%%%%%%%%%%%%%%%%%%%%%%%

We consider a multi-hop wireless network described as an undirected graph $\Graph=(\Vertex,\Edge)$, 
where $\Vertex$ denotes the set of nodes and $\Edge$ denotes the set of links. The node count 
and the link count are denoted by $n = \lvert \Vertex \rvert$ and $m = \lvert \Edge \rvert$, respectively. 
Nodes are wireless transmitters/receivers and links are wireless channels between two nodes. 
The set of links touching node 
$i \in \Vertex$ is defined as $L(i) \triangleq \{l \in \Edge ~|~ i~\text{is an end node of link}~l\}$. 
%We use $\lvert Z \rvert$ to denote the cardinality of set $Z$. Hence, $n=\lvert \Vertex \rvert$ 
%and $m = \lvert \Edge \rvert$ denote the node count and the link count, respectively.
We assume a time-slotted system with a single frequency channel. We also assume unit link 
capacities, i.e., a link can transmit at most one packet in each time-slot when active. However, 
our analysis can be extended to the general scenario with heterogeneous link capacities by 
considering the workload defined as $\lceil \text{number of packets} / \text{link capacity} \rceil$. 
We consider the \emph{one-hop} interference model, under which a feasible schedule corresponds 
to a \emph{matching} (i.e., a subset of links, $L$, that satisfies that no two links in $L$ share a common node). 
A matching is called \emph{maximal}, if no more links can be added to the matching without 
violating the interference constraint. We let $\Matching$ denote the set of all matchings over $\Graph$. 
%Let $E(Z)$ and $V(Z)$ denote the set of 
%edges and nodes in subgraph $Z \subseteq \Graph$, respectively. Let $\Odd(G)$ denote 
%the set of subgraphs of $G$ with an odd number of 3 or more nodes. We then define the 
%set of bottlenecks of graph $\Graph$ as $B(G) \triangleq \Vertex \cup \Odd(G)$, i.e., 
%the union of the set of nodes and the set of odd-size subgraphs.

\HIGH{As in several previous work (e.g., \cite{lin06,joo09c,leconte11,ni12,li17}), 
we focus on link scheduling at the MAC layer, and thus we only consider single-hop traffic.}
We let $A_l(k)$ denote the cumulative amount of workload (or packet) arrivals at 
link $l \in \Edge$ up to time-slot $k$ (including time-slot $k$). By slightly abusing the notations,
we let $A_i(k) \triangleq \sum_{l \in L(i)} A_l(k)$ denote the cumulative amount of workload arrivals 
at node $i \in \Vertex$ up to time-slot $k$ (including time-slot $k$). (Indices $l$ and $i$ correspond 
to links and nodes, respectively; similar for other notations.) We assume that the arrival process 
$\{A_l(k), k \ge 0\}$ satisfies the strong law of large numbers (SLLN): with probability one,
\begin{equation}
\label{eq:slln_l}
\lim_{k \rightarrow \infty} A_l(k)/k = \lambda_l
\end{equation}
%$\lim_{k \rightarrow \infty} \frac {A_l(k)} {k} = \lambda_l$
for all links $l \in \Edge$, where $\lambda_l$ is the mean arrival rate of link $l$.
Let $\lambda \triangleq [\lambda_l: l \in \Edge]$ denote the arrival rate vector. 
We assume that the arrival processes are independent across links. Note that 
the process $\{A_i(k), k \ge 0\}$ also satisfies SLLN: with probability one,
%\begin{equation}
%\label{eq:slln_n}
%\lim_{k \rightarrow \infty} \frac {A_i(k)} {k} = \lambda_i
%\end{equation}
$\lim_{k \rightarrow \infty} A_i(k)/k = \lambda_i$
for all nodes $i \in \Vertex$, where $\lambda_i \triangleq \sum_{l \in L(i)} \lambda_l$ is 
the mean arrival rate for node $i$. 
%We also assume that there exists a finite constant $T$ such that
%\begin{equation}
%\label{eq:assT}
%\sum_{\tau=n}^{n+T} \sum_{l \in \Edge} (A_l(\tau)-A_l(\tau-1)) > 0,
%\end{equation}
%for any time slot $n>0$, i.e., there is at least one packet arrival in the network
%during any period of $T$ time slots. This is a mild assumption for technical reason.

Let $Q_l(k)$ be the queue length of link $l$ in time-slot $k$, and let $D_l(k)$ be the 
cumulative number of packet departures at link $l$ up to time-slot $k$. We assume that
there are a finite number of initial packets in the network at the beginning of time-slot 
0. Let $Q_i(k) \triangleq \sum_{l \in L(i)} Q_l(k)$ be the amount of workload at node 
$i \in \Vertex$ (i.e., the number of packets waiting to be transmitted to or from node 
$i$) in time-slot $k$, and let $D_i(k) \triangleq \sum_{l \in L(i)} D_l(k)$ be the amount 
of cumulative workload served at node $i \in \Vertex$ up to time-slot $k$. We also call 
$Q_i(k)$ and $D_i(k)$ as the queue length and the cumulative departures at node $i$ 
in time-slot $k$, respectively. 

Without loss of generality, we assume that only links with a non-zero queue length can be 
activated. Let $M_l=1$ if matching $M \in \Matching$ contains link $l$, and $M_l = 0$ 
otherwise. Let $H_M(k)$ be the number of time-slots in which $M$ is selected 
as a schedule up to time-slot $k$. We set by convention that $A_i(k)=0$ and $D_i(k)=0$
for all $i \in \Vertex$ and for all $k \le 0$. The queueing equations of the system are as follows:
%\begin{equation}
%\begin{aligned}
\begin{flalign}
& Q_i(k) = Q_i(0) + A_i(k) - D_i(k-1), \label{eq:qik}\\
& D_i(k) = \sum_{M \in \Matching} \sum_{\tau=1}^{k} \sum_{l \in L(i)} M_l \cdot (H_M(\tau) - H_M(\tau-1)), \\
& \sum_{M \in \Matching} H_M(k) = k. 
\end{flalign}
%\end{aligned}
%\end{equation}

Next, we define system stability as follows.
\begin{definition}
\label{def:stab}
The network is rate stable if with probability one,
\begin{equation}
\label{eq:stab}
\lim_{k \rightarrow \infty} D_l(k)/k = \lambda_l,
\end{equation}
for all $l \in \Edge$ and for any arrival processes satisfying Eq.~(\ref{eq:slln_l}).
\end{definition}

Note that we consider \emph{rate stability} for ease of presenting our main ideas.
Strong stability can similarly be derived if we make stronger assumptions on the 
arrival processes \cite{andrews04}.

We define the \emph{throughput region} of a scheduling algorithm as the set of arrival rate
vectors for which the network remains rate stable under this algorithm. Further, we define the 
\emph{optimal throughput region}, denoted by $\Lambda^*$, as the union of the throughput 
regions of all possible scheduling algorithms. A scheduling algorithm is said to have an 
\emph{efficiency ratio} $\gamma$ if it can support any arrival rate vector $\lambda$ strictly 
inside $\gamma \Lambda^*$. Clearly, we have $\gamma \in [0,1]$. In particular, a scheduling 
algorithm with an efficiency ratio $\gamma=1$ is \emph{throughput-optimal}, i.e., it can stabilize 
the network under any feasible load. We also define another important region $\Psi$ by considering
bottlenecks formed by the nodes:
\begin{equation}
\label{eq:psi}
%\begin{split}
\Psi \triangleq \{\lambda ~|~ \lambda_i \le 1 ~\text{for all}~ i \in \Vertex \}.
%\end{split}
\end{equation}
%An arrival rate vector $\lambda$ is \emph{strictly} inside $\Psi$, if $\lambda_i<1$ 
%for all $i \in \Vertex$. 
Clearly, we have $\Lambda^* \subseteq \Psi$ because at most one packet can be transmitted 
from or to a node in each time-slot. 
\HIGH{
Similarly, any odd-size cycle $Z$ could also be a bottleneck because at most $(|Z|-1)/2$ 
out of the $|Z|$ links of the odd-size cycle can be scheduled at the same time. 
For example, the total arrival rate summed over all edges of a triangle must not exceed 
$1/3$ because at most one out of the
three links of the triangle can be scheduled in each time-slot.
For the theoretical analyses, we consider bottlenecks formed only by the nodes, which is
sufficient for deriving our analytical results. We provide more discussions about 
odd-size cycles in Sections~\ref{subsec:bipartite} and \ref{subsec:throughput}.
%Note that the above outer bound $\Psi$ is characterized by the nodes. Another outer bound 
%can be characterized by the odd-size cycles and is defined as follows:
%\begin{equation}
%\label{eq:psiprime}
%%\begin{split}
%\Psi^{\prime} \triangleq \{\lambda ~|~ \sum_{l \in Z} \lambda_l \le \frac{|Z|-1}{2|Z|} ~\text{for all odd-size cycles $Z$ of $\Graph$}~\}.
%%\end{split}
%\end{equation}
%This is because in each time-slot, at most $(|Z|-1)/2$ links can be scheduled for any odd-size cycle $Z$.
}

As we mentioned earlier, in the settings without future packet arrivals, the performance metric 
of interest is the evacuation time, defined as the time interval needed for draining all the initial 
packets. Let $T^P$ denote the evacuation time of scheduling algorithm $P$, and let 
$\ChromaticIndex$ denote the minimum evacuation time over all possible algorithms. 
%\HIGH{\footnote{\HIGH{The minimum evacuation 
%time is equal to the \emph{chromatic index} of the underlying multigraph due to the following
%equivalence: each multi-edge corresponds to a packet waiting to be transmitted over the link between 
%the nodes of the multi-edge; each color corresponds to a feasible schedule (or a matching); finding the
%chromatic index (i.e., finding the minimum number of colors such that, each multi-edge is assigned a color 
%and two multi-edges sharing a common node cannot have the same color) is equivalent to minimizing the 
%time for evacuating all the packets by finding a matching at a time.}}} 
A scheduling algorithm is said to have an \emph{approximation ratio} $\eta$ if it has an 
evacuation time no greater than $\eta \ChromaticIndex$ in any network graph with any finite 
number of initial packets. Clearly, we have $\eta \ge 1$. In particular, a scheduling algorithm 
with an approximation ratio $\eta=1$ is \emph{evacuation-time-optimal}.

In this paper, the goal is to develop efficient online link scheduling algorithms that can 
simultaneously provide provably good performance in both dimensions of throughput and 
evacuation time, measured through the efficiency ratio (i.e., the larger the value of $\gamma$, 
the better) and the approximation ratio (i.e., the smaller the value of $\eta$, the better), respectively. 
\HIGH{Note that the throughput performance has been extensively studied under quite general models
in the literature (see \cite{lin06c,georgiadis06} and references therein), where multi-hop traffic, 
general interference models, and time-varying channels (which can model mobility and fading) 
have been considered. However, the evacuation time performance is much less understood. 
As we have mentioned earlier, even in the setting we consider (assuming single-hop traffic, 
the one-hop interference model, and fixed link capacities), the minimum evacuation time problem 
is already very challenging (i.e., NP-hard). 
Considering multi-hop traffic adds another layer of difficulty. 
This is mainly because of the dependence between the upstream and downstream queues, 
since the arrival process to an intermediate queue is no longer exogenous, but instead, it is 
the departure process of its previous-hop queue. In addition to link scheduling, we also need
to decide which flow's packets will be transmitted when a link is activated. This further complicates
the minimum evacuation time problem.
%Even without considering wireless interference (e.g., as in a wireline network), it is generally
%difficult to design flow scheduling schemes at each link to achieve the minimum evacuation time \cite{tassiulas99}.
}

\HIGH{
For quick reference, we summarize the key notations of this paper in Table~\ref{tab:notations}.
}

%%%%%%%%%%%%%%%%%%%%%%%%%%%%%%%%%%%%%%%%%%%%%%%%%%%%%%%%%%%%%
\begin{table}[!t]
%\begin{color}{blue}
\begin{tabular}{ c p{7cm}}
\toprule 
%\scriptsize
Symbol  & Meaning  \\
\hline 
 $G$  &  Network topology as an undirected graph       \\
 $V$  &  Set of nodes      \\
 $E$  &  Set of links     \\
 $n$  &  Number of nodes     \\
 $m$ &  Number of links     \\
 $L(i)$   &  Set of links touching node $i \in V$      \\
 $M$   & A matching over $G$ \\
 $\Matching$   & Set of all the matchings over $G$      \\
 $\lambda_l$  &   Mean arrival rate of link $l$   \\
 %$\lambda$   &    Mrrival rate vector for links  \\
 $\lambda_i$   &  Mean arrival rate of node $i$   \\
 $\Lambda^*$   &   Optimal throughput region    \\
 $\Psi$  &   An outer bound of $\Lambda^*$; see Eq.~\eqref{eq:psi}   \\
 $\gamma$   &  Efficiency ratio (for throughput performance)      \\
 $T^P$   &   Evacuation time of scheduling algorithm $P$    \\
 $ \ChromaticIndex$   &   Minimum evacuation time  \\
 $\eta$   &    Approximation ratio (for evacuation time performance)       \\  
 $A_l(k)$    &    Cumulative arrivals at link $l$ up to time-slot $k$    \\
 $A_i(k)$    &    Cumulative arrivals at node $i$ up to time-slot $k$  \\
 $D_l(k)$   &  Cumulative departures at link $l$ up to time-slot $k$     \\
 $D_i(k)$   &  Cumulative departures at node $i$ up to time-slot $k$   \\
 $Q_l(k)$   &    Queue length at link $l$ in time-slot $k$   \\ 
 $Q_i(k)$   &  Workload at node $i$ in time-slot $k$    \\
 $H_M(k)$   &   Number of time-slots in which matching $M$ is selected as a schedule up to time-slot $k$   \\
 $\Delta(k)$   &  Largest node workload in time-slot $k$\\
 $\mC(k)$      &   Set of critical nodes in time-slot $k$     \\
 $\mH(k)$  	   &   Set of heavy nodes in time-slot $k$      \\
 $R_i(k)$  &  Whether node $i$ is matched in time-slot $k$ or not      \\
 $U_i(k)$  &  See Eq.~(\ref{eq:U})     \\
 $w_i(k)$    &    Weight of node $i$ in time-slot $k$   \\
%$M(k)$ &   Matching found in time-slot $k$   \\
 $w(M)$ & Weight of matching $M$   \\
 %$M_l$   &  Indicator that equals to 1 if matching $M$ contains link $l$, otherwise equals to 0   \\
%$\Graph(k)$   &    Multigraph at the beginning of time-slot $k$  \\
%$\mC$  &    The set of critical nodes in the fluid limits at time $t$  \\
%$q_{\max}(t)$  &    The largest queue length in the fluid limits at time $t$  \\
%$\hat{q}_{\max}(t)$  &    The largest queue length in the fluid limits among the nodes in $V\backslash \mC$ at time $t$  \\
%$T$  &    The set of consecutive time-slots in the original system corresponds to the scaled time interval $(t,t+\delta)$ in the fluid limits \\
%$\mC^{\prime}$  &    The set of nodes in $\mC$ that were not scheduled in the first time-slot $p$  \\
%$\mC^{\prime \prime}$  &    The set of nodes in $\mC$ that were scheduled exactly once in the first two time-slots $p$ and $p+1$  \\
%$\Graph_{\mC^{\prime \prime}}$  &    The subgraph induced by all the nodes in  $\mC^{\prime \prime}$ at the beginning of the third time-slot $p+2$  \\
\hline
\end{tabular}
\caption{\HIGH{Summary of notations.}}\label{tab:sys}
\label{tab:notations}
%\end{color}
\end{table}

%%%%%%%%%%%%%%%%%%%%%%%%%%%%%%%%%%%%%%%%%%%%%%%%%%%%%%%%%%%%%
\section{Node-based Service-Balanced Algorithm} \label{sec:nsb}
%%%%%%%%%%%%%%%%%%%%%%%%%%%%%%%%%%%%%%%%%%%%%%%%%%%%%%%%%%%%%
In this section, we propose a novel Node-based Service-Balanced (NSB) scheduling algorithm and 
analyze its performance. Specifically, we prove that NSB guarantees an approximation ratio no worse 
(or no greater) than $3/2$ for the evacuation time (Subsection~\ref{subsec:nsb_et}) and an efficiency 
ratio no worse (or no smaller) than $2/3$ for the throughput (Subsection~\ref{subsec:nsb_throughput}). 
Further, we show that NSB is both throughput-optimal and evacuation-time-optimal in bipartite graphs 
(Subsection~\ref{subsec:bipartite}). To the best of our knowledge, none of the existing algorithms strike 
a more balanced performance guarantees than NSB in both dimensions of throughput and evacuation time.

%%%%%%%%%%%%%%%%%%%%%%%%%%%%%%%%%%%%%%%%%%%%%
\subsection{Algorithm} \label{subsec:nsb_alg}
%%%%%%%%%%%%%%%%%%%%%%%%%%%%%%%%%%%%%%%%%%%%%

\begin{algorithm}[!t]
%\begin{color}{blue}
\caption{Node-based Service-Balanced (NSB)}
\begin{algorithmic}[1]
\label{alg:nsb}
%\STATE Initial settings: $A_i(0) = 0$, $D_i(0)=0$, and $D_i(-1)=0$ for all $i$.
%\FOR{$k =1 \to \infty $}
\STATE In each time-slot $k$:
%\STATE Frame number $k^{\prime} = \lfloor k/3\rfloor$.
\FOR{each node $i \in \Vertex$}
\STATE Assign node weight $w_i(k)$ based on Eq.~\eqref{eq:weight}
\ENDFOR
%\STATE Calculate the amount of workload $Q_i(k)$ at each node $i\in V$ according to Eq.~(\ref{eq:qik}), $Q_i(k) = Q_i(0) + A_i(k) - D_i(k-1)$.
%\STATE Assign the nodes into the sets $\mathcal{C}(k)$ and $\mathcal{H}(k)$ based on workload $Q_i(k)$.
%\STATE Calculate the indicator $U_i(k)$ and weight $w_i(k)$ of every node $i\in V$.
\STATE Exclude links $l \in \Edge$ with $Q_l(k)=0$
\STATE Find an MVM $M^*$ over $\Graph$ with node weight $w_i(k)$'s, i.e., 
\[
M^* \in \argmax_{M \in \Matching} w(M) \triangleq \sum_{i: M \cap L(i) \neq \emptyset} w_i(k)
\]
\FOR{each link $l \in \Edge$}
\IF{$M^*_l=1$}
\STATE Transmit one packet over link $l$
\ELSE
\STATE No transmission over link $l$
\ENDIF
\ENDFOR
%\STATE Transmit a packet over link $l$ if $M^*_l=1$; transmit no packet if $M^*_l=0$
%\STATE Find the MVM $M^*$ of the network based on the assigned node weight $w_i(k)$ and schedule all the links in $M^*$, i.e., the links in $M^*$ can transmit one packet in time slot $k$.
%\FOR{each node $i\in V$}
%\IF{$L(i)\cap M^* \neq \emptyset$}
%\STATE $D_i(k)=1$
%\ELSE  
%\STATE $D_i(k)=0$
%\ENDIF
%\ENDFOR 
%\ENDFOR
\end{algorithmic}
%\end{color}
\end{algorithm} 

We start by introducing Maximum Vertex-weighted Matching (MVM)~\cite{spencer84,guptathesis}, 
which will be a key component of the NSB algorithm. Let $w_i$ denote the weight of node $i$. 
We will later describe how to assign the node weights. Also, let $w(M) \triangleq \sum_{i: M \cap L(i) \neq \emptyset} w_i$ 
denote the weight of matching $M$, i.e., the sum of the weight of the nodes matched by $M$. A matching $M^*$ is called 
an MVM if it has the maximum weight among all the matchings, i.e., $M^* \in \argmax_{M \in \Matching} w(M)$. 
%A key property of an MVM is stated in Lemma~\ref{lem:path}, which is a result of \cite{spencer84}
%and will be frequently used in the proofs of our main results.
In \cite{spencer84}, a very useful property of MVM is proven.
%if there exists a matching that matches the $s$ heaviest nodes 
%(i.e., those $s$ nodes with the largest weights) for any positive integer $s \le n$, then an MVM will match all of them too. 
We restate it in Lemma~\ref{lem:path}, which will be frequently used in 
the proofs of our main results.

\begin{lemma}[Lemma~6 of \cite{spencer84}]
\label{lem:path}
%Consider a graph $\Graph$. 
For any positive integer $s \le n$, suppose that there exists a matching that matches the $s$ 
heaviest nodes. Then, an MVM matches all of these $s$ nodes too.
\end{lemma}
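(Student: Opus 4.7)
The plan is to carry out a standard alternating-path exchange argument. Let $H \subseteq V$ denote the set of the $s$ heaviest nodes, and let $M'$ be a matching that saturates every node of $H$. It then suffices to show that an MVM $M^*$, chosen among all MVMs to maximize the number of $H$-nodes it saturates, necessarily saturates every node of $H$.

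I would then assume toward contradiction that some $v \in H$ is not saturated by $M^*$. Since $v$ is saturated by $M'$, the node $v$ has positive degree in the symmetric difference $M^* \triangle M'$; as both are matchings, every node has degree at most $2$ in $M^* \triangle M'$, so this symmetric difference decomposes into node-disjoint alternating paths and even cycles. The component containing $v$ is therefore a maximal alternating path $P = v_0 v_1 \cdots v_k$ with $v_0 = v$, whose first edge $e_1$ lies in $M'$ (since $v$ is $M^*$-unsaturated) and whose edges alternate between $M'$ and $M^*$.

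Next, I would consider the swapped matching $M'' \triangleq M^* \triangle E(P)$. The saturation status of nodes can change only at the endpoints $v_0$ and $v_k$, and the analysis splits by the parity of $k$. If $k$ is odd then $e_k \in M'$, and the maximality of $P$ inside $M^* \triangle M'$ forces $v_k$ to be $M^*$-unsaturated as well; hence $M''$ saturates both $v_0$ and $v_k$ in addition to everything $M^*$ saturated, giving $w(M'') = w(M^*) + w_{v_0} + w_{v_k} > w(M^*)$ and contradicting the MVM property. If $k$ is even then $e_k \in M^*$, and the analogous maximality argument shows $v_k$ is $M^*$-saturated but $M'$-unsaturated; since $M'$ covers $H$, this forces $v_k \notin H$, so $w_{v_k} \le w_{v_0}$, and therefore $w(M'') = w(M^*) + w_{v_0} - w_{v_k} \ge w(M^*)$.

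I expect the main obstacle to be the even-parity subcase when ties occur, namely $w_{v_0} = w_{v_k}$, since the exchange is then weight-preserving and does not itself contradict the MVM property. The tiebreaking device introduced at the outset handles exactly this case: $M''$ is still an MVM, and by construction it newly saturates the $H$-node $v_0$ while losing only the non-$H$-node $v_k$ (the interior nodes $v_1,\dots,v_{k-1}$ remain saturated by both $M^*$ and $M''$). Thus $M''$ saturates strictly more nodes of $H$ than $M^*$, contradicting our initial choice of $M^*$. Either way a contradiction is reached, so every node of $H$ must be matched by the MVM.
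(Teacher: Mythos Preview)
The paper does not supply its own proof of this lemma; it is quoted from \cite{spencer84} and used as a black box throughout. Your alternating-path exchange argument is correct and is the standard way to establish the result.

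One minor remark worth making explicit: the tiebreaking device you introduce (selecting $M^*$ among all MVMs to maximize $H$-coverage) shows only that \emph{some} MVM saturates $H$, which matches the literal wording ``an MVM.'' In every application of the lemma within this paper, however, the $s$ heaviest nodes are strictly heavier than all remaining nodes (e.g., weight $2\Delta(p)$ versus weight strictly less than $2\Delta(p)$ in the proof of Proposition~\ref{pro:nsb}), so your even-parity exchange already gives $w(M'') > w(M^*)$ outright and hence \emph{every} MVM must saturate $H$. That stronger conclusion is what the paper tacitly relies on when it asserts that the particular MVM computed by NSB matches the required nodes.
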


Now, \high{we consider frames each consisting of three consecutive time-slots and} describe the 
operations of the NSB algorithm. We first give some additional definitions and notations. 
Recall that $Q_i(k)$ denotes the workload of node $i$ in time-slot $k$. 
Let $\Delta(k) \triangleq \max_{i \in \Vertex} Q_i(k)$ denote the largest node queue length in time-slot $k$.
A node $i$ is called \emph{critical} in time-slot $k$ if it has the largest queue length, 
i.e., $Q_i(k) = \Delta(k)$; a node $i$ is called \emph{heavy} 
in time-slot $k$ if its queue length is no smaller than $(n-1)/n \cdot \Delta(k)$. (Our results also hold 
if we replace $(n-1)/n$ with any $\alpha \in [(n-1)/n, 1)$.) It will later become clearer why 
such a threshold is chosen. We use $\mC(k)$ and $\mH(k)$ to denote the set of critical nodes and 
the set of heavy nodes in time-slot $k$, respectively. Let $R_i(k) \triangleq D_i(k)-D_i(k-1)$ denote 
whether node $i$ is matched in time-slot $k$ or not, and define 
\begin{equation}
\label{eq:U}
U_i(k) \triangleq
\begin{cases}
R_i(k-1)R_i(k-2) & \text{if $k=3k^{\prime}+2$}; \\
R_i(k-1) &\text{otherwise},
\end{cases} 
\end{equation}
\high{where $k^{\prime}$ is the frame index. 
Note that $U_i(k)$ is either 1 or 0 and will be used in Eq.~\eqref{eq:weight} to determine whether 
a node needs to get a higher scheduling priority in time-slot $k$ or not. Specifically, the weight of a heavy node $i$ 
is doubled if $U_i(k)=0$ (i.e., this heavy node $i$ did not receive enough service in the previous time-slot(s)) 
such that node $i$ has a higher priority of being scheduled in time-slot $k$.}

\begin{figure*}[t!]
        \centering
        \begin{subfigure}[b]{0.23\linewidth}
                \includegraphics[width=\textwidth]{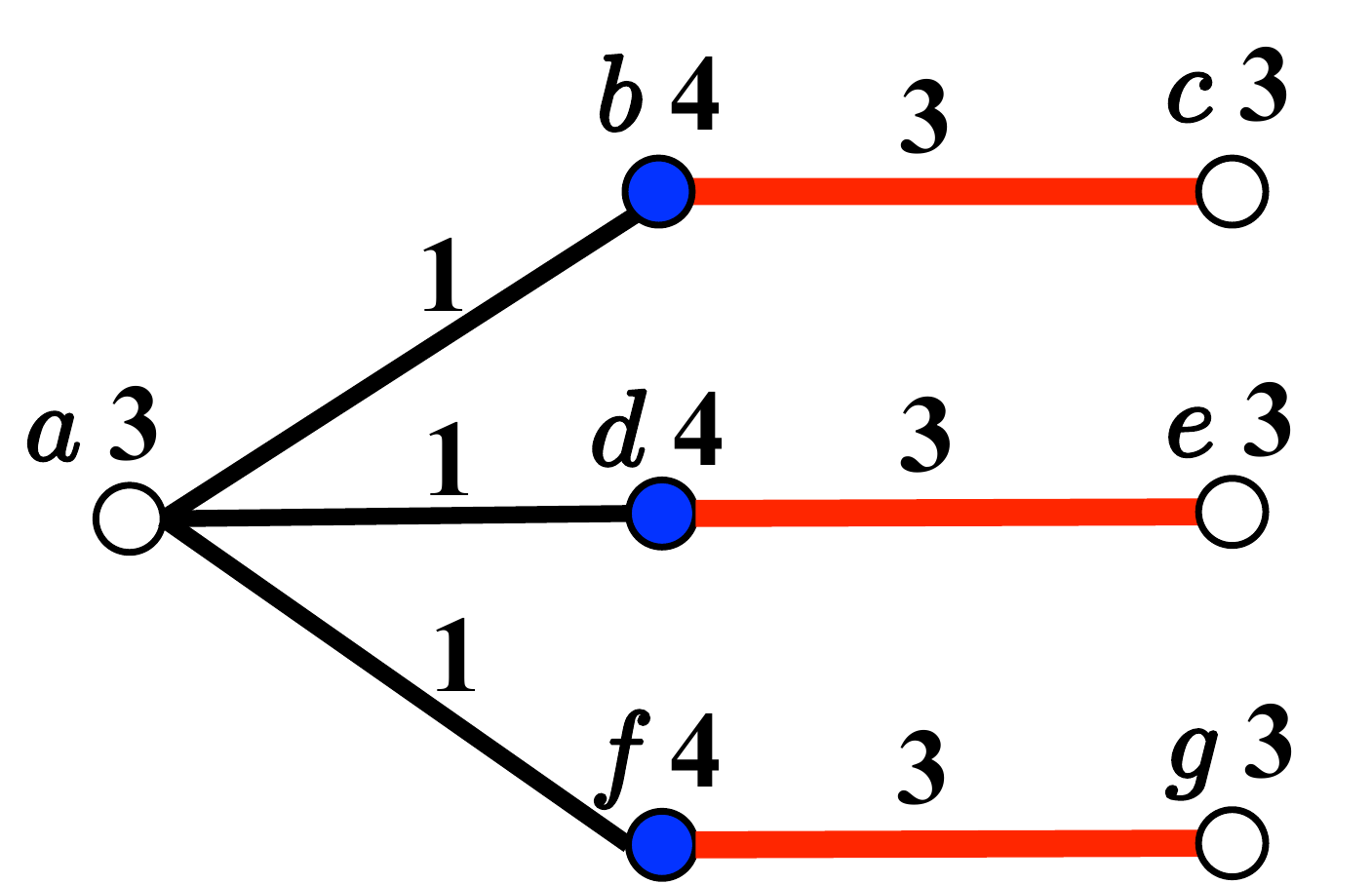}
                \caption{Time-slot 0}
                \label{fig:nsb0}
        \end{subfigure}%
        %\qquad
        \begin{subfigure}[b]{0.23\linewidth}
                \includegraphics[width=\textwidth]{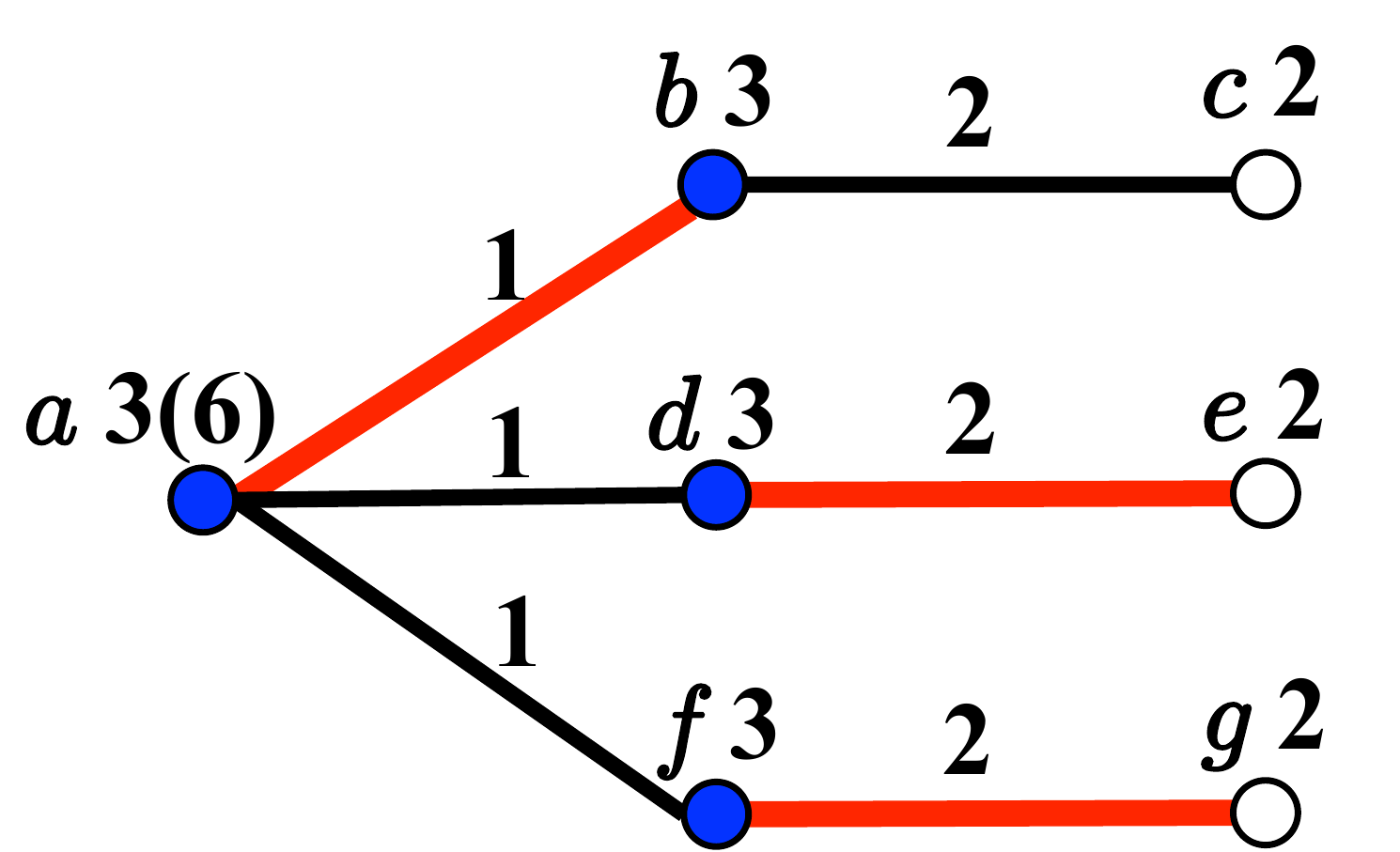}
                \caption{Time-slot 1}
                \label{fig:nsb1}
        \end{subfigure}
	%\qquad
	\begin{subfigure}[b]{0.23\linewidth}
                \includegraphics[width=\textwidth]{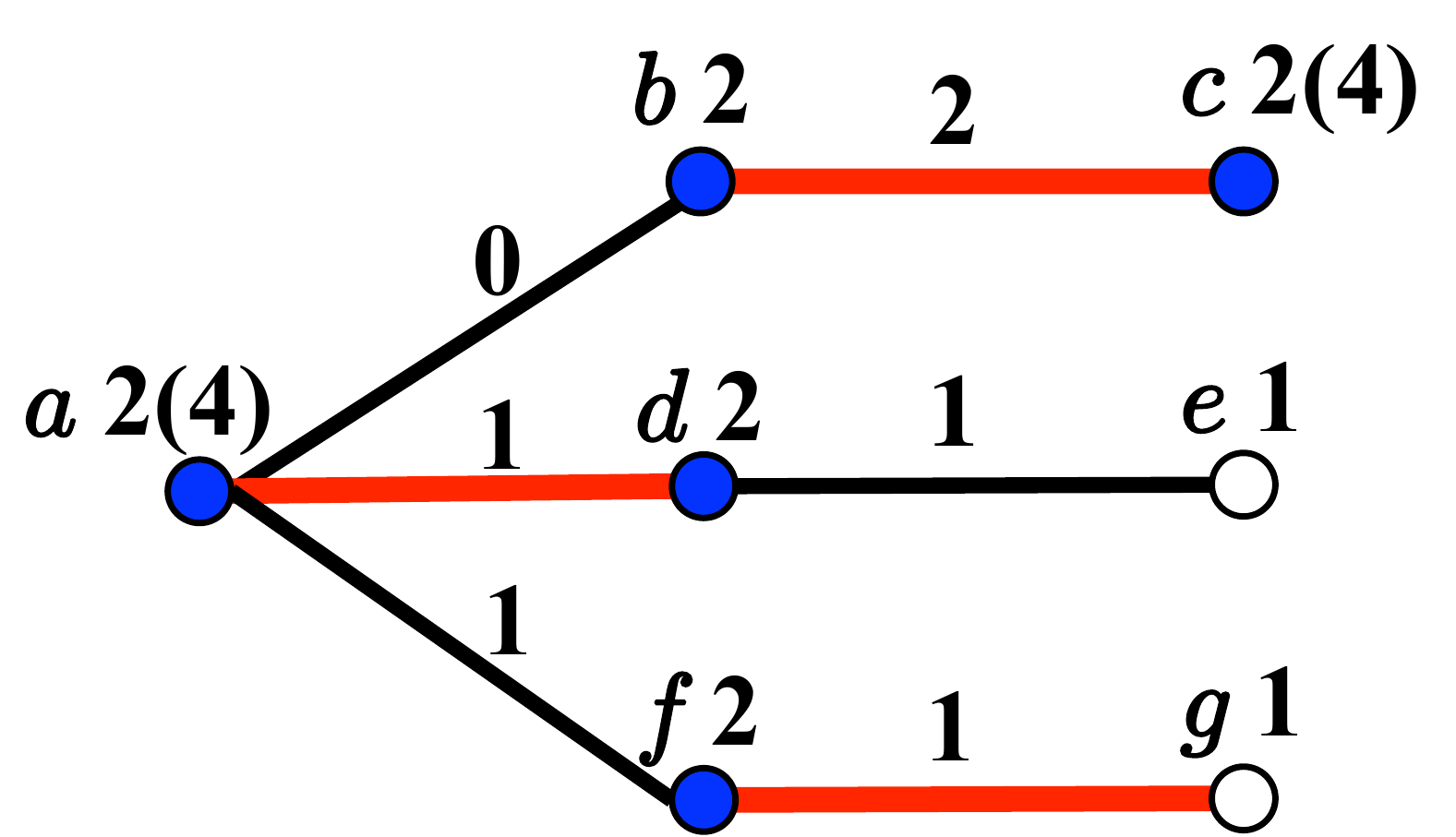}
                \caption{Time-slot 2}
                \label{fig:nsb2}
        \end{subfigure} 
        %\qquad
        \begin{subfigure}[b]{0.23\linewidth}
                \includegraphics[width=\textwidth]{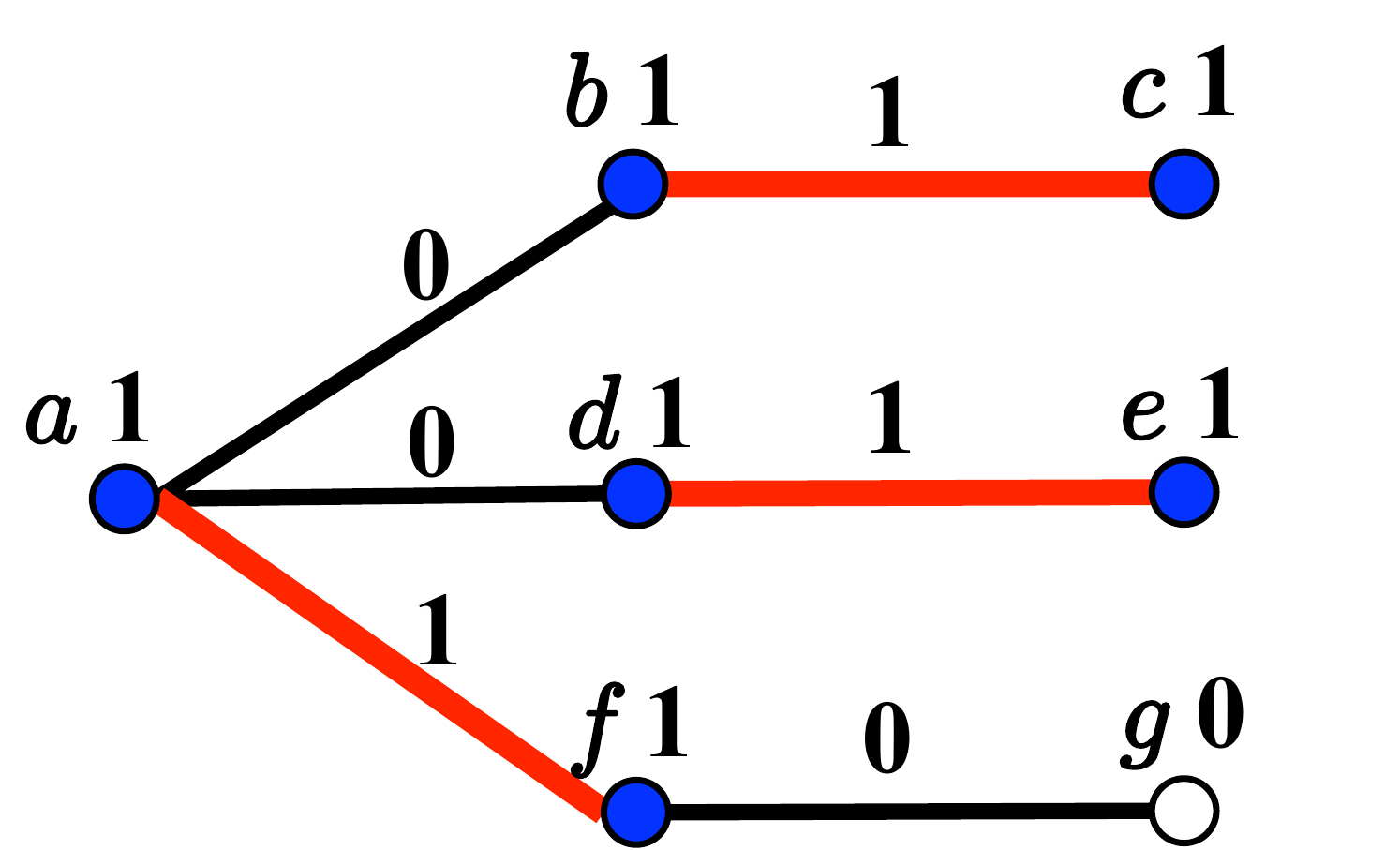}
                \caption{Time-slot 3}
                \label{fig:nsb3}
        \end{subfigure}%
        \caption{\HIGH{An illustration of the operations of NSB in four time-slots. The network setting is presented in Fig.~\ref{fig:nsb0}, 
        			where there are seven nodes $\{a,b,\dots,g\}$. In each subfigure, the number above each link denotes the number of 
			packets waiting to be transmitted over that link at the beginning of each time-slot. 
			For simplicity, we assume no future packet arrivals in this example.
			The node degree (i.e., the sum of queue lengths over all the links touching the node) and the node weight (in the parenthese) 
			are both labeled after the node name; however, the node weight is not labeled if it is equal to the node degree. 
			The heavy nodes are highlighted in blue.
			Take Fig.~\ref{fig:nsb1} for example: the heavy nodes are $a$, $b$, $d$, and $d$;
			node $a$ has a degree of 3 and has a weight of 6; 
			node $b$ has a degree and a weight both equal to 3. 
			Note that although both nodes $a$ and $b$ are heavy nodes in time-slot 1, 
			the weight of $a$ equals twice the node degree because it was not scheduled in time-slot 1 (i.e., $U_a(2)=0$).
			The thick red lines denote the links activated in each time-slot.
			}
			}
        \label{fig:nsb}
\end{figure*}

\begin{figure*}[t!]
        \centering
        \begin{subfigure}[b]{0.21\linewidth}
                \includegraphics[width=\textwidth]{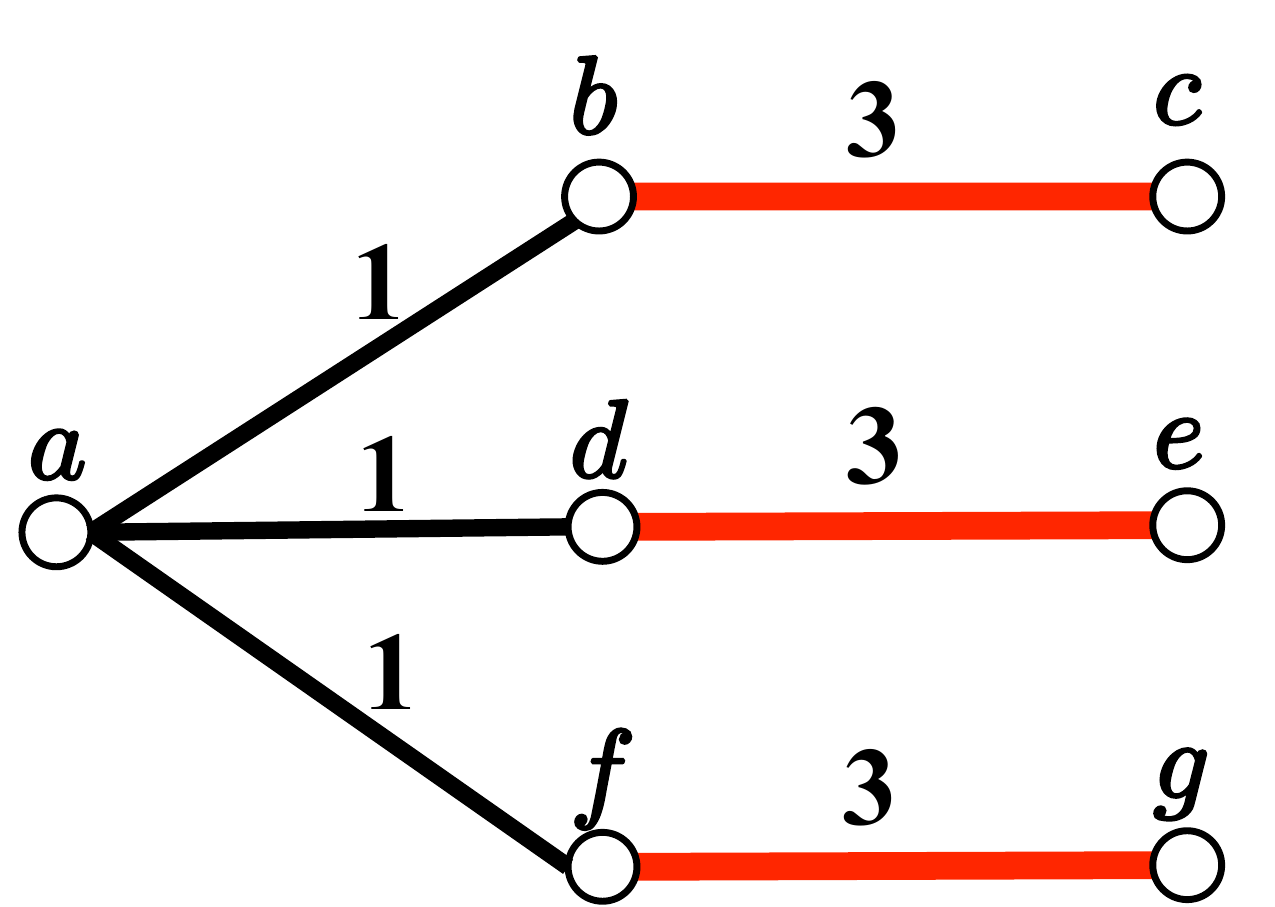}
                \caption{Time-slot 0}
                \label{fig:mwm0}
        \end{subfigure}%
        \quad
        \begin{subfigure}[b]{0.21\linewidth}
                \includegraphics[width=\textwidth]{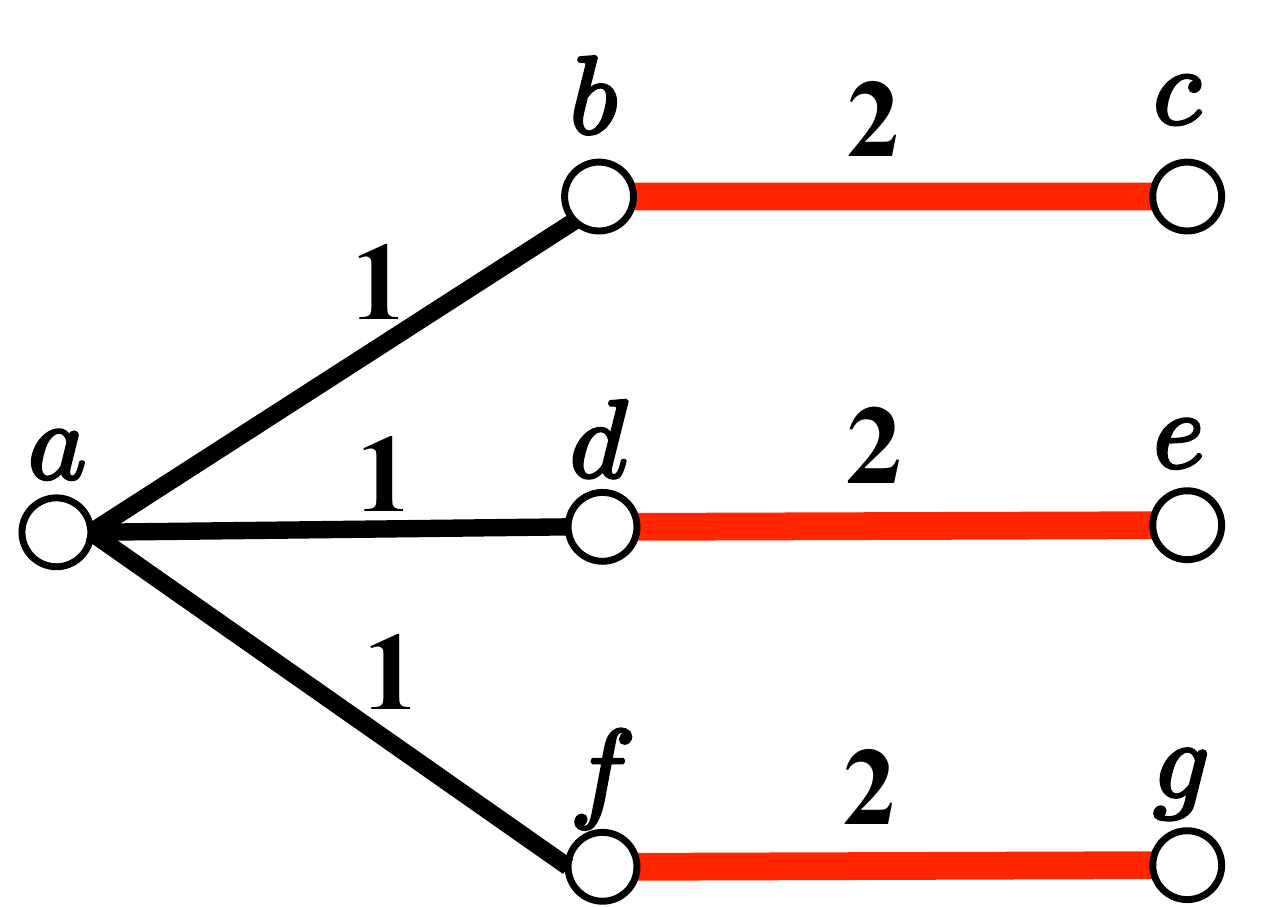}
                \caption{Time-slot 1}
                \label{fig:mwm1}
        \end{subfigure}
	\quad
	\begin{subfigure}[b]{0.21\linewidth}
                \includegraphics[width=\textwidth]{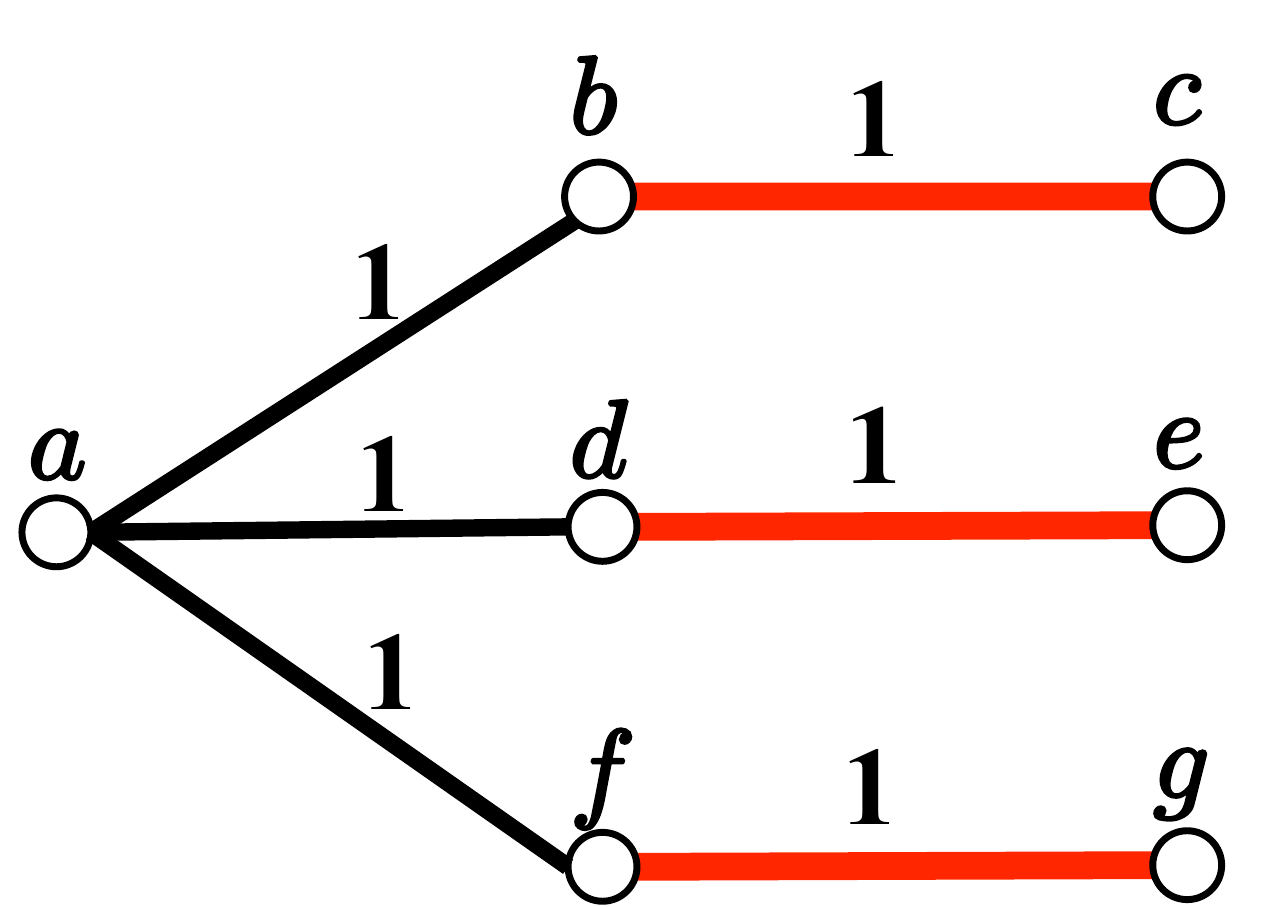}
                \caption{Time-slot 2}
                \label{fig:mwm2}
        \end{subfigure} 
        \quad
        \begin{subfigure}[b]{0.21\linewidth}
                \includegraphics[width=\textwidth]{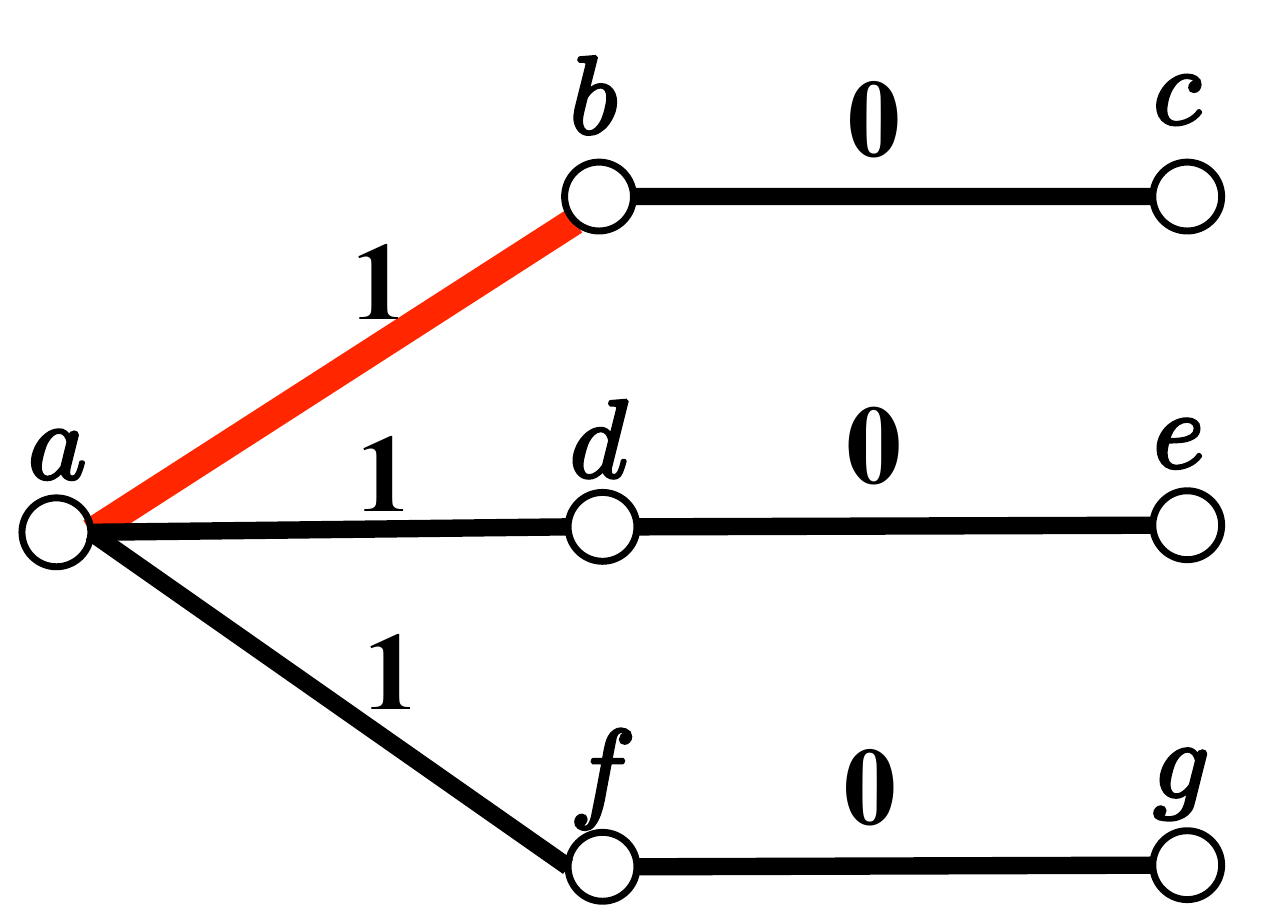}
                \caption{Time-slot 3}
                \label{fig:mwm3}
        \end{subfigure}%
        \caption{\HIGH{An illustration of the operations of MWM in four time-slots. The labels are similar to that in Fig.~\ref{fig:nsb}. 
       			The node degree and the node weight are not labeled because they are irrelevant under MWM.
			}
			}
        \label{fig:mwm}
\end{figure*}

Then, in time-slot $k$, we assign a weight to node $i$ as
\begin{equation}
\label{eq:weight}
w_i(k) \triangleq
\begin{cases}
Q_i(k)(2-U_i(k)) &\text{if $i \in \mH(k)$}; \\
Q_i(k) &\text{otherwise}.
\end{cases} 
\end{equation}
The NSB algorithm finds an MVM~\cite{spencer84} based on the assigned node weight $w_i(k)$'s in each time-slot. 
Note that links with a zero queue length will not be considered when MVM is computed. According to 
Eq. (\ref{eq:weight}), the nodes are divided into two groups: the heavy nodes that were not scheduled 
in the previous time-slot(s) have a weight twice their workload, and all the other nodes have a weight 
equal to their workload. Within each group, a node with a larger workload has a larger weight. 
\high{To help illustrate the operations of the NSB algorithm, we provide its pseudo code in Algorithm~\ref{alg:nsb}.}

\HIGH{
In addition, to help the reader better understand the operations of the NSB algorithm, we also present a simple example
in Fig.~\ref{fig:nsb}, which demonstrates the system evolution within four time-slots. Note that a certain tie-breaking rule 
is applied in this example. Using a different tie-breaking rule, different schedules could be selected. For instance, 
in time-slot 0 we may activate link $(a,b)$ 
instead of link $(b,c)$ along with links $(d,e)$ and $(f,g)$. However, tie-breaking rules do not affect our analysis.
As can be seen from Fig.~\ref{fig:nsb}, NSB drains all initial packets by the end of time-slot 3.
For comparison, using the same example, in Fig.~\ref{fig:mwm} we also demonstrate the system evolution under the 
MWM algorithm. Similarly, we observe that after time-slot 3, MWM needs two more time-slots to completely drain all
initial packets. In the next subsection, we will use a similar example to show that a link-based algorithm like MWM needs 
about twice as much time as that of a node-based algorithm like NSB to evacuate all initial packets in the network.
}

%\emph{Remark:} 
%The key intuition that the NSB algorithm can provide provable guarantees for both evacuation 
%time and throughput is the following: in the no-arrival settings, it ensures that all the critical nodes 
%will be scheduled at least twice within \high{each frame}, and in 
%the settings with arrivals, it ensures that all the critical nodes in the fluid limit (which are among 
%the heavy nodes in the original stochastic system) will be scheduled at least twice within 
%\high{each frame}. This comes from the following properties of NSB: 
%1) it results in the desired priority or ranking of the nodes by assigning the node weights according 
%to Eq.~(\ref{eq:weight}); 2) it finds an MVM based on the assigned node weights in every time-slot, 
%and MVM guarantees to match all the heaviest nodes whenever possible (Lemma~\ref{lem:path}). 

%%%%%%%%%%%%%%%%%%%%%%%%%%%%%%%%%%%%%%%%%%%%%
\subsection{Evacuation Time Performance} \label{subsec:nsb_et}
%%%%%%%%%%%%%%%%%%%%%%%%%%%%%%%%%%%%%%%%%%%%%
In this subsection, we analyze the evacuation time performance of NSB in the settings without 
arrivals. The main result is presented in Theorem~\ref{thm:nsb_et}.

\begin{theorem}
\label{thm:nsb_et}
The NSB algorithm has an approximation ratio no greater than $3/2$ for the evacuation time performance. 
\end{theorem}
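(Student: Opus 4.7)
The plan is to work with frames of three consecutive time-slots and show that in each frame the largest node workload $\Delta(k)$ decreases by at least $2$. Since $\mathcal{X}' \ge \Delta(0)$ (a node can be matched at most once per slot, so the heaviest node alone needs $\Delta(0)$ slots), this will give an evacuation time of at most $\lceil 3\Delta(0)/2 \rceil \le (3/2)\mathcal{X}'$.

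The heart of the argument is the following per-frame claim: in every frame $[3k', 3k'+2]$, each node that is heavy at the start of the frame is served in at least two of the three slots. I would establish this in two layers. First, I would invoke a structural existence lemma (the Lemma~\ref{lem:existence} promised by the introduction, proved via graph-factor theory -- essentially Tutte/Berge-type deficiency arguments for the subgraph induced by heavy nodes, where the threshold $(n-1)/n \cdot \Delta(k)$ is exactly what one needs to rule out odd components obstructing a perfect matching on $\mathcal{H}(k)$): in every time-slot there exists a matching that simultaneously matches every heavy node. Second, by Lemma~\ref{lem:path}, any MVM with respect to the NSB weights \eqref{eq:weight} must then match all heavy nodes, because heavy nodes carry strictly larger weights than non-heavy ones (their workload is $\ge (n-1)/n \cdot \Delta$, and even without doubling their weight dominates that of any non-heavy node). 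Thus in slots $3k'$ and $3k'+1$, every currently-heavy node is matched.

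For the third slot of the frame, I would use the weight-doubling mechanism. If a heavy node $i$ was matched in both of the previous two slots, then $U_i(3k'+2)=1$ and it has already received two services, so we are done for node $i$. Otherwise $U_i(3k'+2)=0$ and its weight is doubled to $2 Q_i(k)$. I then need to argue that the MVM in slot $3k'+2$ matches every node that failed to be served in at least one of the first two slots of the frame. This follows from a second application of the existence lemma to the set of such ``deprived'' heavy nodes and Lemma~\ref{lem:path}: since their doubled weights exceed the weights of nodes already served twice, MVM will prefer a matching covering them. A careful check is needed to show that a node that was critical at time $3k'$ remains heavy throughout the frame (its queue can drop by at most $2$, staying within the $(n-1)/n$ band around the new $\Delta$), and that the classification of heavy vs.\ non-heavy does not shift in a way that loses the priority ordering we rely on.

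Combining the two layers, in every 3-slot frame each critical node loses at least 2 packets, so $\Delta$ decreases by at least 2 per frame. Hence the network is drained within $\lceil 3\Delta(0)/2 \rceil$ slots, yielding the approximation ratio $3/2$. The main obstacle is the graph-factor-theoretic existence lemma; the threshold $(n-1)/n$ is precisely tuned so that a Tutte-style odd-component count on the heavy-node subgraph rules out any obstruction to a heavy-covering matching, and this is where the ``$n-1$'' in the threshold becomes essential. The bookkeeping around the third slot of the frame -- ensuring that the doubled-weight heavy nodes still admit a covering matching in the presence of already-served heavy nodes -- is the other delicate step, but it reduces to the same existence statement applied to a subset of heavy nodes.
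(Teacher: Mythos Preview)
Your high-level outline is right --- frames of three slots, $\Delta$ drops by at least two per frame, hence $T^{\text{NSB}}\le \lceil 3\Delta(0)/2\rceil \le (3/2)\ChromaticIndex$ --- but the way you propose to get the per-frame drop has a real gap.

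You assert that Lemma~\ref{lem:existence} gives ``in every time-slot there exists a matching that simultaneously matches every heavy node,'' with the threshold $(n-1)/n\cdot\Delta$ alone ruling out odd obstructions. That is not what the lemma says, and it is false in general: Lemma~\ref{lem:existence} requires, as a separate hypothesis, that the subgraph induced by the chosen set of heavy nodes be \emph{bipartite}. A triangle in which all three nodes are critical is a counterexample --- all three are heavy, yet any matching covers at most two of them. Consequently your claim that ``in slots $3k'$ and $3k'+1$, every currently-heavy node is matched'' cannot hold; in the very first slot of a frame NSB may well leave a critical node unmatched, and the paper's proof explicitly starts from that case.

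The paper's argument (Proposition~\ref{pro:nsb}) earns the bipartiteness hypothesis dynamically, from the maximality of the MVM schedules. If some critical nodes are unmatched in slot $p$, maximality of $M(p)$ forces them to be an independent set in $\Graph(p+1)$ (two adjacent unmatched critical nodes would contradict maximality), which is trivially bipartite; then Lemma~\ref{lem:delta} plus Lemma~\ref{lem:path} force $M(p+1)$ to match them all. For slot $p+2$ one must show the subgraph induced by the new critical set $\mC(p+2)$ is bipartite; this is done by a parity argument on any putative odd cycle, again using maximality of $M(p+1)$ and the independent-set structure from the previous step. The weight-doubling is what makes these sets the top-ranked nodes so that Lemma~\ref{lem:path} applies, but it does not by itself produce a covering matching. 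Your sketch conflates the role of the threshold (used inside Lemma~\ref{lem:existence} once bipartiteness is in hand) with the bipartiteness hypothesis (which must be established separately, slot by slot, from the maximality of earlier schedules). Fixing this requires replacing the ``all heavy nodes are matched in slots $p,p+1$'' claim with the two-step argument above.
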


\begin{proof}
Recall that for a given network with initial packets waiting to be transmitted, $\ChromaticIndex$
denotes the minimum evacuation time, and $T^{\text{NSB}}$ denotes the evacuation time of NSB. 
We want to show $T^{\text{NSB}} \le 3/2 \cdot \ChromaticIndex$. 
Recall that $\Delta(0)$ denotes the maximum node queue length in time-slot $0$.
If $\Delta(0)=1$, this is trivial as $T^{\text{NSB}} = \ChromaticIndex = 1$.
Now, suppose $\Delta(0) \ge 2$. Then, the result follows immediately from 1) Proposition~\ref{pro:nsb}
(stated after this proof): 
under NSB, the maximum node queue length decreases by at least two within \high{each frame}, i.e., 
$T^{\text{NSB}} \le 3/2 \cdot \Delta(0)$, and 2) an obvious fact: it takes at least $\Delta(0)$ time-slots 
to drain all the packets over the links incident to a node with maximum queue length, i.e., $\Delta(0) \le \ChromaticIndex$. 
\end{proof}

Next, we state a key proposition (Proposition~\ref{pro:nsb}) used for proving Theorem~\ref{thm:nsb_et}.

\begin{proposition}
\label{pro:nsb}
\high{Consider any frame.}
Suppose the maximum node queue length \high{is no smaller than two at the beginning of a frame}. 
Under the NSB algorithm, the maximum node queue length decreases by at least two \high{by the 
end of the frame}.
\end{proposition}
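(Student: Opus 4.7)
The plan is to reduce the claim to a per-node service requirement and then fulfil it using the two key design features of NSB. In the no-arrival setting underlying Theorem~\ref{thm:nsb_et}, $Q_i(k+3)=Q_i(k)-\sum_{t=k}^{k+2}R_i(t)$ for every node $i$, so $\Delta(k+3)\le\Delta(k)-2$ is equivalent to the following: every node with $Q_i(k)=\Delta(k)$ is scheduled at least twice during the frame $[k,k+1,k+2]$, and every node with $Q_i(k)=\Delta(k)-1$ is scheduled at least once.

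The first ingredient combines the node-weight design in Eq.~\eqref{eq:weight} with Lemma~\ref{lem:path} to force NSB to match the full heavy set at each slot, whenever the graph structurally permits it. Because every heavy node satisfies $w_i(k)\ge (n-1)\Delta(k)/n$ while every non-heavy node has weight strictly below this threshold, heavy nodes always occupy the top of the weight ranking used by MVM; by Lemma~\ref{lem:path}, whenever some matching covers all heavy nodes, NSB's MVM covers them as well. The forthcoming Lemma~\ref{lem:existence}, proved via a novel graph-factor-theoretic argument, will supply exactly such a matching at each of the three slots. Applied slot by slot, this gives every initially-critical node a first service at slot $k$, and analogously handles the nodes that only become critical later in the frame (in particular the $(\Delta(k)-1)$-nodes that rise to being critical once the initially-critical nodes are drained).

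For the \emph{second} service of each initially-critical node, the plan is to exploit the service-balance multiplier $(2-U_i(k))$. Within the frame, $U_i(k+1)=R_i(k)$ vanishes for any heavy node missed at slot $k$, doubling its weight at slot $k+1$; and $U_i(k+2)=R_i(k)R_i(k+1)$ vanishes for any heavy node missed in at least one of slots $k,k+1$, again doubling its weight. A case analysis---on whether an initially-critical node is matched at slot $k$ and, if so, whether it remains heavy at slot $k+1$---then shows every such node accumulates at least two services across the frame. I expect the main obstacle to lie in the structural content of Lemma~\ref{lem:existence}: whenever the heavy subgraph contains odd cycles, no single matching can cover every heavy vertex, so one must show via graph-factor theory that three consecutive MVMs, sequenced by the service-balanced weights, jointly cover every heavy vertex the requisite number of times. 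This is the online counterpart of the Shannon $\lfloor 3\Delta/2\rfloor$ edge-colouring bound for multigraphs, and it is exactly this combinatorial fact that dictates both the $3/2$ approximation ratio and the 3-slot frame length.
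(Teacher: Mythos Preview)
Your reduction to per-node service requirements is valid, but the execution plan has a genuine gap. You write that Lemma~\ref{lem:existence} ``will supply exactly such a matching at each of the three slots,'' and that applying it slot by slot ``gives every initially-critical node a first service at slot $k$.'' This is false: Lemma~\ref{lem:existence} requires the induced subgraph $G_Z$ to be bipartite, and at the first slot $p$ of the frame the heavy (or critical) subgraph can contain odd cycles---three critical nodes forming a triangle is already a counterexample, since at most two of them can be matched in slot $p$. You acknowledge this obstacle at the end, but your proposed fix (``three consecutive MVMs jointly cover every heavy vertex the requisite number of times'') is not a proof strategy; it is a restatement of what must be shown.

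The paper's proof proceeds quite differently. It does not attempt to match all critical (let alone all heavy) nodes at slot $p$, nor does it track per-node service counts. Instead it tracks the maximum degree $\Delta(\cdot)$ directly and argues in two stages. First, if $\Delta$ does not drop at slot $p$, then the unmatched critical nodes must form an \emph{independent set} (otherwise an edge between two of them could be added to the maximal matching $M(p)$); the induced subgraph is therefore edgeless, hence trivially bipartite, and Lemma~\ref{lem:delta} together with Lemma~\ref{lem:path} forces $M(p+1)$ to match them all. Second---and this is the step your proposal lacks---the paper shows that the subgraph induced by the critical nodes at slot $p+2$ is bipartite via an \emph{odd-cycle parity argument}: if it contained an odd cycle $C$, then no two adjacent vertices of $C$ could both have been matched by $M(p+1)$ (else they would have been adjacent maximum-degree nodes at $p+1$, contradicting the independent-set property), so some two adjacent vertices of $C$ were both unmatched by $M(p+1)$, contradicting maximality of $M(p+1)$. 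This bipartiteness argument, not graph-factor theory alone, is what makes the three-slot frame suffice. Your proposal also burdens itself with a separate treatment of $(\Delta(k)-1)$-nodes at slot $k$; the paper avoids this entirely by tracking $\Delta$ rather than individual nodes.
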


\HIGH{
We provide the detailed proof of Proposition~\ref{pro:nsb} in Section~\ref{sec:pro:nsb}
and give a sketch of the proof below.
%We give a sketch of the proof below, and then provide the detailed proof.
%Hence, we give a sketch of the proof below and provide the detailed proof in our online technical
%report \cite{infocom16_nsb_tr} for completeness.
Note that in any time-slot, the network together with the present packets can be 
represented as a loopless multigraph, where each multi-edge corresponds to a packet 
waiting to be transmitted over the link connecting the end nodes of the multi-edge.
We use $\Graph(k)$ to denote the multigraph at the beginning of time-slot $k$, 
and use $M(k)$ to denote the matching found by the NSB algorithm in time-slot $k$. 
Hence, the degree of node $i$ in $\Graph(k)$ is equivalent to the node queue length 
$Q_i(k)$, and the maximum node degree of $\Graph(k)$ is equal to $\Delta(k)$.
Now, consider any frame $k^{\prime}$ consisting of three consecutive time-slots 
$\{p, p+1, p+2\}$, where $p=3k^{\prime}$. 
Suppose that the maximum node queue length is no smaller than two at the beginning of 
frame $k^{\prime}$, i.e., $\Delta(p) \ge 2$ at the beginning of time-slot $p$.
Then, we want to show that under the NSB algorithm, the maximum degree will be 
at most $\Delta(p)-2$ at the end of time-slot $p+2$.
We proceed the proof in two steps: 
1) we first show that the maximum degree will decrease by at least one in the first 
two time-slots $p$ and $p+1$ (i.e., the maximum degree will be at most $\Delta(p)-1$
at the end of time-slot $p+1$),
and then, 2) show that if the maximum degree decreases by exactly one in the first 
two time-slots (i.e., the maximum degree is $\Delta(p)-1$ at the end of time-slot $p+1$), 
then the maximum degree must decrease by one in time-slot $p+2$, and becomes 
$\Delta(p)-2$ at the end of time-slot $p+2$. 
}

\high{\emph{Remark:} The key intuition that the NSB algorithm can provide provable evacuation time performance 
is that all the critical nodes are ensured to be scheduled at least twice within each frame (Proposition~\ref{pro:nsb}). 
This comes from the following properties of NSB: 
1) it results in the desired priority or ranking of the nodes by assigning the node weights according to Eq.~(\ref{eq:weight}); 
2) it finds an MVM based on the assigned node weights, and if a critical node was not scheduled in the first time-slot of the frame,
then in the second time-slot of the frame, MVM guarantees to match all such critical nodes (Lemma~\ref{lem:path}); 
3) similarly, if a critical node was not scheduled in both of the first two time-slots of the frame, 
then MVM guarantees to match all such critical nodes in the third time-slot of the frame.} 

\HIGH{In addition, we use an example to illustrate why a link-based algorithm like MWM could result in a bad evacuation time performance.
Consider the network topology presented in Fig.~\ref{fig:graphN}. Since MWM aims to maximize the total weight summed over all scheduled
links in each time-slot, it will choose the matching consisting of all the edges with $N$ packets. This pattern repeats until all links have one packet
(after $N-1$ time-slots). Then, it takes additional $N$ or $N+1$ time-slots to drain all the remaining packets, depending on the tie-breaking rule.
This results in inefficient schedules that consist of one link only about half the time, and thus, requires a total of $2N-1$ or $2N$ time-slots. 
Another link-based algorithm GMM performs similarly.
On the other hand, as we will show in Section~\ref{subsec:bipartite}, NSB is evacuation-time-optimal 
in this example and needs only $\Delta = N+1$ time-slots since the graph is bipartite. 
The system evolution under NSB and MWM for a special case of $N=3$ can be found in Figs.~\ref{fig:nsb} and \ref{fig:mwm}, respectively.
}

\begin{figure}[t!]
  \centering
    \includegraphics[width=0.25\textwidth]{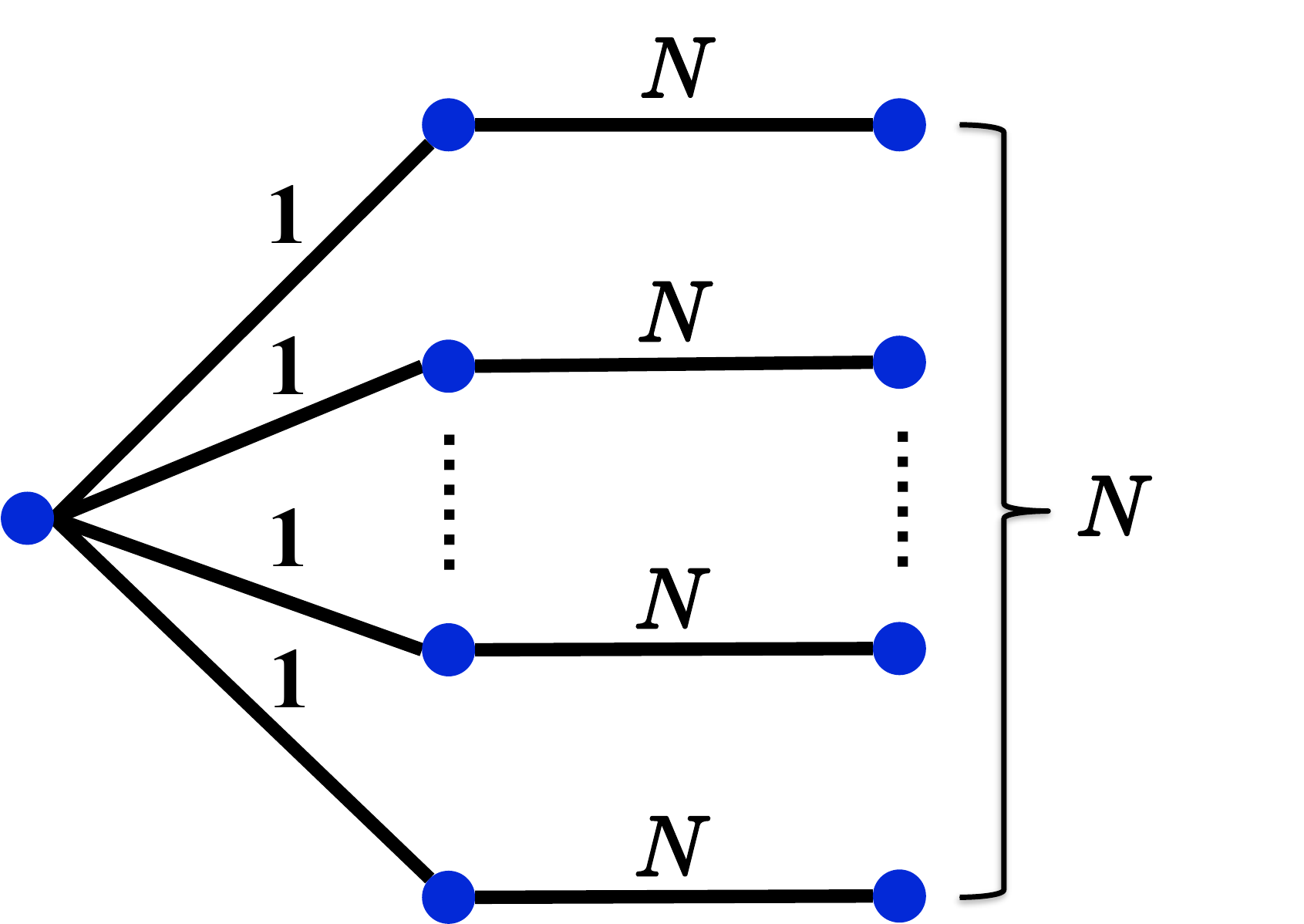}
  \caption{\high{A network with $2N+1$ nodes, where $N$ is a positive integer. 
  			The number above each link denotes the number of initial packets waiting to be transmitted over that link.
  			%(Note that the example in Fig.~\ref{fig:nsb0} is a special case of the example here with $N=3$.)
			}}
  \label{fig:graphN}
\end{figure}

\subsection{Throughput Performance} \label{subsec:nsb_throughput}
%%%%%%%%%%%%%%%%%%%%%%%%%%%%%%%%%%%%%%%%%%%%%
Next, we analyze the throughput performance of NSB in the settings with arrivals. The main 
result is presented in Theorem~\ref{thm:nsb_throughput}.

\begin{theorem}
\label{thm:nsb_throughput}
The NSB algorithm has an efficiency ratio no smaller than $2/3$ for the throughput performance. 
\end{theorem}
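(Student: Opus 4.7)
The plan is to exploit Proposition~\ref{pro:nsb} via a fluid-limit argument. Since $\Lambda^*\subseteq\Psi$, any $\lambda$ strictly inside $\tfrac{2}{3}\Lambda^*$ automatically satisfies $\lambda_i<2/3$ for every node $i\in\Vertex$, so it suffices to establish rate stability whenever $\max_i\lambda_i<2/3$.

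Following a standard Dai-style construction, I would scale space and time as $\bar{Q}_i^r(t)=Q_i(\lfloor rt\rfloor)/r$ and $\bar{D}_i^r(t)=D_i(\lfloor rt\rfloor)/r$, and extract a convergent subsequence using the SLLN assumption~(\ref{eq:slln_l}) and the $1$-Lipschitz continuity of $D_i(\cdot)$ (at most one departure per slot at a node). Every fluid limit then satisfies $q_i(t)=q_i(0)+\lambda_i t-d_i(t)$ with $d_i$ non-decreasing and $1$-Lipschitz. I would then pick the Lyapunov function $V(t)=\max_i q_i(t)$ and show that $V$ has uniformly negative right-derivative while $V(t)>0$; a finite-time drain of the fluid model then yields rate stability by standard results.

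The crucial step is the fluid-scale translation of Proposition~\ref{pro:nsb}: for any $i^*$ attaining the maximum workload in the fluid limit, $d_{i^*}$ grows at rate at least $2/3$. In the pre-limit, unrolling Proposition~\ref{pro:nsb} shows that across any frame $[p,p+3)$ with $\Delta(p)\geq 2$, each node originally in $\mC(p)$ receives at least two service opportunities inside the frame, i.e., $D_{i^*}(p+3)-D_{i^*}(p)\geq 2$. Averaged over a long horizon this yields $d_{i^*}'(t)\geq 2/3$ wherever $q_{i^*}(t)>0$, and hence $V'(t)\leq\lambda_{i^*}-2/3<0$ at any regular point at which the argmax is unique; ties are handled by the standard sub-differential argument for the maximum of finitely many absolutely continuous functions.

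The main obstacle is checking that the ``two services per frame'' guarantee of Proposition~\ref{pro:nsb} remains valid in the setting \emph{with} arrivals: mid-frame arrivals can shift $\mC(k)$ and $\mH(k)$ and perturb the MVM weights at slots $p+1$ and $p+2$, in principle displacing an originally critical node from the scheduled matching. Because arrivals only increase queue lengths (and hence the weights of already-heavy nodes), the MVM property via Lemma~\ref{lem:path} together with the service-balancing factor $U_i(k)$ should still force two matches per frame for every node in $\mC(p)$, but formalizing this robustness by revisiting the two-step argument sketched after Proposition~\ref{pro:nsb} is the technical crux.
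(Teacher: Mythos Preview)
Your overall architecture---fluid limits, the Lyapunov function $V(t)=\max_i q_i(t)$, and a negative-drift argument driven by a ``two services per frame'' property---matches the paper exactly. The gap is precisely the step you flag as ``the technical crux,'' and it cannot be closed the way you suggest.

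The issue is not merely that mid-frame arrivals perturb weights. It is that the nodes in the fluid-critical set $\mC$ are, in the pre-limit system over the interval $T$, only \emph{heavy} (their queue lengths lie in a band $(\xrj(q_{\max}(t)-\beta),\xrj(q_{\max}(t)+\beta))$), not \emph{critical} (maximum degree). Proposition~\ref{pro:nsb} and the sketch following it rely on Lemma~\ref{lem:delta} (Anstee's result), which guarantees a matching saturating all \emph{maximum-degree} nodes when the subgraph they induce is bipartite. That hypothesis simply does not hold for the nodes in $\mC$ at a generic time-slot of $T$: some other node outside $\mC$ may well carry the instantaneous maximum $\Delta(k)$, so Lemma~\ref{lem:delta} does not fire and the two-step bipartiteness argument of Proposition~\ref{pro:nsb} breaks down. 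Your remark that ``arrivals only increase queue lengths'' does not address this, because the obstruction is the mismatch between fluid-critical and pre-limit-critical, not a decrease in weight.

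The paper closes this gap with a genuinely new ingredient, Lemma~\ref{lem:existence}, which relaxes Lemma~\ref{lem:delta}: it shows via $(g,f)$-factor theory that a matching saturating $Z$ exists whenever all nodes of $Z$ are merely \emph{heavy} (degree $\ge (n-1)/n\cdot\Delta$) and $G_Z$ is bipartite. This is exactly the threshold used in the definition of $\mH(k)$ and is what Eq.~\eqref{eq:heavy} is set up to guarantee for $\mC$ throughout $T$. With Lemma~\ref{lem:existence} in hand, the paper proves Lemma~\ref{lem:2in3} (the arrivals-setting analogue of Proposition~\ref{pro:nsb}) by rerunning the two-step maximality/bipartiteness argument with ``heavy'' in place of ``maximum degree.'' Without this factor-theoretic lemma, your proposal does not go through.
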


%%%%%%%%%%%%%%%%%%%%%%%%%%%%%%%%%%%%%%%%%%%%%
%\subsubsection{Fluid Limits} \label{subsubsec:pre}
%%%%%%%%%%%%%%%%%%%%%%%%%%%%%%%%%%%%%%%%%%%%%
\high{We will employ fluid limit techniques \cite{dai95,dai00,andrews04} to prove Theorem~\ref{thm:nsb_throughput}. 
Fluid limit techniques are useful for two main reasons: 
(i) it removes irrelevant randomness in the original stochastic system such that the considered
system becomes deterministic and thus, the analysis can be simplified;
(ii) the algorithm exhibits some special properties in the fluid limit (e.g., Lemma~\ref{lem:2in3}), 
which do not exist in the original stochastic system.

Before proving Theorem~\ref{thm:nsb_throughput}, we construct the fluid model and 
state some definitions and a lemma that will be used in the proof. 
First, we extend the process $Y = A, Q, D, H$ to continuous time $t \ge 0$ by setting $Y(t) = Y(\lfloor t \rfloor)$.
Hence, $A(t)$, $Q(t)$, $D(t)$, and $H(t)$ are right continuous with left limits.
Then, using the techniques of \cite{dai00}}, we can show that for almost 
all sample paths and for all positive sequences $\xr \rightarrow \infty$, there exists a 
subsequence $\xrj$ with $\xrj \rightarrow \infty$ as $j \rightarrow \infty$ such that the 
following convergence holds uniformly over compact (\emph{u.o.c.}) intervals of time $t$:
\begin{eqnarray}
%\label{eq:fluid}
&&	\frac {A_i(\xrj t)} {\xrj} \rightarrow \lambda_i t,~\text{for all}~i \in \Vertex, \label{eq:fluid_a} \\ 
&&	\frac {Q_i(\xrj t)} {\xrj} \rightarrow q_i(t),~\text{for all}~i \in \Vertex, \\
&&	\frac {D_i(\xrj t)} {\xrj} \rightarrow d_i(t),~\text{for all}~i \in \Vertex, \\
&&	\frac {H_M(\xrj t)} {\xrj} \rightarrow h_M(t),~\text{for all}~M \in \Matching. \label{eq:fluid_h}
\end{eqnarray}
\HIGH{Since the proof of the above convergence is standard, we provide the proof in the appendix for completeness.}

Next, we present the fluid model equations as follows:
\begin{eqnarray}
&& q_i(t) = q_i(0) + \lambda_i t - d_i(t),~\text{for all}~i \in \Vertex, \label{eq:qqld} \\
&& \diff d_i(t) = \sum_{M \in \Matching} \sum_{l \in L(i)} M_l \cdot \diff h_M(t),~\text{for all}~i \in \Vertex, \label{eq:dd}\\
&& \sum_{M \in \Matching} h_M(t) = t.
\end{eqnarray}
Any such limit $(q,d,h)$ is called a \emph{fluid limit}. Note that $q_i(\cdot), d_i(\cdot)$ and 
$h_i(\cdot)$ are absolutely continuous functions and are differentiable at almost all times 
$t \ge 0$ (called \emph{regular} times). Taking the derivative of both sides of (\ref{eq:qqld})
and substituting (\ref{eq:dd}) into it, we obtain
\begin{equation}
\label{eq:diff}
\begin{split}
\diff q_i(t) &= \lambda_i - \diff d_i(t) \\
&= \lambda_i - \sum_{M \in \Matching} \sum_{l \in L(i)} M_l \cdot \diff h_M(t).
\end{split}
\end{equation}

%\high{
%Any such limit $(q,d,h)$ is called a \emph{fluid limit}. Note that $q_i(\cdot), d_i(\cdot)$ and 
%$h_i(\cdot)$ are absolutely continuous functions and are differentiable at almost all times 
%$t \ge 0$ (called \emph{regular} times).
%We present the fluid model equations as follows:
%\begin{eqnarray}
%&& q_i(t) = q_i(0) + \lambda_i t - \sum_{M \in \Matching} h_M(t), \forall i \in \Vertex, \label{eq:qqld} \\
%&& \diff h_M(t) =  \sum_{l \in L(i)} M_l \cdot \diff h_M(t), \forall i \in \Vertex, \label{eq:dd}\\
%&& \sum_{M \in \Matching} h_M(t) = t.
%\end{eqnarray}
% Taking the derivative of both sides of (\ref{eq:qqld}) 
%and substituting (\ref{eq:dd}) into it, we obtain
%\begin{equation}
%\label{eq:diff}
%\begin{split}
%\diff q_i(t) &= \lambda_i - \sum_{M \in \Matching} \sum_{l \in L(i)} M_l \cdot \diff h_M(t).
%\end{split}
%\end{equation}
%}

Borrowing the results of \cite{dai00}, we give the definition of \emph{weak stability} 
and state Lemma~\ref{lem:weak2rate}, which establishes the connection between rate stability 
of the original system and weak stability of the fluid model.

\begin{definition}
\label{def:weak}
The fluid model of a network is weakly stable if for every fluid model solution $(q,d,h)$ 
with $q(0)=0$, one has $q(t)=0$ for all regular times $t \ge 0$.
\end{definition}

\begin{lemma}[\high{Theorem 3 of \cite{dai00}}]
\label{lem:weak2rate}
A network is rate stable if the associated fluid model is weakly stable.
\end{lemma}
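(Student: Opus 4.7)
The plan is to prove the statement by contradiction, using the fluid limit construction that was just set up in equations~(\ref{eq:fluid_a})--(\ref{eq:fluid_h}). Suppose the network is \emph{not} rate stable. Then there exist a link $l$, a positive $\epsilon$, and a positive-probability event on which one can find a deterministic sequence $x_r \to \infty$ with $|D_l(x_r)/x_r - \lambda_l| \ge \epsilon$ for every $r$. Restricting to a sample path in the intersection of this event with the probability-one set where the SLLN~(\ref{eq:slln_l}) holds, I would invoke the u.o.c.\ convergence recorded above to pass to a subsequence $x_{r_j} \to \infty$ along which the scaled processes $(A(x_{r_j}t)/x_{r_j}, Q(x_{r_j}t)/x_{r_j}, D(x_{r_j}t)/x_{r_j}, H_M(x_{r_j}t)/x_{r_j})$ converge u.o.c.\ to some $(q,d,h)$ that satisfies the fluid model equations~(\ref{eq:qqld})--(\ref{eq:diff}), i.e., $(q,d,h)$ is a bona fide fluid model solution.

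Next, I would verify the initial condition $q(0)=0$. Because the model assumes only finitely many initial packets, $Q_i(0)$ is a fixed finite constant for each node $i$, and hence $Q_i(0)/x_{r_j} \to 0$, so $q_i(0)=0$ for every $i \in V$. By the weak-stability hypothesis, it then follows that $q_i(t)=0$ at every regular time $t \ge 0$, and by absolute continuity of $q_i(\cdot)$, at every $t \ge 0$. Since $q_i(t) = \sum_{l \in L(i)} q_l(t)$ and each link-level fluid queue is non-negative, this forces $q_l(t)=0$ for every link $l$ and every $t \ge 0$. Writing the link-level version of~(\ref{eq:qqld}) as $q_l(t) = q_l(0) + \lambda_l t - d_l(t)$ (with $q_l(0)=0$) then gives $d_l(t) = \lambda_l t$, so in particular $D_l(x_{r_j})/x_{r_j} \to d_l(1) = \lambda_l$, contradicting the standing inequality $|D_l(x_{r_j})/x_{r_j} - \lambda_l| \ge \epsilon$. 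Hence the ``bad'' event has probability zero and rate stability~(\ref{eq:stab}) holds almost surely.

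The main obstacle I expect is justifying that every u.o.c.\ subsequential limit $(q,d,h)$ really does satisfy the fluid model equations, rather than merely being a cluster point of the scaled processes. Concretely, one must show that the identity $D_i(k) = \sum_M \sum_{\tau \le k} \sum_{l \in L(i)} M_l (H_M(\tau)-H_M(\tau-1))$ passes to the limit, yielding~(\ref{eq:dd}), and that the Lipschitz and absolute continuity properties of $h_M, d_i, q_i$ needed to invoke weak stability at \emph{regular} times are preserved. These are the standard continuity-of-the-Skorokhod-reflection/summation-mapping arguments of \cite{dai95,dai00}; verifying them in our discrete-time, single-hop, one-hop-interference setting is routine once one notes that $H_M$ is monotone with unit-rate increments and that $\sum_M H_M(k) = k$, which bound the scaled processes uniformly and preclude mass escaping under the scaling. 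Once this is in hand, the contradiction argument above closes the proof.
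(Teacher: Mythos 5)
The paper does not prove this lemma at all: it is imported verbatim as Theorem~3 of \cite{dai00}, so there is no in-paper argument to compare against. Your reconstruction is the standard fluid-limit proof and is essentially correct: pass along a subsequence to a fluid model solution, use the finite-initial-condition assumption to get $q(0)=0$, invoke weak stability to force $q\equiv 0$, and read off $d_l(t)=\lambda_l t$ from flow balance.

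Two small remarks. First, the fluid model equations the paper records (Eqs.~(\ref{eq:qqld})--(\ref{eq:diff})) are stated at the \emph{node} level, whereas the conclusion you need is the \emph{link}-level identity $d_l(t)=\lambda_l t$; you correctly bridge this via $q_i=\sum_{l\in L(i)}q_l$ and nonnegativity, but to write $q_l(t)=q_l(0)+\lambda_l t-d_l(t)$ you do need the link-level fluid limits to exist, which follows from the same Arzela--Ascoli argument as in the paper's appendix (the link processes $D_l$ are also $1$-Lipschitz under the scaling) and is worth a sentence. Second, the contradiction framing is not necessary: since \emph{every} subsequence of $D_l(k)/k$ admits a further subsequence converging to $\lambda_l$ (the limit being the same for all fluid limits by weak stability), the full sequence converges to $\lambda_l$, which gives rate stability directly. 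The step you flag as the main obstacle --- that u.o.c.\ cluster points actually satisfy the fluid model equations --- is indeed the one piece the paper also delegates entirely to \cite{dai00}; your observation that $H_M$ is monotone with unit increments and $\sum_M H_M(k)=k$ is the right ingredient for pushing the departure identity through the limit.
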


We are now ready to prove Theorem~\ref{thm:nsb_throughput}.

\begin{proof}[Proof of Theorem~\ref{thm:nsb_throughput}]
We want to show that given any arrival rate vector $\lambda$ strictly inside $2/3 \cdot \Lambda^*$, 
the system is rate stable under the NSB algorithm. Note that $\lambda$ is also strictly inside 
$2/3 \cdot \Psi$ (i.e., $\lambda_i < 2/3$ for all $i \in \Vertex$) since $\Lambda^* \subseteq \Psi$. 
We define $\epsilon \triangleq \min_{i \in \Vertex} (2/3-\lambda_i)$. Clearly, we must have $\epsilon > 0$. 

To show rate stability of the original system, it suffices to show weak stability of the fluid model 
due to Lemma~\ref{lem:weak2rate}. We start by defining the following Lyapunov function:
\begin{equation}
V(q(t)) = \max_{i \in \Vertex} q_i(t).
\end{equation}
\HIGH{For any regular time $t \ge 0$, we define the drift of $V(q(t))$ as its derivative, denoted by $\diff V(q(t))$.}
Since $V(q(t))$ is a non-negative function, given $q(0)=0$, in order to show $V(q(t))=0$ and 
thus $q(t)=0$ for all regular times $t \ge 0$, it suffices to show that if $V(q(t))>0$ for $t>0$, 
then $V(q(t))$ has a negative drift.
\high{This is due to a simple result in Lemma 1 of \cite{dai00}.} 
Therefore, we want to show that for all regular times $t>0$, if $V(q(t)) > 0$, then $\diff V(q(t)) \le - \epsilon$.

We first fix time $t$ and let $q_{\max}(t)=V(q(t))=\max_{i \in \Vertex} q_i(t)$. Define the set of 
critical nodes in the fluid limits at time $t$ as
\begin{equation}
\label{eq:critical}
\mC \triangleq \{i \in \Vertex ~|~ q_i(t) = q_{\max}(t) \}.
\end{equation}
Also, let $\hat{q}_{\max}(t)$ be the largest queue length in the fluid limits among the remaining 
nodes, i.e., $\hat{q}_{\max}(t) = \max_{i \in \Vertex \backslash \mC} q_i(t)$. Since the number of 
nodes is finite, we have $\hat{q}_{\max}(t) < q_{\max}(t)$. Choose $\beta$ small enough such that 
$\hat{q}_{\max}(t) < q_{\max}(t) - 3 \beta$ and $\beta < q_{\max}(t)/(2n-1)$. Our choice of $\beta$ 
implies the following:
\begin{equation}
\label{eq:heavy}
q_{\max}(t)-\beta > \frac{n-1}{n} (q_{\max}(t) + \beta).
\end{equation}

Recall that $q(t)$ is absolutely continuous. Hence, there exists a small $\delta$ such that the 
queue lengths in the fluid limits satisfy the following condition for all times $\tau \in (t,t+\delta)$ 
\HIGH{and for all nodes $i \in \Vertex$:
\begin{equation}
\label{eq:qtau}
q_i(\tau) \in \left( q_i(t)-\beta/2, q_i(t)+\beta/2 \right).
\end{equation}
This further implies that the following conditions hold:

(C1) $q_i(\tau) \in (q_{\max}(t)-\beta/2, q_{\max}(t)+\beta/2)$ for all $i \in \mC$;

(C2) $q_i(\tau) < q_{\max}(t)-5\beta/2$ for all $i \notin \mC$,

\noindent where (C2) is from Eq.~\eqref{eq:qtau} and $q_i(t) \le \hat{q}_{\max}(t)<q_{\max}(t) - 3 \beta$ for all $i \notin \mC$.
}

Let $\xrj$ be a positive subsequence for which the convergence to the fluid limit holds.
Consider a large enough $j$ such that $\left \lvert Q_i(\xrj \tau)/\xrj - q_i(\tau) \right \rvert < \beta/2$ 
for all $\tau \in (t,t+\delta)$.
Considering the interval $(t,t+\delta)$ around time $t$, we define a set of consecutive
time-slots in the original system as
$T \triangleq \{\lceil \xrj t \rceil, \lceil \xrj t \rceil + 1, \dots, \lfloor \xrj (t+\delta) \rfloor\}$,
which corresponds to the scaled time interval $(t,t+\delta)$ in the fluid limits.

Lemma~\ref{lem:2in3} states that NSB, all the critical nodes at scaled time $t$ in the fluid 
limits will be scheduled at least twice within \high{each frame of interval $T$}. 

\begin{lemma}
\label{lem:2in3}
Under the NSB algorithm, all the nodes in $\mC$ will be scheduled at least twice within 
\high{each frame of interval $T$}.
\end{lemma}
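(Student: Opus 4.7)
The plan is to reduce the claim to an arrival-perturbed version of Proposition~\ref{pro:nsb}. First I would show that for sufficiently large $j$, every node $i \in \mC$ is a heavy node in every time-slot $k \in T$ of the original system; then, for any frame $\{p, p+1, p+2\} \subseteq T$ I would mirror the proof of Proposition~\ref{pro:nsb} to conclude that each such node is matched at least twice across the frame.

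For the first step, combining the fluid approximation $\lvert Q_i(\xrj \tau)/\xrj - q_i(\tau) \rvert < \beta/2$ with conditions (C1) and (C2) yields, for all $k \in T$ with $\tau = k/\xrj$, that $Q_i(k)/\xrj \in (q_{\max}(t) - \beta, q_{\max}(t) + \beta)$ for $i \in \mC$ and $Q_i(k)/\xrj < q_{\max}(t) - 2\beta$ for $i \notin \mC$. In particular, $\Delta(k)/\xrj < q_{\max}(t)+\beta$, so invoking Eq.~\eqref{eq:heavy} gives
$\tfrac{n-1}{n}\tfrac{\Delta(k)}{\xrj} < \tfrac{n-1}{n}(q_{\max}(t)+\beta) < q_{\max}(t)-\beta < \tfrac{Q_i(k)}{\xrj}$
for every $i \in \mC$; hence every $\mC$ node is heavy in every slot of $T$. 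As a byproduct, the original-system maximum $\Delta(k)$ is attained by some node of $\mC$, and non-$\mC$ nodes lag behind $\mC$ nodes by a $\Theta(\xrj \beta)$ margin in queue length throughout the frame.

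For the second step, fix any frame $\{p, p+1, p+2\} \subseteq T$. Since each node's total arrivals over three time-slots are $O(1)$ while $Q_i(k)$ for $i \in \mC$ is $\Theta(\xrj)$, for $j$ large the multigraph $\Graph(k)$ and the weight assignments of Eq.~\eqref{eq:weight} across the three slots differ only by $O(1)$ from the no-arrival regime that Proposition~\ref{pro:nsb} handles. I would then mirror that proof: Lemma~\ref{lem:existence} (the graph-factor result) guarantees the existence of a matching of $\Graph(k)$ simultaneously covering all $\mC$ nodes in every slot of the frame, and Lemma~\ref{lem:path} forces the MVM $M(k)$ picked by NSB to match every $\mC$ node that ranks among the heaviest. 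The weight-doubling mechanism closes the loop: any $i \in \mC$ missed in slot $p$ receives weight $2Q_i(p+1)$ in slot $p+1$ and is guaranteed to be matched there; analogously, any $i \in \mC$ missed in both slots $p$ and $p+1$ has its weight doubled in slot $p+2$ and must be matched. In either case each $i \in \mC$ is scheduled at least twice within the frame.

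The main obstacle I anticipate is justifying that the Proposition~\ref{pro:nsb} machinery survives in the presence of exogenous arrivals. In the no-arrival setting, the multigraph only loses edges and the critical set shrinks cleanly, which makes the graph-factor/MVM argument tidy. With arrivals, a few new multi-edges can appear during the frame, and I must verify that these perturbations do not disturb (i) the heavy-versus-non-heavy classification of the $\mC$ nodes, (ii) the weight ranking that underlies the invocation of Lemma~\ref{lem:path}, or (iii) the existence of the matching supplied by Lemma~\ref{lem:existence}. All three points are ultimately controlled by choosing $j$ large enough that the $\Theta(\xrj \beta)$ fluid-level gaps dominate $O(1)$ per-frame arrival effects, but the careful bookkeeping — especially confirming that Lemma~\ref{lem:existence} still yields a matching covering all of $\mC$ in each of the three slots — is where the real work lies.
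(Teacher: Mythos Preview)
Your high-level strategy—first showing every $i\in\mC$ is heavy throughout $T$, then mirroring the two-step argument of Proposition~\ref{pro:nsb} with Lemma~\ref{lem:existence} replacing Lemma~\ref{lem:delta}—is exactly what the paper does. The first step is correct as you wrote it.

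However, your second step has two genuine gaps. First, your case split for slot $p+2$ is wrong: you handle ``$i\in\mC$ missed in both slots $p$ and $p+1$,'' but your own Step~1 already rules this out, so that case is vacuous and your argument as written only yields \emph{one} match per frame, not two. The correct set to track in slot $p+2$ is $\mC''=\{i\in\mC:\text{$i$ matched exactly once in }\{p,p+1\}\}$; for such $i$ one has $U_i(p+2)=R_i(p+1)R_i(p)=0$ (note $p+2=3k'+2$, so the product form of Eq.~\eqref{eq:U} applies), hence the weight is doubled and these nodes outrank everyone outside $\mC''$.

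Second, and more substantively, you cannot simply assert that Lemma~\ref{lem:existence} ``guarantees the existence of a matching covering all $\mC$ nodes in every slot.'' Lemma~\ref{lem:existence} has a bipartiteness hypothesis on the induced subgraph $\Graph_Z$, and in slot $p$ there is no reason $\Graph_{\mC}$ is bipartite—indeed the paper never claims all of $\mC$ is matched in slot $p$. The real work, which you allude to only vaguely, is verifying bipartiteness for the \emph{specific} subsets that arise: (i) $\mC'$ (the $\mC$-nodes unmatched in slot $p$) forms an independent set because $M(p)$ is maximal, so bipartiteness is trivial; (ii) $\Graph_{\mC''}$ (excluding newly arrived multi-edges) is bipartite via an odd-cycle contradiction—two adjacent $\mC''$-nodes both matched by $M(p+1)$ would have been unmatched in slot $p$ and hence both in $\mC'$, contradicting that $\mC'$ is independent; and an odd cycle with no two adjacent matched vertices forces two adjacent unmatched vertices, contradicting maximality of $M(p+1)$. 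This bipartiteness verification is the heart of the proof and is missing from your proposal.
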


We provide the proof of Lemma~\ref{lem:2in3} in Section~\ref{sec:lem:2in3}.
For now, we assume that Lemma~\ref{lem:2in3} holds. \high{Note that interval $T$ 
contains at least $(\lfloor \xrj (t+\delta) \rfloor - \lceil \xrj t \rceil - 3)/3$ complete frames.}
Then, from Lemma~\ref{lem:2in3}, we have that for all $i \in \mC$,
\begin{equation}
\label{eq:beta}
\begin{split}
& \sum_{M \in \Matching} \sum_{l \in L(i)} M_l \cdot (H_M(\xrj (t+\delta))-H_M(\xrj t)) \\
& \ge 2/3 \cdot (\lfloor \xrj (t+\delta) \rfloor - \lceil \xrj t \rceil - 3),
\end{split}
\end{equation}
and therefore, we have
\begin{equation}
\label{eq:dh}
\begin{split}
&\sum_{M \in \Matching} \sum_{l \in L(i)} M_l \cdot \diff h_M(t) \\
&= \lim_{\delta \rightarrow 0} \sum_{M \in \Matching} \sum_{l \in L(i)} M_l \cdot \frac{h_M(t+\delta)-h_M(t)}{\delta} \\
&\stackrel{(a)} = \lim_{\delta \rightarrow 0} \lim_{j \rightarrow \infty} \sum_{M \in \Matching} \sum_{l \in L(i)} \frac{M_l \cdot (H_M(\xrj (t+\delta))-H_M(\xrj t))}{\xrj \delta} \\
&\stackrel{(b)} \ge \lim_{\delta \rightarrow 0} \lim_{j \rightarrow \infty} \frac{2/3 \cdot (\lfloor \xrj (t+\delta) \rfloor - \lceil \xrj t \rceil - 3)}{\xrj \delta} \\
&= 2/3,
\end{split}
\end{equation}
where $(a)$ is from the convergence in Eq.~\eqref{eq:fluid_h} and $(b)$ is from Eq.~\eqref{eq:beta}. 
Then, it follows from Eq. (\ref{eq:diff}) that for all $i \in \mC$, we have $\diff q_i(t) \le \lambda_i - 2/3 \le -\epsilon$.

Also, from conditions (C1) and (C2), every node $i \notin \mC$ has a queue length strictly smaller 
than that of a critical node in $\mC$ for the entire duration $(t, t+\delta)$. Thus, we have
$\diff V(q(t)) \le - \epsilon$, which implies that the fluid model is weakly stable. Then, we complete 
the proof by applying Lemma~\ref{lem:weak2rate}.
\end{proof}

%\HIGH{
\emph{Remark:} The key intuition that the NSB algorithm can provide provable throughput performance 
is that all the critical nodes in the fluid limit are ensured to be scheduled at least twice within each frame of 
interval $T$ (Lemma~\ref{lem:2in3}). Similar to the provable evacuation time performance, this comes from 
the desired weight assignment (Eq.~(\ref{eq:weight})) and an important property of MVM (Lemma~\ref{lem:path}).

As we mentioned earlier, the proof of Lemma~\ref{lem:2in3} is provided in Section~\ref{sec:lem:2in3}. 
However, we want to emphasize that the proof relies on a novel application of graph-factor theory, 
which is stated in Lemma~\ref{lem:existence}.
Lemma~\ref{lem:existence} is of critical importance in proving the guaranteed
throughput performance of the NSB algorithm in general graphs. Moreover, it will play a key 
role in establishing both throughput optimality and evacuation time optimality for NSB in bipartite 
graphs (see Section~\ref{subsec:bipartite}).

Next, we give some additional notations that are needed to state Lemma~\ref{lem:existence}.
By slightly abusing the notations, we also use $\Graph=(\Vertex, \Edge)$ to denote a multigraph,
which is possibly not loopless. 
Let $d_G(v)$ be the degree of node $v$ in $G$, where a loop associated with node $v$ counts 2 
towards the degree of $v$. 
Also, let $\Graph_Z$ denote the subgraph of $\Graph$ induced by a subset of nodes $Z \subseteq \Vertex$. 

\begin{lemma}
\label{lem:existence}
Let $\Graph=(\Vertex, \Edge)$ be a multigraph with maximum degree $\Delta$.
Consider a subset of nodes $Z \subseteq \Vertex$.
Suppose that the following conditions are satisfied: 
(i) all the nodes of $Z$ are heavy nodes, i.e., $Z \subseteq \{v \in \Vertex ~|~ d_{\Graph}(v) \ge (n-1)/n \cdot \Delta\}$, 
and (ii) $\Graph_Z$ is bipartite. Then, there exists a matching of $\Graph$ that matches every node of $Z$.
\end{lemma}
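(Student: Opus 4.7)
Proceed by contradiction: suppose no matching of $G$ covers $Z$. Fix a maximum matching $M$ of $G$ and consider the Gallai--Edmonds decomposition $V = D \cup A \cup R$, where $D$ is the set of vertices unmatched by at least one maximum matching, $A = N_G(D) \setminus D$, and $R = V \setminus (D \cup A)$. The standard structural facts carry over verbatim to multigraphs: every component of $G[D]$ is factor-critical; $G[R]$ has a perfect matching; and in every maximum matching of $G$, the $A$-vertices are paired, via a matching in the auxiliary bipartite graph $B$ whose sides are $A$ and the components of $G[D]$, into distinct components of $G[D]$. In particular, every vertex of $A \cup R$ is covered by every maximum matching, so any uncovered $Z$-vertex must lie in $D$.

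\textbf{Where bipartiteness enters.} A factor-critical graph on $\ge 3$ vertices contains an odd cycle, so the bipartiteness of $G_Z$ forbids any component $C$ of $G[D]$ with $|C| \ge 3$ from lying entirely inside $Z$; every such $C$ contains a vertex outside $Z$. Since factor-criticality lets us freely choose the deficient vertex of any $D$-component not paired to $A$, we pick a non-$Z$ vertex whenever $|C| \ge 3$. Hence the only obstructions to covering $Z$ are singleton components of $G[D]$; set $Z_1 \triangleq \{v \in Z : \{v\} \text{ is a component of } G[D]\}$. A $Z$-covering matching of $G$ exists iff $B$ has a matching of size $|A|$ (saturating $A$) that additionally covers every singleton $\{v\}$ with $v \in Z_1$.

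\textbf{Hall's condition via edge counting.} The Hall condition for $A$ is automatic (Gallai--Edmonds gives $|N_B(S)| \ge |S|$ for all $S \subseteq A$); combined with the Hall condition for $Z_1$, the desired matching in $B$ exists iff $|N_B(T)| \ge |T|$ for every $T \subseteq Z_1$. For $v \in Z_1$, $N_B(\{v\})$ equals $N_G(v)$ because $v$ has no $G$-edges to $R$ (by Gallai--Edmonds) and no $G$-edges to other $D$-components (since $\{v\}$ is its whole component). Distinct $Z_1$-singletons lie in distinct components, so $T$ is an independent set in $G$. Counting multi-edges incident to $T$,
\[
|T| \cdot \frac{n-1}{n}\Delta \;\le\; \sum_{v \in T} d_G(v) \;\le\; |N_G(T)| \cdot \Delta,
\]
where the left inequality is the heavy-degree hypothesis and the right uses $d_G(a) \le \Delta$ for every $a \in N_G(T) \subseteq A$. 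Thus $|N_G(T)| \ge |T|(n-1)/n$, which for every $|T| < n$ forces $|N_G(T)| \ge |T|$. The boundary case $|T| = n$ would force $T = V = Z_1$, $D = V$, $A = \emptyset$, hence $|N_G(T)| = 0$ and $\Delta = 0$, trivializing the claim.

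\textbf{Conclusion and main obstacle.} Both Hall conditions being satisfied, $B$ admits a matching of size $|A|$ covering $Z_1$. Combining it with the perfect matching of $R$ and the factor-critical covers inside each $G[D]$-component (with non-$Z$ vertices chosen to be the deficient ones in size-$\ge 3$ components) produces a matching of $G$ covering $Z$, contradicting the assumption. The main delicacy is the reduction step, which requires (i) the Gallai--Edmonds theorem in the multigraph setting, (ii) the bipartiteness of $G_Z$ to supply a non-$Z$ vertex inside every size-$\ge 3$ component of $G[D]$, and (iii) the combination of the two Hall conditions forcing the auxiliary matching to cover $Z_1$ while saturating $A$. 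Once the reduction is in hand, the remainder is a direct two-way multi-edge count, with the heavy-degree hypothesis active on the $T$-side and the $\Delta$-bound on the $N_G(T)$-side; the bipartiteness enters only through excluding size-$\ge 3$ factor-critical $Z$-only components.
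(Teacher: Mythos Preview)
Your argument is correct, but it takes a genuinely different route from the paper's. The paper proves Lemma~\ref{lem:existence} by invoking Anstee's $(g,f)$-factor criterion (Theorem~1.3 of \cite{anstee90}): setting $g_v=f_v=1$ for $v\in Z$ and $g_v=0,\,f_v=1$ otherwise, the hypothesis ``$G_{g=f}$ bipartite'' becomes exactly ``$G_Z$ bipartite'', and the factor condition reduces to showing $|S|\ge |Z'_{\bar S}|$ for every $S\subseteq V$, where $Z'_{\bar S}$ is the set of $Z$-vertices outside $S$ all of whose neighbours lie in $S$. That inequality is then dispatched by the same two-sided multi-edge count you use (heavy degree on the $Z'_{\bar S}$ side, $\Delta$ on the $S$ side), plus the observation $|S|+|Z'_{\bar S}|\le n$.

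What differs is purely the reduction. You replace the single black-box factor theorem by Gallai--Edmonds structure plus (implicitly) the Mendelsohn--Dulmage theorem to combine the two Hall conditions on $A$ and on $Z_1$. This is more hands-on and arguably more transparent, since one sees exactly where bipartiteness of $G_Z$ bites (ruling out $Z$-only factor-critical components of size $\ge 3$) and exactly which vertices can be left exposed. The price is more machinery to assemble and one extra step---the ``two Hall conditions $\Rightarrow$ simultaneous saturation'' implication---that you flag as delicate but do not name; citing Mendelsohn--Dulmage there would make the argument airtight. The paper's approach, by contrast, packages all of that structure into Anstee's criterion and gets to the counting in one line. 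Either way, the endgame inequality and the role of the threshold $(n-1)/n$ are identical.
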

%}

\begin{proof}
We first introduce some additional notations.
%By slightly abusing the notations, we also use $\Graph=(\Vertex, \Edge)$ to denote a multigraph. 
Let $g=[g_v: v \in \Vertex]$ and $f=[f_v: v \in \Vertex]$ be vectors of positive integers satisfying
\begin{equation}
0 \le g_v \le f_v \le d_G(v),~\text{for all}~ v \in \Vertex.
\end{equation}
%where $d_G(v)$ is the degree of node $v$ in $G$ counting multiplicities, and a loop associated with 
%node $v$ counts 2 towards the degree of $v$. 
A $(g,f)$-factor is a subgraph $F$ of $\Graph$ with
\begin{equation}
g_v \le d_F(v) \le f_v ,~\text{for all}~ v \in \Vertex.
\end{equation}
Note that if vectors $g$, $f$ satisfy $g_v,f_v \in \{0,1\}$ for all $v \in \Vertex$,
then the edges of a $(g,f)$-factor form a matching of $\Graph$. 
Let $\Graph_{g=f}$ denote the subgraph of $\Graph$ induced by nodes $v$ for which $g_v=f_v$,
and let $[x]^+ \triangleq  \max \{x,0\}$. We restate a result of \cite{anstee90} in Lemma~\ref{lem:gf}, 
which will be used in the proof.

\begin{lemma}[Theorem~1.3 of \cite{anstee90}, Property I]
\label{lem:gf}
Let multigraph $\Graph$ and vectors $g$, $f$ be given. Suppose that $\Graph_{g=f}$ is bipartite. 
Then, $\Graph$ has a $(g,f)$-factor if and only if for all node subsets $S \subseteq \Vertex$,
\begin{equation}
\label{eq:gf}
\sum_{v \in S} f_v \ge \sum_{v \notin S} [g_v - d_{\Graph-S}(v)]^+.
\end{equation}
\end{lemma}

\begin{figure}[t!]
  \centering
    \includegraphics[width=0.3\textwidth]{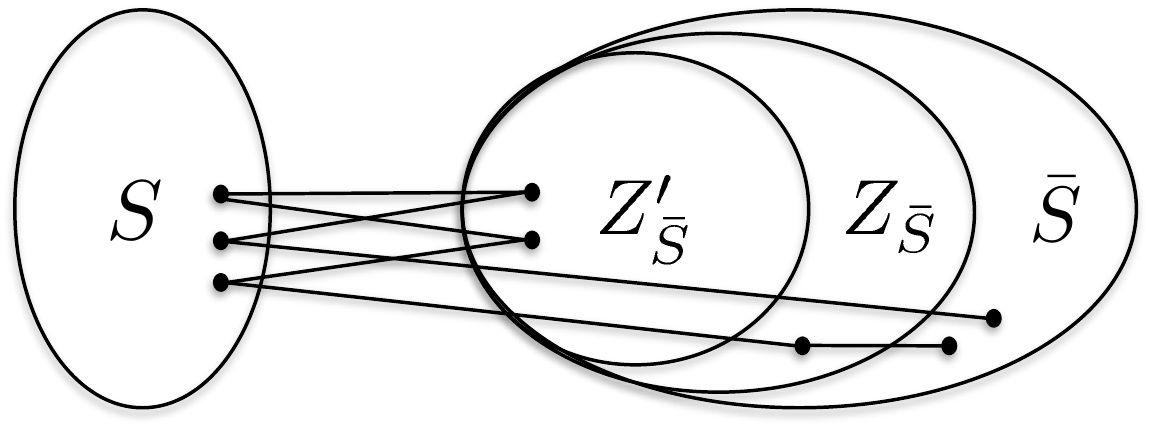}
  \caption{An illustration for the relationship of the sets in the proof of Lemma~\ref{lem:existence}.}
  \label{fig:set}
\end{figure}

%Next, we state Lemma~\ref{lem:existence} below and provide its proof.
%
%We will use Lemma~\ref{lem:gf} to prove Lemma~\ref{lem:existence}. 

We are now ready to prove Lemma~\ref{lem:existence}. 
%We assume $n>1$ throughout the proof. 
%First, we construct vectors $g$, $f$ in the following way: set $g_v=f_v=1$ for node $v \in Z$, 
%and set $g_v=0$, $f_v=1$ otherwise. Clearly, we have $\Graph_{g=f} = \Graph_{Z}$.
%and the edges of a $(g,f)$-factor form a matching of graph $\Graph$ that matches 
%every node of $Z$. 
Suppose that (i) all the nodes of $Z$ are heavy nodes and (ii) $\Graph_{Z}$ is bipartite. 
We construct vectors $g$, $f$ by setting $g_v=f_v=1$ for node $v \in Z$ and $g_v=0$, $f_v=1$ otherwise. 
%Clearly, we have $\Graph_{g=f} = \Graph_{Z}$.
With our constructed vectors $g$ and $f$, not only a $(g,f)$-factor forms a matching of graph $\Graph$, 
but this matching also matches every node of $Z$.
Therefore, it suffices to show that $\Graph$ has a $(g,f)$-factor.

Next, we apply Lemma~\ref{lem:gf} to show that $\Graph$ has a $(g,f)$-factor.
Note that $\Graph_{g=f} = \Graph_{Z}$ and $\Graph_{Z}$ is bipartite.
Then, it remains to show that Eq.~(\ref{eq:gf}) is satisfied for any subset of nodes $S \subseteq \Vertex$. 
Let $\bar{S} = \Vertex \backslash S$ be the complementary set of $S$. Let 
$Z_{\bar{S}} = Z \cap \bar{S}$, and let $Z_{\bar{S}}^{\prime} = \{ v \in Z_{\bar{S}} ~|~$ all 
the neighboring nodes of $v$ are in $S \}$. The relationship of these sets is illustrated in 
Fig.~\ref{fig:set}. Clearly, we have $\sum_{v \in S} f_v = \lvert S \rvert$ as $f_v=1$ for 
all $v \in \Vertex$. Also, a little thought gives 
$\sum_{v \notin S} [g_v - d_{G-S}(v)]^+ = \lvert Z_{\bar{S}}^{\prime} \rvert$. This is because 
any node $v \notin S$ must belong to one of the following three cases:
\begin{enumerate}
\item If $v \notin Z$, then $g_v=0$, and thus, $[g_v - d_{G-S}(v)]^+ = 0$; 

\item If $v \in Z_{\bar{S}} \backslash Z_{\bar{S}}^{\prime}$, then $g_v=1$ and $d_{G-S}(v) \ge 1$, 
which implies $[g_v - d_{G-S}(v)]^+ = 0$;

\item If $v \in Z_{\bar{S}}^{\prime}$, then $g_v=1$ and $d_{G-S}(v)=0$, which implies $[g_v - d_{G-S}(v)]^+ = 1$.
\end{enumerate}

Hence, in order to show Eq.~(\ref{eq:gf}), it remains to show 
$\lvert S \rvert \ge \lvert Z_{\bar{S}}^{\prime} \rvert$.
We prove it by contradiction. Suppose $\lvert S \rvert < \lvert Z_{\bar{S}}^{\prime} \rvert$.
Since $S$ and $Z_{\bar{S}}^{\prime}$ are disjoint, we have 
$\lvert S \rvert  + \lvert Z_{\bar{S}}^{\prime} \rvert \le n$, and thus, $\lvert S \rvert < n$. 
We let $d_{\Graph}(Z) = \sum_{i \in Z} d_{\Graph}(i)$ denote the total degree of a subset 
of nodes $Z \subseteq \Vertex$ in $\Graph$. Then, we state three obvious facts:

(F1) $d_{\Graph}(Z_{\bar{S}}^{\prime}) \ge (n-1)/n \cdot \Delta \lvert  Z_{\bar{S}}^{\prime} \rvert$;

(F2) $d_{\Graph}(S) \le \Delta \lvert S \rvert$;

(F3) $d_{\Graph}(Z_{\bar{S}}^{\prime}) \le d_{\Graph}(S)$.

\noindent Note that (F1) is from the fact that every node of $Z$ has a degree no smaller than 
$(n-1)/n \cdot \Delta$, (F2) is trivial, and (F3) is from the definition of $Z_{\bar{S}}^{\prime}$ 
that all of its neighboring nodes belong to $S$. Then, by combining the above facts, we obtain
\begin{equation}
\label{eq:sh}
\frac{n-1}{n} \le \frac{\lvert S \rvert}{\lvert  Z_{\bar{S}}^{\prime} \rvert}.
\end{equation}
This further implies $\lvert  Z_{\bar{S}}^{\prime} \rvert - \lvert S \rvert \le \lvert S \rvert/(n-1) \le 1$,
as $\lvert S \rvert < n$. Hence, we must have $\lvert  Z_{\bar{S}}^{\prime} \rvert = \lvert S \rvert + 1$,
because $\lvert S \rvert < \lvert Z_{\bar{S}}^{\prime} \rvert$. Substituting this back into Eq.~(\ref{eq:sh})
gives $(n-1)/n \le \lvert S \rvert / (\lvert S \rvert + 1)$. This implies $\lvert S \rvert \ge n-1$,
and thus $\lvert S \rvert  + \lvert Z_{\bar{S}}^{\prime} \rvert \ge 2n-1$, which contradicts the fact that 
$\lvert S \rvert  + \lvert Z_{\bar{S}}^{\prime} \rvert \le n$. (We assume $n>1$ to avoid trivial discussions.)

Therefore, we have $\lvert S \rvert \ge \lvert Z_{\bar{S}}^{\prime} \rvert$, and thus, Eq.~(\ref{eq:gf})
is satisfied. Then, Lemma~\ref{lem:gf} implies that graph $\Graph$ has a $(g,f)$-factor, which forms 
a matching of graph $\Graph$ that matches every node of $Z$. This completes the proof.
\end{proof}

\subsection{Optimality in Bipartite Graphs} \label{subsec:bipartite}
%%%%%%%%%%%%%%%%%%%%%%%%%%%%%%%%%%%%%%%%%%%%%
\high{As we have explained in the introduction, the minimum evacuation time is lower 
bounded by the largest workload at the nodes and the odd-size cycles. Hence, both the nodes 
and the odd-size cycles could be bottlenecks. Ideally, it would be best to consider the workload 
of both the nodes and the odd-size cycles when making scheduling decisions. However, this may 
render the algorithm very complex because it is much more difficult to explicitly consider all the 
odd-size cycles in a graph. Hence, we have focused on designing the node-based algorithms 
(such as the NSB algorithm) that do not explicitly handle the odd-size cycles. 
Without considering all the bottlenecks, these algorithms may not be able to achieve the best 
achievable performance in general. However, the theoretical results of Theorems~\ref{thm:nsb_et} 
and \ref{thm:nsb_throughput} are quite remarkable in the sense that even without considering the 
odd-size cycles (which could also be bottlenecks), the NSB algorithm can guarantee an 
approximation ratio no greater than $3/2$ for the evacuation time and an efficiency ratio 
no smaller than $2/3$ for the throughput.
%Considering the fact that NSB makes decisions based on the node workload and does 
%not handle odd-size cycles (which could also be bottlenecks), the theoretical performance 
%guarantees achieved by NSB are quite remarkable. 
We believe that NSB will perform better if odd-size cycles do not form bottlenecks. 
This is also observed from our simulation results in Section~\ref{sec:sim}.}

In this subsection, we will show that NSB is both throughput-optimal and evacuation-time-optimal in bipartite 
graphs, \high{where there are no odd-size cycles}. This result is stated in Theorem~\ref{thm:nsb_opt},
whose proof needs to apply Lemmas~\ref{lem:existence} and \ref{lem:path} and follows 
a similar line of analysis to that of Theorems~\ref{thm:nsb_et} and \ref{thm:nsb_throughput} for general graphs.
\HIGH{The detailed proof is provided in the appendix for completeness.}
%The detailed proof is provided in Section~\ref{sec:thm:nsb_opt}.}
%which will be proved using Lemmas~\ref{lem:path} and \ref{lem:existence}.

\begin{theorem}
\label{thm:nsb_opt}
The NSB algorithm is both throughput-optimal and evacuation-time-optimal in bipartite graphs. 
\end{theorem}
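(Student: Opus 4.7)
The plan is to combine Lemma~\ref{lem:existence} with Lemma~\ref{lem:path} to sharpen both of the ``two-out-of-three'' type conclusions that underlie the $3/2$ and $2/3$ bounds in the general case. Specifically, I will first establish a key structural claim: in bipartite graphs, under NSB every heavy node in $\mH(k)$ is matched in every time-slot $k$. The argument is as follows. By the weight assignment in Eq.~(\ref{eq:weight}), every heavy node $i$ satisfies $w_i(k) \ge Q_i(k) \ge (n-1)/n \cdot \Delta(k)$, while every non-heavy node $j$ satisfies $w_j(k) = Q_j(k) < (n-1)/n \cdot \Delta(k)$. Hence, for any tie-breaking rule, the $|\mH(k)|$ heaviest-weighted nodes are exactly the heavy nodes. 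Since $\Graph$ is bipartite, so is the induced subgraph $\Graph_{\mH(k)}$, and Lemma~\ref{lem:existence} (applied with $Z = \mH(k)$) produces a matching of $\Graph$ that matches every heavy node. Lemma~\ref{lem:path} then implies that the MVM found by NSB also matches every heavy node, and in particular every critical node in $\mC(k) \subseteq \mH(k)$.

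Evacuation-time-optimality follows from this claim almost immediately. Since every critical node is served in every time-slot with $\Delta(k) \ge 1$, the maximum queue length $\Delta(k)$ decreases by exactly one per time-slot, giving $T^{\text{NSB}} = \Delta(0)$. By K\"onig's edge-coloring theorem for bipartite multigraphs, the chromatic index equals the maximum degree, i.e., $\ChromaticIndex = \Delta(0)$. Combined with the trivial lower bound $\Delta(0) \le \ChromaticIndex$ already used in the proof of Theorem~\ref{thm:nsb_et}, this yields $T^{\text{NSB}} = \ChromaticIndex$.

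For throughput-optimality, I will adapt the fluid-limit proof of Theorem~\ref{thm:nsb_throughput}. In bipartite graphs, the integrality of the matching polytope gives $\Lambda^* = \Psi$, so it suffices to show rate stability for any $\lambda$ with $\lambda_i < 1$ for all $i \in \Vertex$. Set $\epsilon = \min_{i \in \Vertex}(1-\lambda_i) > 0$ and use the Lyapunov function $V(q(t)) = \max_{i \in \Vertex} q_i(t)$. The structural claim above transfers to the fluid limit exactly as Lemma~\ref{lem:2in3} does in the general case, now with the stronger conclusion that every critical node in $\mC$ (as defined in Eq.~(\ref{eq:critical})) is matched in \emph{every} time-slot of the interval $T$, not just twice per frame. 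Mimicking the derivation of Eq.~(\ref{eq:dh}) then yields $\sum_{M \in \Matching}\sum_{l \in L(i)} M_l \cdot \diff h_M(t) \ge 1$ for each $i \in \mC$, so $\diff q_i(t) \le \lambda_i - 1 \le -\epsilon$ and hence $\diff V(q(t)) \le -\epsilon$ whenever $V(q(t)) > 0$. Weak stability of the fluid model follows, and Lemma~\ref{lem:weak2rate} gives rate stability.

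The main obstacle is verifying the structural claim that heavy nodes are exactly the heaviest-weighted nodes, because the weight function in Eq.~(\ref{eq:weight}) doubles some heavy-node weights but not others, which a priori could disturb the ordering needed to invoke Lemma~\ref{lem:path}. The threshold $(n-1)/n \cdot \Delta(k)$ in the definition of $\mH(k)$ — which is precisely why this threshold was chosen — creates a strict separation between any heavy weight and any non-heavy weight regardless of the $U_i(k)$-values, and this is what allows Lemmas~\ref{lem:existence} and~\ref{lem:path} to combine seamlessly in the bipartite setting.
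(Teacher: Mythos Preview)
Your proposal is correct and follows essentially the same approach as the paper: both proofs establish that in a bipartite graph every heavy node is matched in every time-slot by combining Lemma~\ref{lem:existence} (applied to $Z=\mH(k)$, whose induced subgraph is automatically bipartite) with Lemma~\ref{lem:path} (using the strict weight separation $(n-1)/n\cdot\Delta(k)$ between heavy and non-heavy nodes), then invoke K\"onig's theorem for evacuation-time optimality and rerun the fluid-limit Lyapunov argument of Theorem~\ref{thm:nsb_throughput} with the sharpened ``scheduled in every time-slot of $T$'' conclusion for throughput optimality. Your identification of the weight-separation issue as the only nontrivial verification is exactly the point the paper highlights as well.
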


\emph{Remark:} The NSB algorithm has a complexity of $O(m \sqrt{n} \log n)$, as the complexity of finding 
an MVM is $O(m \sqrt{n} \log n)$ \cite{spencer84}. One important question is whether we can develop 
lower-complexity algorithms that provide the same performance guarantees. We answer this question in the next section.

\begin{figure*}[t!]
        \centering
        \begin{subfigure}[b]{0.3\linewidth}
                \includegraphics[width=\textwidth]{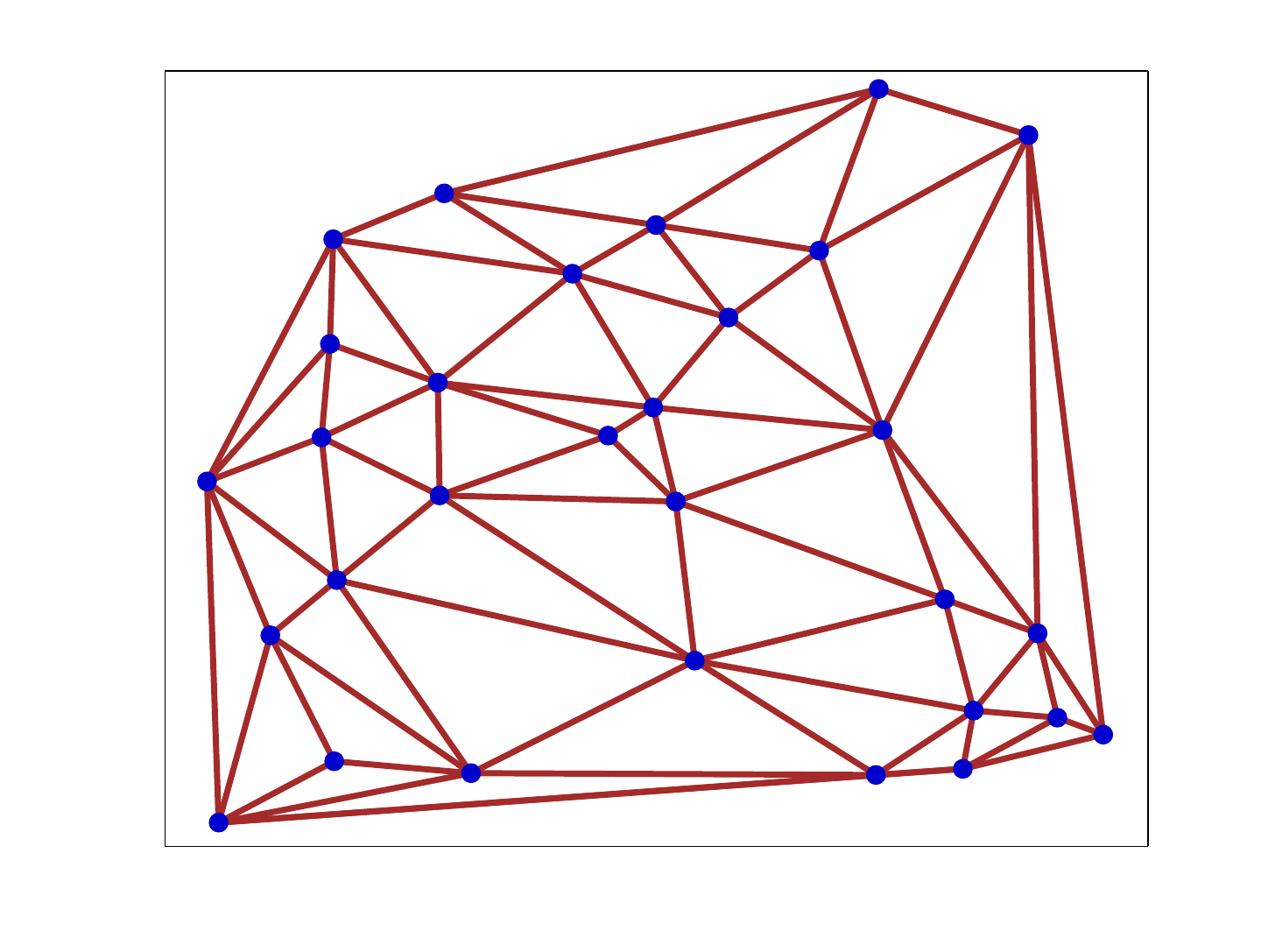}
                \caption{Triangular Mesh}
                \label{fig:triangle_top}
        \end{subfigure}%
        \qquad
        \begin{subfigure}[b]{0.3\linewidth}
                \includegraphics[width=\textwidth]{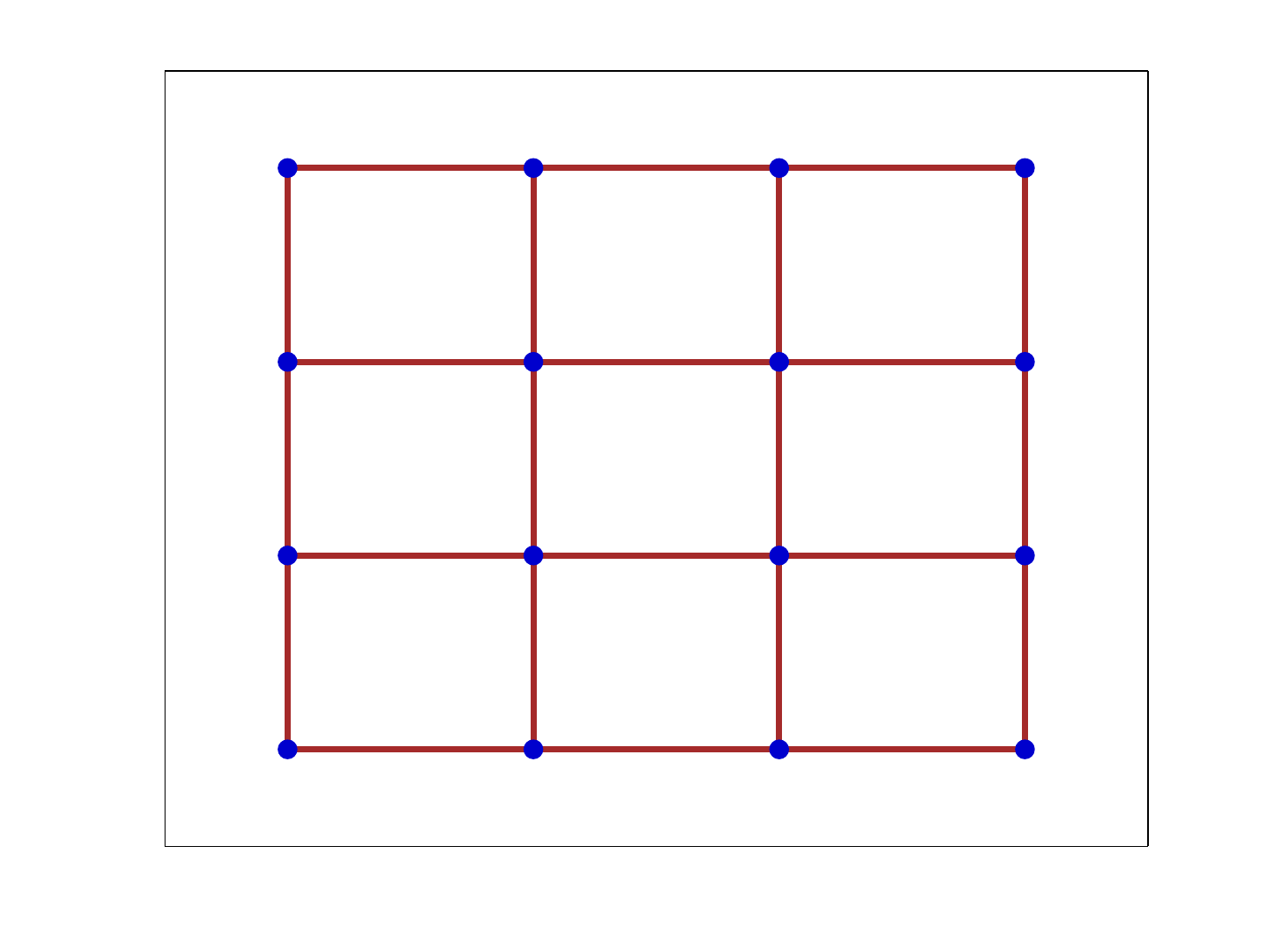}
                \caption{4$\times$4 grid}
                \label{fig:grid_top}
        \end{subfigure}
	\qquad
	\begin{subfigure}[b]{0.3\linewidth}
                \includegraphics[width=\textwidth]{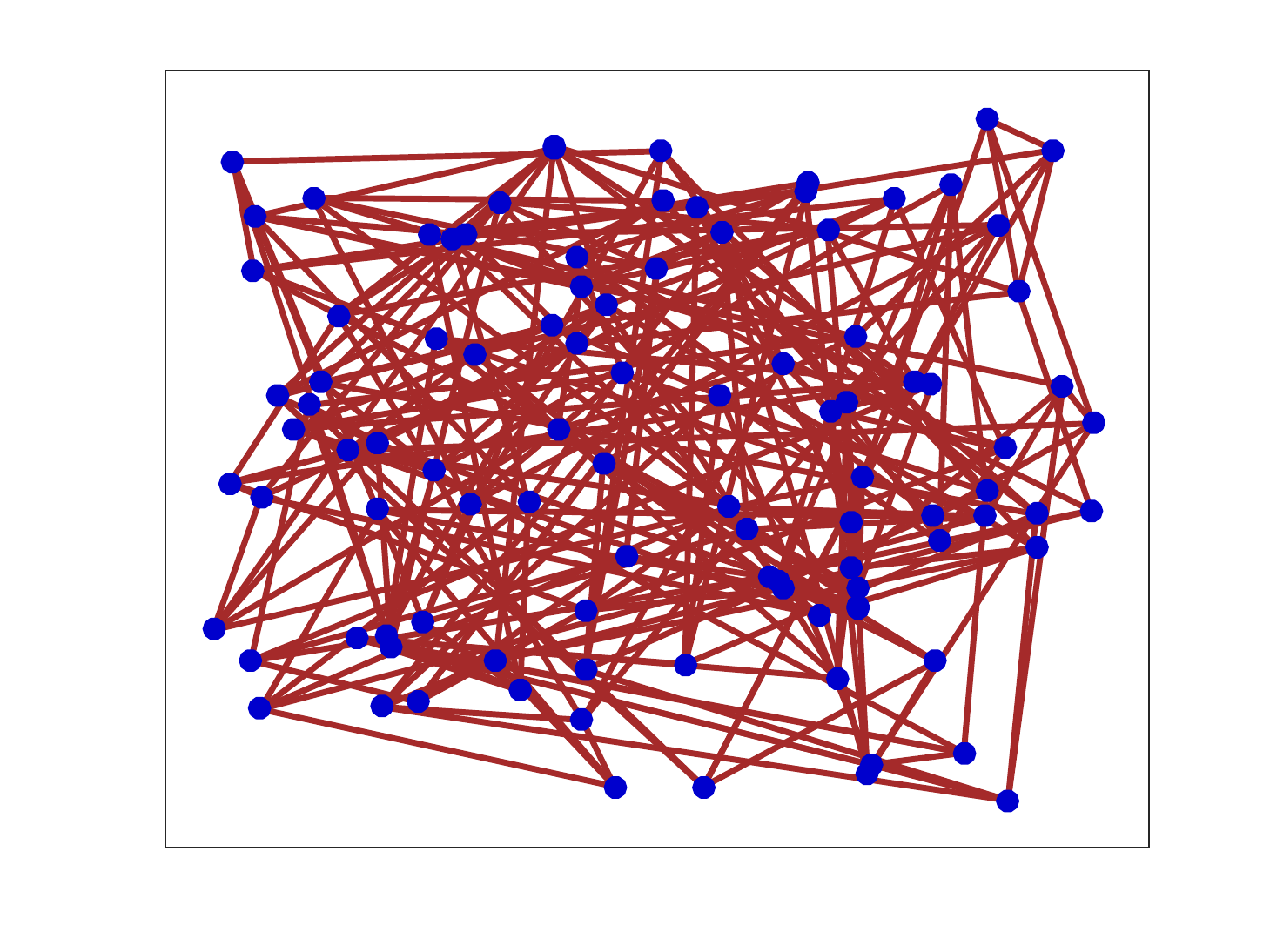}
                \caption{Random topology}
                \label{fig:random_top}
        \end{subfigure}  
        \caption{Simulation Topologies.}
        \label{fig:topologies}
\end{figure*}

\begin{figure*}[t!]
        \centering
        \begin{subfigure}[b]{0.3\linewidth}
                \includegraphics[width=\textwidth]{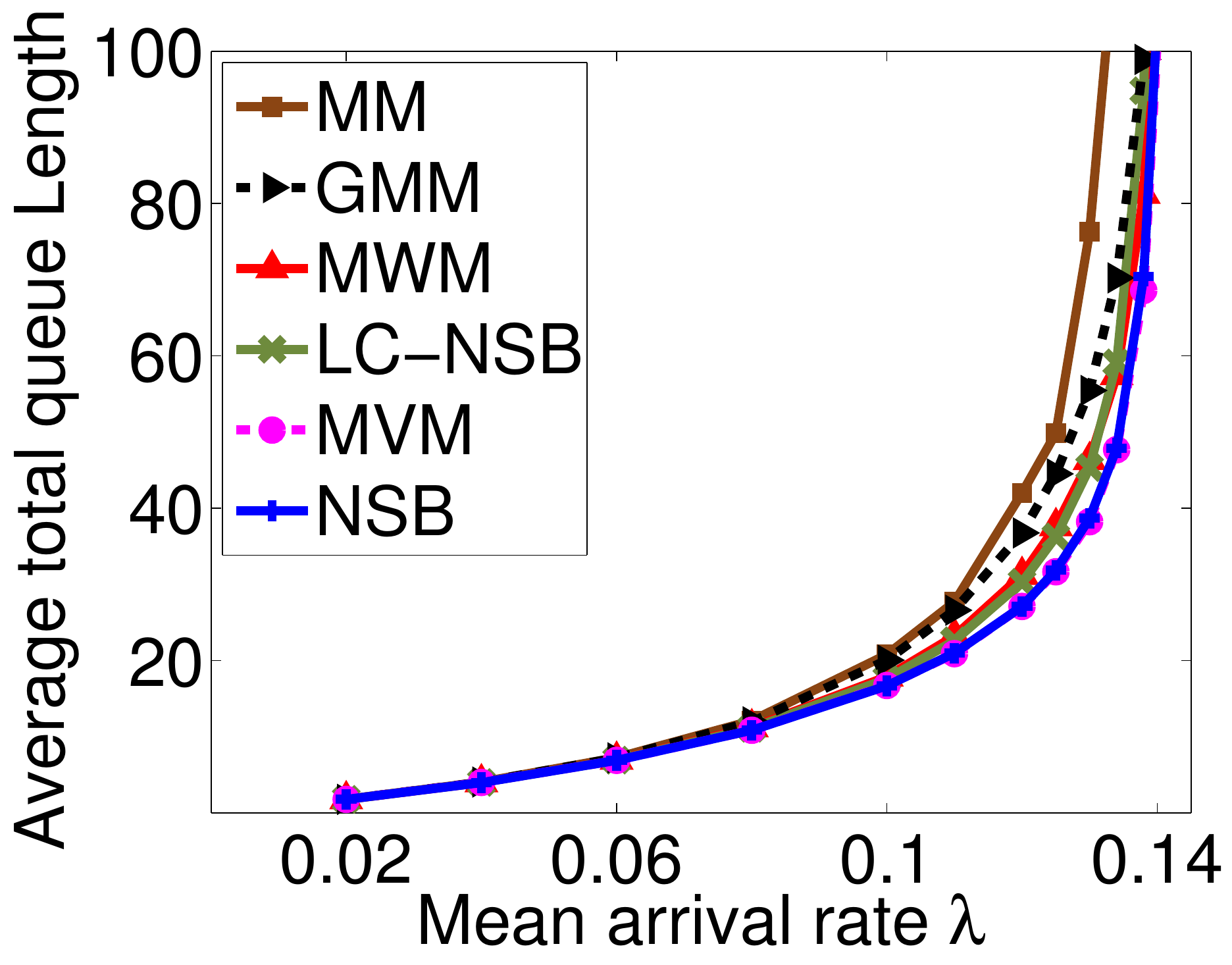}
                \caption{Triangular Mesh}
                \label{fig:triangle_result}
        \end{subfigure}%
       	\quad
        \begin{subfigure}[b]{0.3\linewidth}
                \includegraphics[width=\textwidth]{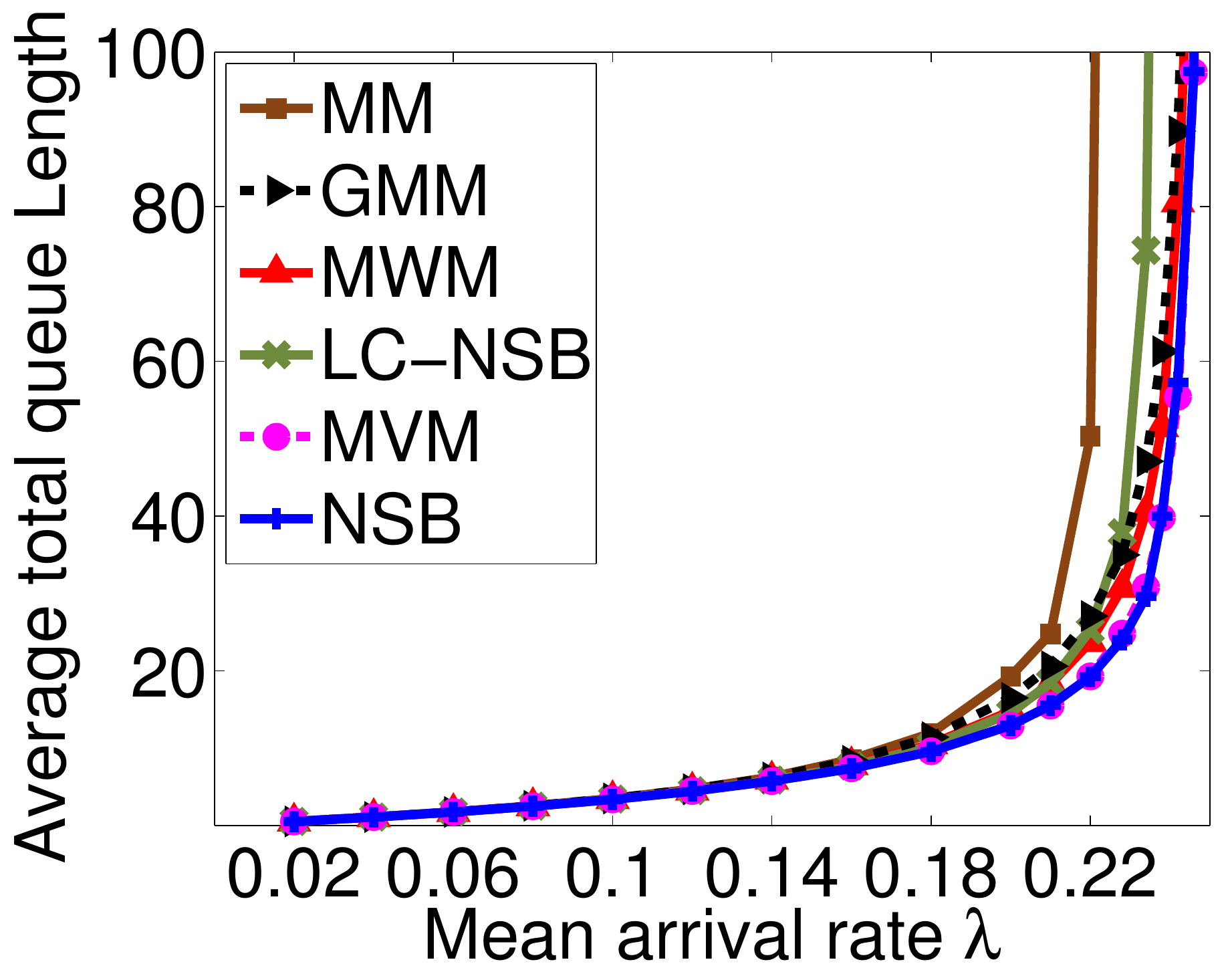}
                \caption{4$\times$4 grid}
                \label{fig:grid_result}
        \end{subfigure}
	  \quad
	    \begin{subfigure}[b]{0.3\linewidth}
                \includegraphics[width=\textwidth]{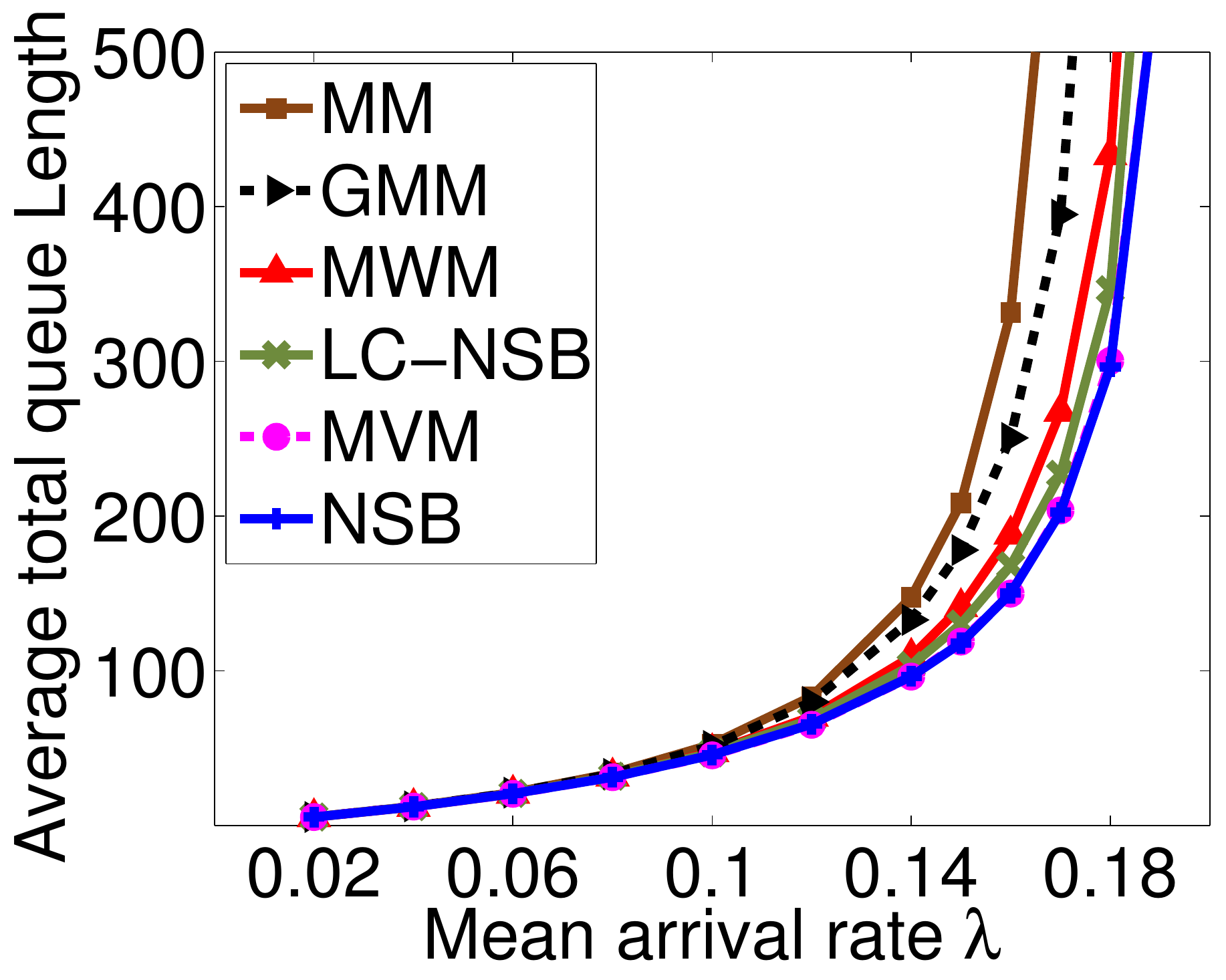}
                \caption{Random topology}
                \label{fig:random_result}
        \end{subfigure}  
        \caption{Simulation results for Poisson arrivals.}
        \label{fig:results_poi}
\end{figure*}

\begin{figure*}[t!]
        \centering
        \begin{subfigure}[b]{0.3\linewidth}
                \includegraphics[width=\textwidth]{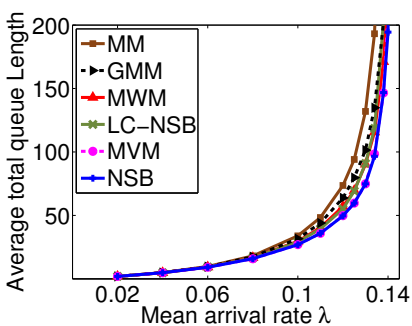}
                \caption{Triangular Mesh}
                \label{fig:triangle_result_file}
        \end{subfigure}%
        \quad
        \begin{subfigure}[b]{0.3\linewidth}
                \includegraphics[width=\textwidth]{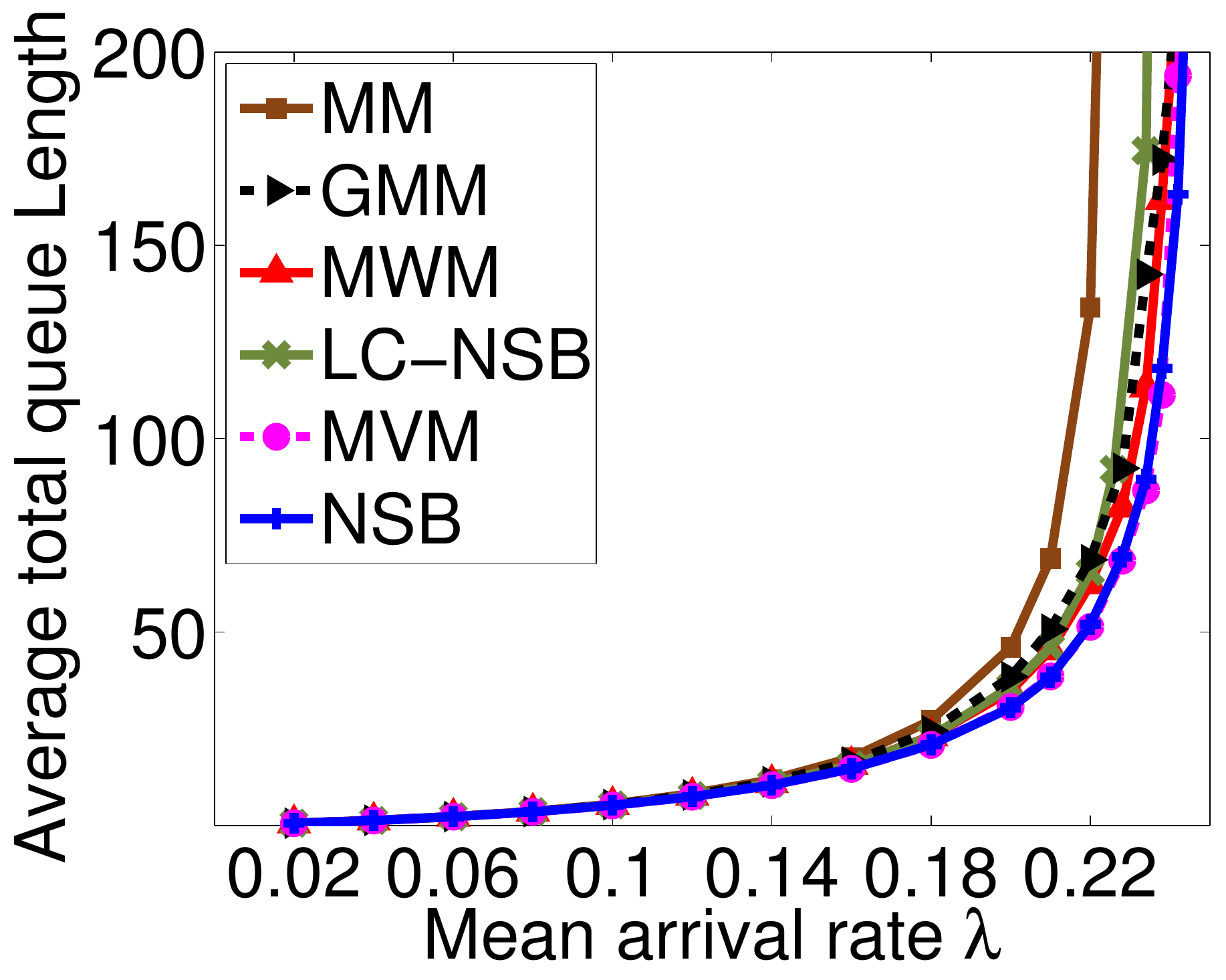}
                \caption{4$\times$4 grid}
                \label{fig:grid_result_file}
        \end{subfigure}
	  \quad
	    \begin{subfigure}[b]{0.3\linewidth}
                \includegraphics[width=\textwidth]{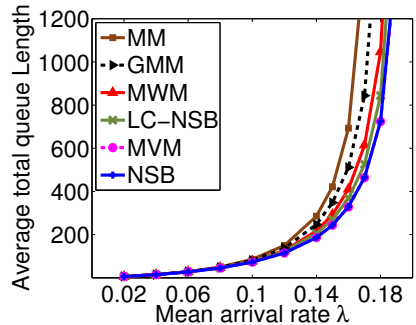}
                \caption{Random topology}
                \label{fig:random_result_file}
        \end{subfigure}  
        \caption{\HIGH{Simulation results for file arrivals, where the file arrival probability is $p=0.1$ and the file size follows Poisson distribution with mean $\lambda/p$.}}
        \label{fig:results_file}
\end{figure*}

\begin{figure*}[t!]
        \centering
        \begin{subfigure}[b]{0.3\linewidth}
                \includegraphics[width=\textwidth]{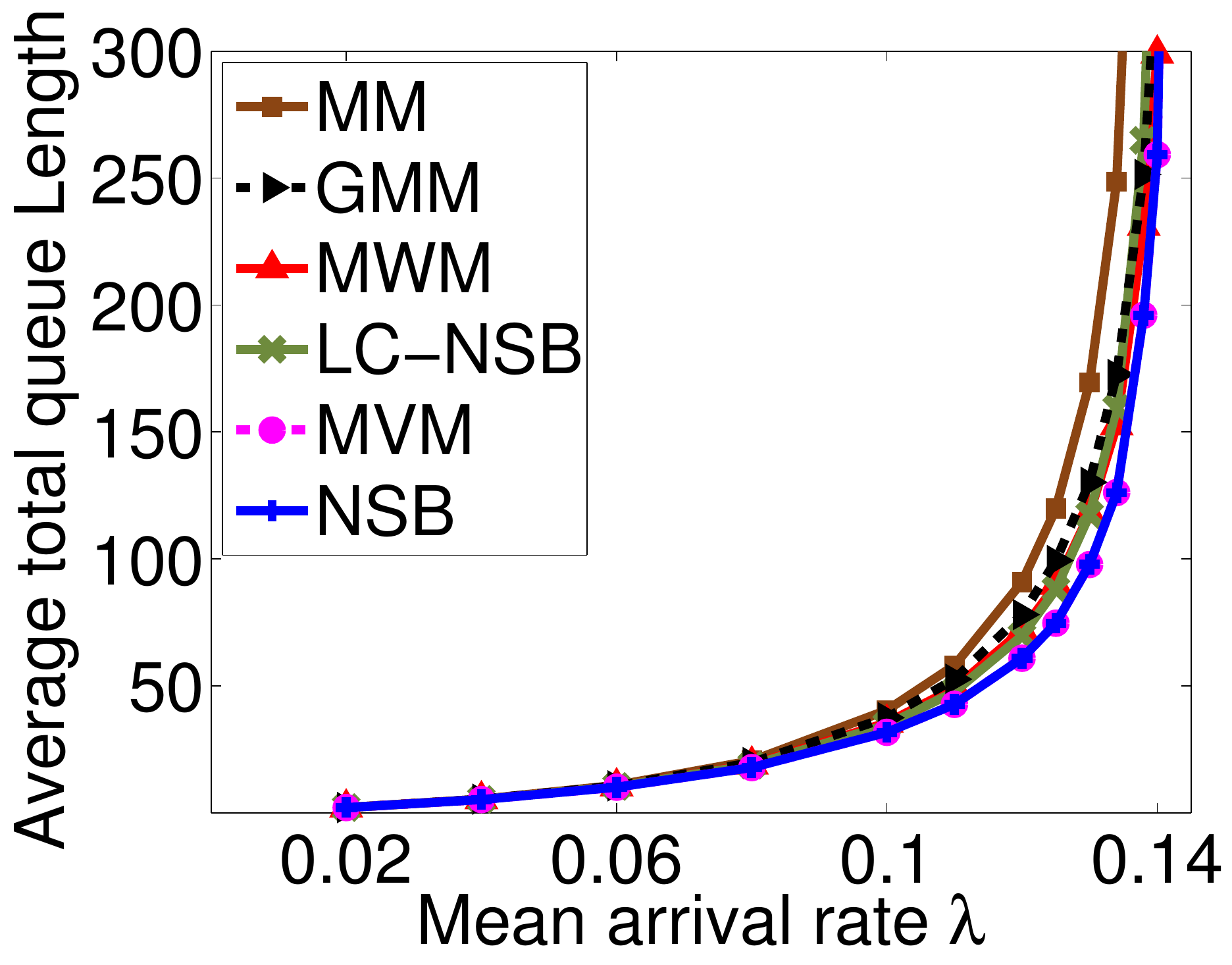}
                \caption{Triangular Mesh}
                \label{fig:triangle_result_zipf}
        \end{subfigure}%
        \quad
        \begin{subfigure}[b]{0.3\linewidth}
                \includegraphics[width=\textwidth]{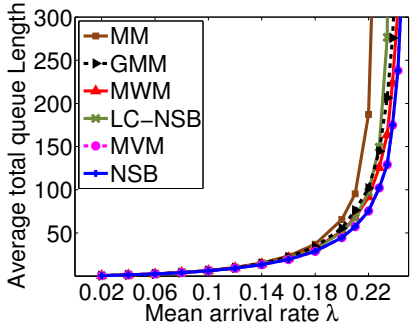}
                \caption{4$\times$4 grid}
                \label{fig:grid_result_zipf}
        \end{subfigure}
	  \quad
	    \begin{subfigure}[b]{0.3\linewidth}
                \includegraphics[width=\textwidth]{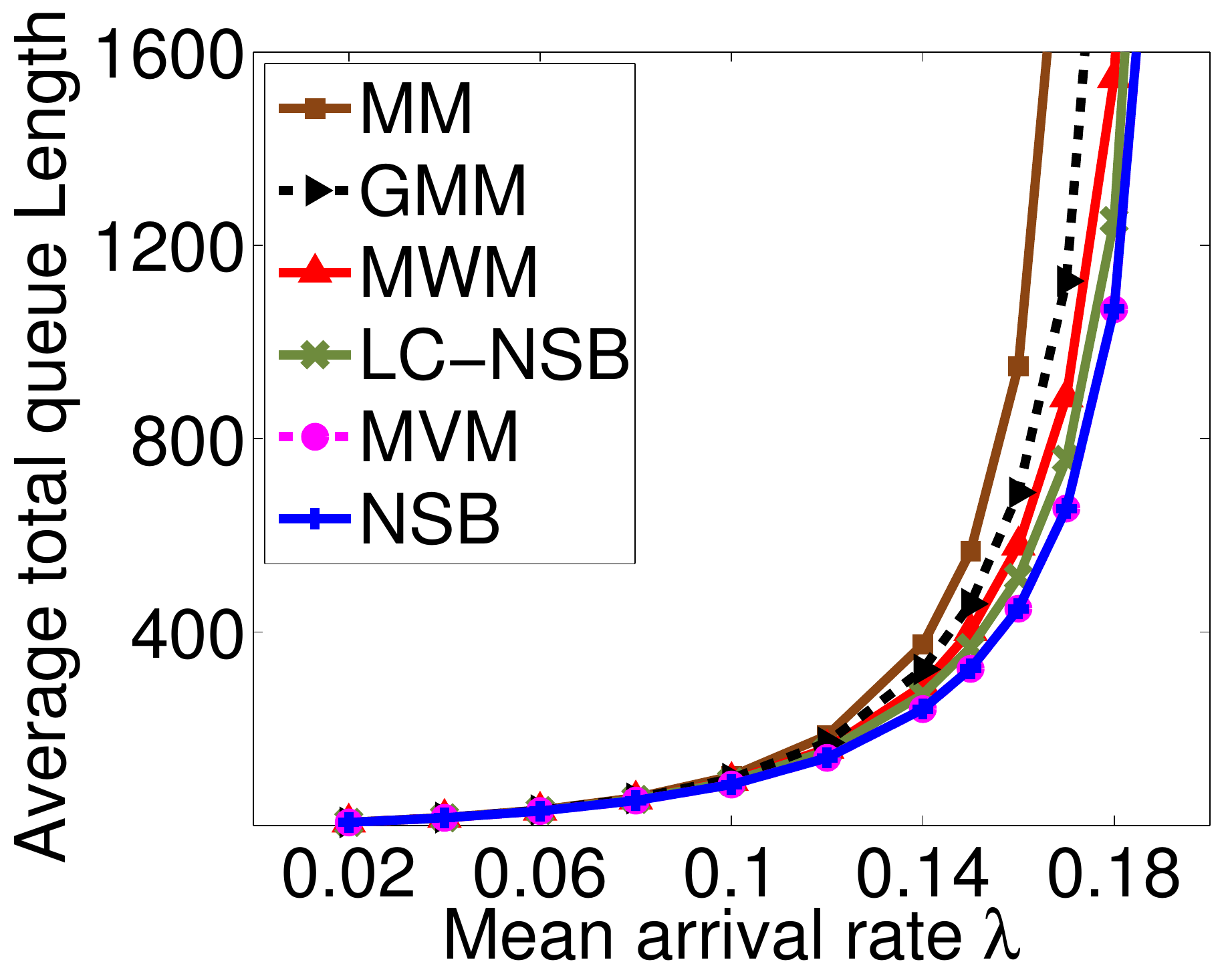}
                \caption{Random topology}
                \label{fig:random_result_zipf}
        \end{subfigure}  
        \caption{\HIGH{Simulation results for Zipf arrivals with a support of $[0,1,\dots,999]$.}}
        \label{fig:results_zipf}
\end{figure*}

%%%%%%%%%%%%%%%%%%%%%%%%%%%%%%%%%%%%%%%%%%%%%%%%%%%%%%%%%%%%%
\section{A Lower-Complexity NSB Algorithm} \label{sec:lc-nsb}
%%%%%%%%%%%%%%%%%%%%%%%%%%%%%%%%%%%%%%%%%%%%%%%%%%%%%%%%%%%%%
%In this section, we will exploit the insights obtained from the analysis for NSB and develop 
%a slightly modified NSB algorithm that provides the same performance guarantees, but 
%yields a lower complexity of $O(m \sqrt{n})$ compared to $O(m \sqrt{n} \log{n})$ of NSB.
Through the analysis for the NSB algorithm, we obtain the following important insights:
In order to achieve the same performance guarantees as NSB, what really matters is the priority or 
the ranking of the nodes, rather than the exact weight of the nodes. 
\high{This insight comes from the following:
Note that under the NSB algorithm, the weight of each node is only used in the MVM component
(line 6 of Algorithm~\ref{alg:nsb}). In the performance analysis of NSB (i.e., Theorems~\ref{thm:nsb_et}, 
\ref{thm:nsb_throughput}, and \ref{thm:nsb_opt}), the only property of MVM we use is Lemma~\ref{lem:path}, 
which is concerned about the $s$ heaviest nodes (i.e., the $s$ highest ranking nodes) rather than 
about the exact weight of the nodes.}
Hence, if we assign the node weights in a way such that, the weights are bounded integers and the nodes 
still have the desired priority or ranking as in the NSB algorithm, then we can develop a new algorithm 
with a lower complexity. Thanks to the results of 
\cite{huang12,pettie12}, an $O(m \sqrt{n})$-complexity implementation\footnote{This can be done 
by setting the weight of an edge to the sum of the weight of its two end nodes and finding an MWM 
based on the new edge weights using the techniques developed in \cite{huang12,pettie12}.} of MVM 
can be derived if the maximum node weight is a bounded integer independent of $n$ and $m$. 

Next, we propose such an algorithm, called the Lower-Complexity NSB (LC-NSB). 
\high{Similarly as in NSB, we consider frames each consisting of three consecutive time-slots.}
Recall that $U_i(k)$ indicates whether node $i$ was matched in the previous time-slot (or in both of the previous two time-slots),
as defined in Eq.~(\ref{eq:U}). Also, recall that $\mC(k)$ and $\mH(k)$ denote the set of critical nodes and 
the set of heavy nodes in time-slot $k$, respectively. In time-slot $k$, we assign a weight to node $i$ as
\begin{equation}
\label{eq:weight_lc}
w_i(k) \triangleq
\begin{cases}
5-2U_i(k) &\text{if $i \in \mC(k)$}; \\
4-2U_i(k) &\text{if $i \in \mH(k) \backslash \mC(k)$}; \\
1  &\text{otherwise}.
\end{cases} 
\end{equation}
%\high{
%\begin{equation}
%\label{eq:weight_lc}
%w_i(k) \triangleq
%\begin{cases}
%3-U_i(k) &\text{if $i \in \mC(k)$}; \\
%2-U_i(k) &\text{if $i \in \mH(k) \backslash \mC(k)$}; \\
%1  &\text{otherwise}.
%\end{cases} 
%\end{equation}
%}

Then, the LC-NSB algorithm finds an MVM based on the assigned node weight $w_i(k)$'s in every time-slot. 
Note that LC-NSB has a very similar way of assigning the node weights as NSB. However, the key difference 
is that we now divide all the nodes into five priority groups by assigning the node weights only based on 
whether it is a heavy (or critical) node and whether it was scheduled in the previous time-slot(s), while in 
the NSB algorithm, the actual workload is used in the weight assignments. This slight yet crucial change 
leads to a lower-complexity algorithm with the same performance guarantees. Note that in Eq.~(\ref{eq:weight_lc}), 
we give a higher priority to the critical nodes in order to guarantee the evacuation time performance.
\HIGH{The proof follows a similar line of analysis to that for the NSB algorithm and is provided in the appendix for completeness.}

\begin{theorem}
\label{thm:lc_nsb}
The LC-NSB algorithm has an approximation ratio no greater than $3/2$ for the evacuation time and has an 
efficiency ratio no smaller than $2/3$ for the throughput. 
Moreover, the LC-NSB algorithm is both throughput-optimal and evacuation-time-optimal in bipartite graphs.
\end{theorem}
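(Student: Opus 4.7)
The plan is to reuse the proofs of Theorems~\ref{thm:nsb_et}, \ref{thm:nsb_throughput}, and \ref{thm:nsb_opt} almost verbatim, exploiting the observation emphasized at the start of Section~\ref{sec:lc-nsb} that those proofs never use the numerical values of the weights $w_i(k)$ and only use the priority ordering they induce, through Lemma~\ref{lem:path}. So the work reduces to (i) verifying that the LC-NSB weights in Eq.~\eqref{eq:weight_lc} induce an ordering compatible with the NSB ordering, and (ii) invoking the $O(m\sqrt{n})$ MVM routine of \cite{huang12,pettie12} that is applicable because the weights lie in $\{1,2,3,4,5\}$.

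First I would unroll Eq.~\eqref{eq:weight_lc} into a five-tier priority structure, from highest to lowest: (i) critical nodes with $U_i(k)=0$, at weight $5$; (ii) heavy non-critical nodes with $U_i(k)=0$, at weight $4$; (iii) critical nodes with $U_i(k)=1$, at weight $3$; (iv) heavy non-critical nodes with $U_i(k)=1$, at weight $2$; and (v) everybody else, at weight $1$. Then I would check that the NSB weights of Eq.~\eqref{eq:weight} produce the same coarsened ordering: a critical $U{=}0$ node carries weight $2\Delta(k)$; a heavy non-critical $U{=}0$ node carries weight $2Q_i(k)\in[2\tfrac{n-1}{n}\Delta(k),2\Delta(k))$; a critical $U{=}1$ node carries weight $\Delta(k)$; a heavy non-critical $U{=}1$ node carries weight $Q_i(k)\in[\tfrac{n-1}{n}\Delta(k),\Delta(k))$; and any remaining node carries weight strictly less than $\tfrac{n-1}{n}\Delta(k)$. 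The threshold $\alpha=(n-1)/n$ was chosen exactly so that these five NSB intervals do not interleave, which matches the LC-NSB tier order. Since each NSB invocation of Lemma~\ref{lem:path} in Sections~\ref{subsec:nsb_et}--\ref{subsec:bipartite} applies to a value of $s$ lying on a tier boundary (for instance, $s=|\mC(k)|$ to match the critical unscheduled nodes, or $s=|\mH(k)|$ restricted to unscheduled nodes), the identity of the ``$s$ heaviest nodes'' is the same under both weight rules.

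With the ordering equivalence established, I would re-run the three NSB arguments. The evacuation-time bound transfers directly: inside a frame with $\Delta(p)\ge 2$, an MVM under the LC-NSB weights still contains every critical node left unserved in the frame, thanks to Lemma~\ref{lem:existence} (which produces a feasible matching of the relevant heavy node set) and Lemma~\ref{lem:path} (which forces the MVM to contain such a matching), so Proposition~\ref{pro:nsb} and hence Theorem~\ref{thm:nsb_et} survive unchanged. The throughput proof transplants in the same way: the fluid-critical set $\mC$ of Eq.~\eqref{eq:critical} is unchanged, its members still occupy the topmost LC-NSB tier throughout the interval $(t,t+\delta)$, the analogue of Lemma~\ref{lem:2in3} yields the same $\diff V(q(t))\le -\epsilon$, and weak stability of the fluid model gives rate stability via Lemma~\ref{lem:weak2rate}. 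Bipartite optimality then follows by the same route as in Theorem~\ref{thm:nsb_opt}, in which Lemma~\ref{lem:existence} is applied to the bipartite induced subgraph on the heavy set.

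The main obstacle I expect is careful bookkeeping of the $U_i(k)$ transitions at frame boundaries: within a frame, a critical node's tier can change from (i) to (iii) depending on whether it was served in earlier slots of the frame, and I must verify that every invocation of Lemma~\ref{lem:path} in the NSB proofs is being applied to a prefix of the LC-NSB order, not to a set that straddles a tier. This is routine case-checking rather than a new mathematical difficulty, which is precisely why the weight coarsening buys a $\log n$ factor in complexity without costing anything in the performance guarantees.
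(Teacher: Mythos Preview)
Your approach matches the paper's: both reduce Theorem~\ref{thm:lc_nsb} to the NSB proofs by observing that those proofs use the node weights only through Lemma~\ref{lem:path}, so it suffices that the LC-NSB weights induce a compatible priority ordering. The paper is terser---it simply restates the LC-NSB analogues of Proposition~\ref{pro:nsb} and Lemma~\ref{lem:2in3} and indicates which numerical bounds to substitute (replace $2\Delta(p)$ and $2(\Delta(p)-1)$ by $5$ in the evacuation argument, the fluid-limit weight bounds by $3$ in the throughput argument, and $(n-1)/n\cdot\Delta(k)$ by $2$ in the bipartite argument)---whereas you spell out the five-tier correspondence explicitly.

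One subtlety in the throughput part deserves care. Your assertion that the fluid-critical nodes ``occupy the topmost LC-NSB tier'' overstates things: conditions (C1*)--(C2*) together with Eq.~\eqref{eq:heavy} only guarantee that nodes of $\mC$ are \emph{heavy} in the original system, not critical, so a node of $\mC^{\prime}$ may carry LC-NSB weight $4$ rather than $5$; and (C2*) by itself does not force a node outside $\mC$ to be non-heavy, so $\mC^{\prime}$ need not coincide with any LC-NSB tier boundary. The clean fix is to apply Lemma~\ref{lem:path} not to $\mC^{\prime}$ but to the (possibly larger) set of \emph{all} heavy nodes with $U_i=0$ at slot $p{+}1$: that set is independent by maximality of $M(p)$, hence trivially bipartite, so Lemma~\ref{lem:existence} still supplies a covering matching and Lemma~\ref{lem:path} forces the MVM to cover it---in particular covering $\mC^{\prime}$. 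The analogous enlargement handles $\mC^{\prime\prime}$ at slot $p{+}2$. This is exactly the ``routine case-checking'' you anticipated, and the paper's own terse substitution glosses over the same point.
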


%\begin{proof}
%The proof follows a similar line of the proofs 
%%of Theorems~\ref{thm:nsb_et}, \ref{thm:nsb_throughput}, and \ref{thm:nsb_opt} 
%for the NSB algorithm and is thus omitted.
%\end{proof}

\emph{Remark:} Although the LC-NSB algorithm can provide the same performance guarantees as NSB, 
we would expect that LC-NSB may have (slightly) worse empirical performance compared to NSB, since 
NSB has a more fine-grained priority differentiation among all the nodes. We indeed make such observations 
in our simulation results in Section~\ref{sec:sim}. In order to improve the empirical performance, we can 
introduce more priority groups for the non-heavy nodes under LC-NSB rather than all being in the same 
priority group (of weight 1 as in Eq.~(\ref{eq:weight_lc})). As long as the number of priority groups is 
a bounded integer independent of $n$ and $m$, the complexity remains $O(m \sqrt{n})$.

%%%%%%%%%%%%%%%%%%%%%%%%%%%%%%%%%%%%%%%%%%%%%%%%%%%%%%%%%%%%%
\section{Numerical Results} \label{sec:sim}
%%%%%%%%%%%%%%%%%%%%%%%%%%%%%%%%%%%%%%%%%%%%%%%%%%%%%%%%%%%%%
In this section, we conduct numerical experiments to elucidate our theoretical results. We also compare the empirical 
performance of our proposed NSB and LC-NSB algorithms with several most relevant algorithms as listed in Table~\ref{tab:com}.

\subsection{Throughput Performance}\label{subsec:throughput}
\high{To evaluate the throughput performance, we run the simulations on three different network topologies as shown in Fig.~\ref{fig:topologies}.}
We first focus on a randomly generated triangular mesh topology with 30 nodes and 79 links as shown in Fig.~\ref{fig:triangle_top}. 
The simulations are implemented using C++. We assume that the arrivals are \emph{i.i.d.} over all the links with unit capacity. 
The mean arrival rate of each link is $\lambda$, and the instantaneous arrivals to each link follow a Poisson distribution in each 
time-slot. In Fig.~\ref{fig:triangle_result}, we plot the average total queue length in the system against the arrival rate $\lambda$. 
We consider several values of $\lambda$ as indicated in Fig.~\ref{fig:triangle_result}. For each value of $\lambda$, the average 
total queue length is an average of 10 independent simulations. Each individual simulation runs for a period of $10^5$ time-slots. 
We compute the average total queue length by excluding the first $5 \times 10^4$ time-slots in order to remove the impact of the 
initial transient state. Note that this network topology contains odd-size cycles. Hence, our proposed NSB and LC-NSB only 
guarantee to achieve $2/3$ of the optimal throughput. However, the simulation results in Fig.~\ref{fig:triangle_result} show 
that NSB and LC-NSB algorithms both empirically achieve the optimal throughput performance. 
\HIGH{
This is because the odd-size cycles (i.e., all the triangles in this case) do not form the bottlenecks in this setting. 
For example, a triangle requires $\lambda \le 1/3$ because at most one of its three links can be scheduled in each time-slot.
However, in Fig.~\ref{fig:triangle_top} there exists a node touched by seven links, which requires $\lambda \le 1/7$. 
As the load $\lambda$ increases, such a node will become congested sooner than any triangle and thus forms a scheduling bottleneck.
}

\high{We also conduct simulations for a 4$\times$4 grid topology with 16 nodes and 24 links (Fig.~\ref{fig:grid_top}) 
and a randomly generated dense topology with 100 nodes and 248 links (Fig.~\ref{fig:random_top}).
All the other simulation settings are the same as that for the triangular mesh topology.  
The simulation results are presented in Fig.~\ref{fig:results_poi}.
%The random topology is generated in a way such that the degrees of any two nodes differ by at most one.
%In such cases the difference of the performance for link based algorithm (MWM and Greedy) and node based algorithms (MVM, NSB, and LCNSB) is larger than that in the triangular mesh topology.
}
The observations we make are similar to that for the triangular 
mesh topology, except that in the grid topology, LC-NSB has higher delays when the mean arrival rate approaches 
the boundary of the optimal throughput region. This is due to the following two reasons: 1) Under LC-NSB, all the 
non-heavy nodes are in the same priority group of weight 1, so a non-heavy node with a large workload would have
a similar chance of getting scheduled as a non-heavy node with a small workload; 2) In the grid topology, the bottleneck
nodes (i.e., the four nodes in the center) are all adjacent. Note that in most cases, at least one of the bottleneck nodes
is the critical node and will have the highest priority. If another bottleneck node is adjacent to the critical node but is not
a heavy node, this bottleneck node will get a lower chance of being scheduled, since the critical node could be matched
with other node. This inefficiency does not occur under NSB, since the node weights of the non-heavy nodes are their
workload and are more fine-grained. Hence, a non-heavy node with a large workload will have a higher priority than a 
non-heavy node with a small workload.

\HIGH{
In order to evaluate the throughput performance under different arrival patterns,
we also run simulations for file arrivals. Specifically, the arrivals have the following pattern: in each time-slot, there 
is a file arrival with probability $p$, and no file arrival otherwise; the file size follows Poisson distribution with mean $\lambda/p$.
The simulation results for $p=0.1$ are presented in Fig.~\ref{fig:results_file}.
Similar observations to that for Poisson arrivals can be made, except that all the algorithms have larger delays due to 
a more bursty arrival pattern. 
In addition, we also consider more realistic arrival patterns where the arrivals in each time-slot follow the Zipf law, 
which is commonly used to model the Internet traffic \cite{feldmann04}. We assume a support of \HIGH{$[0,1,\dots,999]$} 
for the Zipf distribution. 
The power exponent of the Zipf distribution is determined based on the mean arrival rate $\lambda$. 
The simulation results are presented in Fig.~\ref{fig:results_zipf}. 
The overall observations are again similar to that for the previous arrival patterns.

Finally, it is remarkable that in all simulation settings we consider, our proposed node-based algorithm NSB empirically 
achieves the best delay performance. When the traffic load is high, NSB even results in a significant reduction (10\%-30\%) 
in the average delay performance compared to the link-based algorithms such as MWM (e.g., in Fig.~\ref{fig:random_result_zipf} 
for a random topology with Zipf arrivals, the delay reduction is about 30\% when $\lambda=0.18$).
Although NSB ties with another node-based algorithm MVM for the empirical delay performance, as we discussed in the introduction, 
the throughput performance of MVM is not well understood yet.
}

\high{
\subsection{Evacuation Time Performance}
In this subsection, we evaluate the evacuation time performance of our proposed algorithms.
As we discussed in the introduction, the minimum evacuation time problem is equivalent to 
the classic multigraph edge coloring problem, for which there are common benchmarks.
%Therefore, we first run simulations for six DIMACS benchmark graphs~\cite{dimacs} and
%regular multigraphs.
Therefore, we run simulations for six DIMACS benchmark instances~\cite{dimacs} for the graph coloring problem. 
In addition, we also run simulations for three regular multigraphs, \HIGH{three random multigraphs,} and one special graph 
(Fig.~\ref{fig:graphN} with $N=100$).

The evacuation time performance for each of the considered algorithms under each graph is presented 
in Table~\ref{tab:benchmark}. 
The simulation results show that all the algorithms have very similar evacuation time performance for the 
considered benchmark graphs and regular multigraphs, although they have different theoretical guarantees.
For the special graph in Fig.~\ref{fig:graphN}, we can observe that the node-based algorithms exhibit a much 
better evacuation time performance compared to the link-based algorithms \HIGH{(e.g., MWM and GMM)}. 
Specifically, the link-based algorithms require about twice as much time as that of the node-based algorithms 
to evacuate all initial packets in the network.
}
%Besides the benchmark graphs, we also present a setting where the node-based algorithms
%have a much better evacuation time performance compared to the link-based algorithms.
%The network topology is shown in Fig~\ref{fig:evacuation}, where $N=100$.
%The simulation results are presented in Fig.~\ref{tab:evacuation}. As can be seen from the
%table in Fig.~\ref{tab:evacuation}, 
%the link-based algorithms require about twice as much time as that of the node-based
%algorithms to evacuate all the packets.
%The results in Table~\ref{tab:evacuation} show that link based algorithms need twice as many time slots as node based algorithms to evacuate all the packets.

%We have obtained additional simulation results by considering different network topologies and different arrival processes.
%The results indicate that NSB and LC-NSB both have very good empirical performance in all the scenarios we consider. 
%Due to space limitations, we provide more simulation results in our online technical report \cite{infocom16_nsb_tr}.

%%%%%%%%%%%%%%%%%%%%%%%%%%%%%%%%%%%%%%%%%%%%%%%%%%%%%%%%%%%%%
\section{Conclusion} \label{sec:con}
%%%%%%%%%%%%%%%%%%%%%%%%%%%%%%%%%%%%%%%%%%%%%%%%%%%%%%%%%%%%%
In this paper, we studied the link scheduling problem for multi-hop wireless networks 
and focused on designing efficient online algorithms with provably guaranteed throughput and evacuation 
time performance. We developed two node-based service-balanced algorithms and showed that none of 
the existing algorithms strike a more balanced performance guarantees than our proposed algorithms in 
both dimensions of throughput and evacuation time. 
\HIGH{An important future direction is to consider more general models (which, e.g., allow for multi-hop traffic, 
general interference models, and time-varying channels). In such scenarios, it becomes much more challenging 
to provide provably good evacuation time performance.}

\begin{table}[t!]
%\begin{color}{blue}
\centering
\begin{tabular}{c|c c c c c c }
\hline
Graph & MWM  & GMM & MVM & NSB & LC-NSB & $\Delta$ \rule{0pt}{2ex}\rule[-1ex]{0pt}{0pt}\\ 
\hline
dsjc125.1 &	23&	23&	23&	23&	23&	23 \\
dsjc125.5 &	76&	75&	75&	75&	75	&	75\\
dsjc125.9 &	140&	120&	120&	120&	120 &	120\\
dsjc250.1 &	38&	40&	38&	38&	38	&	38\\
dsjc250.5 &	147&	153&	147&	147&	147 &	147\\
dsjc250.9 &	234&	234&	234&	234& 234&	234\\
\hline
regm50.20 &	20&	 25&	 20&	 20&	 20&	 20 \\
regm50.50 & 	50&	 55&	 51&	 51&	 51&	 50 \\
regm50.80 &	80&	 84&  80&	 80&	 80&	 80 \\
\hline
rand100.50 &	366&	 366& 366&  366&  366&  366 \\
rand100.100 & 	813&	 813&  813&  813&  813&  813 \\
rand100.250 &	2161&  2161&  2161&  2161&  2161&  2161 \\
\hline
Fig.~\ref{fig:graphN} &  199& 199& 101& 101& 101& 101 \\
\hline 
\end{tabular}
\centering
\caption{Evacuation time performance for six DIMACS benchmark graphs~\cite{dimacs}, three regular multigraphs,
and one special graph (Fig.~\ref{fig:graphN} with $N=100$). In the table, $\Delta$ denotes the maximum node degree,
dsjcX.Y denotes the label of the DIMACS benchmark graphs, regmX.Y denotes a regular mutigraph with X nodes and 
node degree Y, \HIGH{and randX.Y denotes a random topology in Fig.~\ref{fig:random_top} with X nodes and the number 
of multi-edges at each link being uniformly distributed over the interval [0,Y].}}
\label{tab:benchmark}
%\end{color}
\end{table}

\section{Proofs} \label{sec:proofs}

\subsection{Proof of Proposition~\ref{pro:nsb}}\label{sec:pro:nsb}
We first restate a useful result of \cite{anstee96} in Lemma~\ref{lem:delta}, which will be used 
in the proof of Proposition~\ref{pro:nsb}. \high{Throughout the paper, we assume that the multigraph
is loopless (i.e., there is no edge connecting a node to itself) unless explicitly mentioned.}

\begin{lemma}[Theorem~1 of \cite{anstee96}]
\label{lem:delta}
Let $\Graph$ be a loopless multigraph with maximum degree $\Delta$. Let $G_{\Delta}$ denote 
the subgraph of $\Graph$ \high{induced\footnote{\high{An induced subgraph of a graph is formed from 
a subset of the nodes of the graph and all of the edges whose endpoints are both in this subset.}}}
by all the nodes having maximum degree. If $\Graph_{\Delta}$ 
is bipartite, then there exists a matching over $\Graph$ that matches every node of maximum degree.
\end{lemma}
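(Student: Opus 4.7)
The plan is to prove Lemma~\ref{lem:delta} as a direct application of Lemma~\ref{lem:gf}, reusing the template already employed in the proof of Lemma~\ref{lem:existence}, but replacing its three-case contradiction argument with a one-line edge-count that is made available by the fact that every node of interest has degree \emph{exactly} $\Delta$ rather than merely at least $(n-1)/n \cdot \Delta$.

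First, let $Z \triangleq \{v \in \Vertex \mid d_{\Graph}(v) = \Delta\}$ denote the set of maximum-degree nodes. I would construct vectors $g$ and $f$ by setting $g_v = f_v = 1$ for $v \in Z$ and $g_v = 0$, $f_v = 1$ for $v \notin Z$. With this choice, any $(g,f)$-factor $F$ satisfies $d_F(v) \in \{0,1\}$ at every node and is therefore a matching of $\Graph$, while the constraint $d_F(v) \ge 1$ for $v \in Z$ forces this matching to cover every node of maximum degree. Furthermore, $\Graph_{g=f}$ coincides with the induced subgraph $\Graph_\Delta$, which is bipartite by hypothesis, so Lemma~\ref{lem:gf} is directly applicable; it therefore suffices to verify the inequality~\eqref{eq:gf} for every $S \subseteq \Vertex$.

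Next I would compute the two sides of~\eqref{eq:gf}. The left-hand side equals $|S|$ since $f_v = 1$ for every $v$. For the right-hand side, the same three-case case-analysis used in the proof of Lemma~\ref{lem:existence} gives $\sum_{v \notin S}[g_v - d_{\Graph - S}(v)]^+ = |Z'_{\bar S}|$, where $Z'_{\bar S} \triangleq \{v \in Z \setminus S \mid \text{every neighbor of } v \text{ lies in } S\}$. To establish $|S| \ge |Z'_{\bar S}|$, I would count edge-endpoints between $Z'_{\bar S}$ and $S$: by the definition of $Z'_{\bar S}$, every one of the $\Delta \cdot |Z'_{\bar S}|$ edge-endpoints incident to $Z'_{\bar S}$ has its partner endpoint lying in $S$; conversely, $S$ itself supports at most $\Delta \cdot |S|$ incident endpoints because every node has degree at most $\Delta$. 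Hence $\Delta \cdot |Z'_{\bar S}| \le \Delta \cdot |S|$, which yields $|Z'_{\bar S}| \le |S|$.

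The main simplification compared with Lemma~\ref{lem:existence} is that the clean equality $d_{\Graph}(v) = \Delta$ for $v \in Z$ lets the factor $\Delta$ cancel on both sides of the edge-counting bound, so no contradiction argument involving $n$ or the ratio $(n-1)/n$ is needed. The only place where the loopless hypothesis enters is in justifying the edge-count: a loop at some $v \in Z'_{\bar S}$ would make $v$ its own neighbor and so would require $v \in S$, contradicting $v \in \bar S$, which is precisely why the statement assumes $\Graph$ is loopless. With the inequality verified for arbitrary $S$, Lemma~\ref{lem:gf} produces the desired $(g,f)$-factor, and the proof is complete.
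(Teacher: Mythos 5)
Your proof is correct, but note that the paper itself offers no proof of this statement: Lemma~\ref{lem:delta} is imported verbatim as Theorem~1 of \cite{anstee96} and used as a black box. What you have done is re-derive it from the $(g,f)$-factor criterion of \cite{anstee90} (Lemma~\ref{lem:gf}), which is exactly the template the paper uses to prove its own Lemma~\ref{lem:existence}; your argument is that proof specialized to $Z=\{v \mid d_{\Graph}(v)=\Delta\}$, with the counting step simplified. The simplification is legitimate: because every node of $Z'_{\bar S}$ has degree exactly $\Delta$ and (by looplessness and the definition of $Z'_{\bar S}$) sends all $\Delta$ of its edge-endpoints into $S$, while $S$ can absorb at most $\Delta\lvert S\rvert$ endpoints, the factor $\Delta$ cancels and $\lvert Z'_{\bar S}\rvert \le \lvert S\rvert$ follows with no case analysis on $n$. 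Indeed, an even shorter route was available: since $d_{\Graph}(v)=\Delta \ge \frac{n-1}{n}\Delta$, the set of maximum-degree nodes satisfies hypothesis (i) of Lemma~\ref{lem:existence}, so Lemma~\ref{lem:delta} is an immediate corollary of Lemma~\ref{lem:existence} (and there is no circularity, since the proof of Lemma~\ref{lem:existence} does not use Lemma~\ref{lem:delta}). Two minor remarks: your cancellation step silently assumes $\Delta \ge 1$, which is harmless since for $\Delta=0$ the conclusion is vacuous or false and is excluded wherever the lemma is invoked ($\Delta(p)\ge 2$); and the real role of looplessness is to make the degree sum of $Z'_{\bar S}$ equal to the number of $Z'_{\bar S}$--$S$ edges (a loop would contribute $2$ to a degree while placing no endpoint in $S$), which your parenthetical about ``$v$ being its own neighbor'' gestures at but does not quite state.
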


Now, we are ready to prove Proposition~\ref{pro:nsb}. 
%The proof follows a similar argument as in the proof for MVM in \cite{ji15ciss}. 

\begin{proof}[Proof of Proposition~\ref{pro:nsb}]
First, note that in any time-slot, the network together with the present packets can be 
represented as a loopless multigraph.
Recall that $\Graph(k)$ denotes the multigraph at the beginning of time-slot $k$ and 
$M(k)$ denotes the matching found by the NSB algorithm in time-slot $k$. Also, recall 
that the degree of node $i$ in $\Graph(k)$ is equivalent to the node queue length 
$Q_i(k)$, and the maximum node degree of $\Graph(k)$ is equal to $\Delta(k)$.
\high{Now, consider any frame $k^{\prime}$ consisting of three consecutive time-slots 
$\{p, p+1, p+2\}$, where $p=3k^{\prime}$. 
Suppose that the maximum node queue length is no smaller than two at the beginning of 
frame $k^{\prime}$, i.e., $\Delta(p) \ge 2$ at the beginning of time-slot $p$.}
%Now, consider three consecutive time-slots $\{p, p+1, p+2\}$, where $p=3k^{\prime}$ 
%for some integer $k^{\prime} \ge 0$. 
%Suppose $\Delta(p) \ge 2$ at the beginning of time-slot $p$. 
Then, we want to show that under the NSB algorithm, the maximum degree will be 
at most $\Delta(p)-2$ at the end of time-slot $p+2$.
We proceed the proof in two steps: 
1) we first show that the maximum degree will decrease by at least one in the first 
two time-slots $p$ and $p+1$ (i.e., the maximum degree will be at most $\Delta(p)-1$
at the end of time-slot $p+1$),
and then, 2) show that if the maximum degree decreases by exactly one in the first 
two time-slots (i.e., the maximum degree is $\Delta(p)-1$ at the end of time-slot $p+1$), 
then the maximum degree must decrease by one in time-slot $p+2$, and becomes 
$\Delta(p)-2$ at the end of time-slot $p+2$. 

We start with step 1). It is a trivial case if the maximum degree decreases by one in time-slot $p$. 
Therefore, suppose the maximum degree does not decrease in time-slot $p$. Then, it suffices to 
show that all the nodes having maximum degree $\Delta(p)$ in $\Graph(p+1)$ must be scheduled in 
time-slot $p+1$ under the NSB algorithm. Note that matching $M(k)$ must be a maximal matching 
over $\Graph(k)$ for every time-slot $k$. Since $M(p)$ is a maximal matching, the nodes having 
maximum degree must form an independent set over $\Graph(p+1)$ at the beginning of time-slot 
$p+1$. We prove this by contradiction. 
\high{Note that if there is only one node having maximum degree at the beginning of time-slot $p+1$, 
then it is trivial that the subgraph induced by this single node must consist of this node itself only and 
thus forms an independent set.} 
So we consider the case where there are at 
least two nodes having maximum degree at the beginning of time-slot $p+1$. Suppose node $i$ 
and node $j$ are two adjacent nodes having maximum degree $\Delta(p)$ at the beginning of time-slot 
$p+1$. Then, none of the edges incident to either $i$ or $j$ was in matching $M(p)$. This implies 
that the edge between $i$ and $j$ can be added to matching $M(p)$ in time-slot $p$, which, however, 
contradicts the fact that $M(p)$ is a maximal matching. Therefore, the nodes having maximum degree 
must form an independent set at the beginning of time-slot $p+1$. 
\high{Clearly, the subgraph induced by all the nodes having maximum degree forms an independent set 
and thus has no edges. In this case, it is trivial that this induced subgraph is bipartite}. 
Then, by Lemma~\ref{lem:delta}, there exists a matching \high{over $\Graph(p+1)$} that matches all the 
nodes having maximum degree 
in time-slot $p+1$. Note that $M(p+1)$ is an MVM over $\Graph(p+1)$ with the assigned node weights 
(as in Eq.~(\ref{eq:weight})) under the NSB algorithm. It is also easy to see that all the nodes with 
maximum degree $\Delta(p)$ are among the ones with the heaviest weight, as they have a weight of 
$2 \Delta(p)$ and the weight of all the other nodes is less than $2 \Delta(p)$. Hence, it implies from 
Lemma~\ref{lem:path} that matching $M(p+1)$ also matches all the nodes having maximum degree, 
i.e., the maximum degree decreases by one in time-slot $p+1$. This completes the proof of step 1).

Now, we prove step 2).
Clearly, the maximum degree becomes $\Delta(p)-1$ at the beginning of time-slot $p+2$. Recall
that $\mC(p+2)$ denotes the set of critical nodes. We want to show that all the nodes in 
$\mC(p+2)$ will be matched in time-slot $p+2$. We first show that all the nodes in $\mC(p+2)$ 
are among the ones with the heaviest weights at the beginning of time-slot $p+2$. This is true 
due to the following. It is easy to see that for any node $i \in \mC(p+2)$, it was matched at most 
once in time-slots $p$ and $p+1$. Hence, according to the weight assignments in Eq. (\ref{eq:weight}), 
node $i$ has a weight of $2(\Delta(p)-1)$, while all the nodes in $\Vertex \backslash \mC(p+2)$ must 
have a degree less than $\Delta(p)-1$ and thus have a weight less than $2(\Delta(p)-1)$. Therefore, 
all the nodes in $\mC(p+2)$ are among the ones with the heaviest weights.

Let $\Graph_{\mC(p+2)}$ denote the subgraph of $\Graph(p+2)$ induced by all the nodes in 
$\mC(p+2)$. If $\Graph_{\mC(p+2)}$ is bipartite, then again by Lemmas~\ref{lem:path} 
and \ref{lem:delta}, following the same argument as in step 1), we can show that matching 
$M(p+2)$ matches all the nodes in $\Graph_{\mC(p+2)}$ in time-slot $p+2$. Therefore, it 
remains to show that $\Graph_{\mC(p+2)}$ is bipartite. We prove this by contradiction. 
Suppose $\Graph_{\mC(p+2)}$ contains an odd cycle, say $C$. Then, no two adjacent nodes 
of $C$ were matched by $M(p+1)$ in time-slot $p+1$. This is true due to the following. 
Suppose there exist two adjacent nodes of $C$, say $i$ and $j$, matched by $M(p+1)$. 
Since $i$ and $j$ are in $\mC(p+2)$ (i.e., their degree is $\Delta(p)-1$ in time-slot $p+2$), then
they both have maximum degree $\Delta(p)$ at the beginning of time-slot $p+1$. However, given 
that $i$ and $j$ are adjacent in $\Graph(p+2)$ (and are thus adjacent in $\Graph(p+1)$ as well), 
this contradicts what we have shown earlier -- the nodes of degree $\Delta(p)$ in $\Graph(p+1)$ 
form an independent set. Therefore, no two adjacent nodes of $C$ were matched by $M(p+1)$ 
in time-slot $p+1$. This, along with the fact that cycle $C$ is of odd size, implies that cycle $C$ 
must contain two adjacent nodes that were not matched by $M(p+1)$ in time-slot $p+1$. This 
further implies that the edge between these two adjacent nodes can be added to $M(p+1)$, 
which contradicts the fact that $M(p+1)$ is a maximal matching over $\Graph(p+1)$. Therefore, 
the induced subgraph $\Graph_{\mC(p+2)}$ must be bipartite. This completes the proof of step 2), 
as well as the proof of Proposition~\ref{pro:nsb}.
\end{proof}

\subsection{Proof of Lemma~\ref{lem:2in3}} \label{sec:lem:2in3}
%Now, we are ready to prove  Lemma~\ref{lem:2in3}.
%
%\begin{proof}[Proof of Lemma~\ref{lem:2in3}]
\begin{proof}
Recall that $\mC$ is the set of critical nodes in the fluid limits at scaled time $t$ (Eq.~(\ref{eq:critical})).
We want to show that under the NSB algorithm, all the nodes in $\mC$ will be scheduled at least 
twice within \high{each frame of interval $T$}.

First, recall that $j$ is large enough such that 
$\left \lvert Q_i(\xrj \tau)/\xrj - q_i(\tau) \right \rvert < \beta/2$ for all times
$\tau \in (t,t+\delta)$. Hence, from condition (C1) and (C2), the queue lengths 
in the original system satisfy the following conditions for all time-slots $k \in T$: 

(C1*) $Q_i(k) \in \xrj (q_{\max}(t)-\beta, q_{\max}(t)+\beta)$ for all $i \in \mC$;

(C2*) $Q_i(k) < \xrj (q_{\max}(t)-2\beta)$ for all $i \notin \mC$.

\noindent On account of condition (C1*) and Eq.~(\ref{eq:heavy}), all the 
nodes in $\mC$ are heavy nodes in all the time-slots of $T$, i.e., 
$Q_i(k) \ge (n-1)/n \cdot \Delta(k)$ for all $i \in \mC$ and for all $k \in T$.

Note that in any time-slot, the network together with the present packets can be mapped to a 
multigraph, where each multi-edge corresponds to a packet. Recall that we use $\Graph(k)$ to 
denote the multigraph at the beginning of time-slot $k$. Note that if there are no packets waiting 
to be transmitted over a link, no multi-edge connecting the end nodes of this link will appear in 
$\Graph(k)$. Also, recall that $M(k)$ denotes the matching found by the NSB algorithm in time-slot $k$. 
\high{Now, consider any frame $k^{\prime}$ of interval $T$ consisting of three consecutive time-slots 
$\{p, p+1, p+2\}$, where $p=3k^{\prime}$.} 
%Now, consider three consecutive time-slots $\{p, p+1, p+2\}$ of $T$, where $p=3k^{\prime}$ for 
%some integer $k^{\prime} \ge 0$. 
We want to show that under the NSB algorithm, every node in 
$\mC$ will get scheduled in at least two time-slots of $\{p, p+1, p+2\}$. We proceed the proof in 
two steps: 1) we first show that all the nodes in $\mC$ will be scheduled at least once in the first 
two time-slots $p$ and $p+1$, and 2) then show that all the nodes in $\mC$ that were scheduled 
exactly once in the first two time-slots, will get scheduled in time-slot $p+2$. 

We start with step 1). Let $\mC^{\prime}$ denote the set of nodes in $\mC$ that were not 
scheduled in time-slot $p$. It is a trivial case if $\mC^{\prime} = \emptyset$. Therefore, 
suppose $\mC^{\prime} \neq \emptyset$, i.e., there exists at least one node in $\mC$ that 
was not scheduled in time-slot $p$. Then, it suffices to show that all the nodes in $\mC^{\prime}$ 
must be scheduled in time-slot $p+1$ under the NSB algorithm. Note that matching $M(k)$ 
must be a maximal matching over $\Graph(k)$ for every time-slot $k$. Since $M(p)$ is 
a maximal matching, the nodes in $\mC^{\prime}$ must form an independent set at the 
beginning of time-slot $p+1$, excluding the multi-edges corresponding to the new packet 
arrivals at the beginning of time-slot $p+1$. 

Note that it is a trivial case if $\lvert \mC^{\prime} \rvert=1$. So we consider the case of 
$\lvert \mC^{\prime} \rvert \ge 2$ and prove it by contradiction. Suppose there exist two 
adjacent nodes $i, j \in \mC^{\prime}$. Then, none of the edges incident to either $i$ or 
$j$ was in matching $M(p)$. This implies that the multi-edge between $i$ and $j$ could 
be added to matching $M(p)$ in time-slot $p$, which, however, contradicts the fact that 
$M(p)$ is a maximal matching. Therefore, the nodes in $\mC^{\prime}$ must form an 
independent set at the beginning of time-slot $p+1$. 
%Clearly, the subgraph induced by all the nodes in $\mC^{\prime}$ (which form an independent set) is bipartite. 
\high{Clearly, the subgraph induced by all the nodes in $\mC^{\prime}$ forms an independent set 
and thus has no edges. In this case, it is trivial that this induced subgraph is bipartite.} 
Note that conditions (C1*) and (C2*) still hold even without accounting for the new packet arrivals. 
Then, by Lemma~\ref{lem:existence}, there exists a matching that matches all the nodes 
in $\mC^{\prime}$ at the beginning of time-slot $p+1$ before new packet arrivals. Clearly, 
such a matching still exists even if the multi-edges corresponding to the newly arrived 
packets in time-slot $p+1$ are added to the grpah. Note that $M(p+1)$ is an MVM over 
$\Graph(p+1)$ with the assigned weights (as in Eq.~(\ref{eq:weight})). Now, if all the 
nodes in $\mC^{\prime}$ are among the ones with the heaviest weights, then it implies 
from Lemma~\ref{lem:path} that matching $M(p+1)$ also matches all the nodes in 
$\mC^{\prime}$. This is indeed true due to conditions (C1*) and (C2*), as well as the 
weight assignments in Eq.~(\ref{eq:weight}): every node in $\mC^{\prime}$ was not 
scheduled in time-slot $p$, and thus has a weight larger than $2\xrj(q_{\max}(t)-\beta)$, 
while any node in $\Vertex \backslash \mC^{\prime}$ cannot have a weight larger than 
$\max \{2\xrj(q_{\max}(t)-2\beta), \xrj(q_{\max}(t)+\beta)\}$.

Now, we prove step 2).
Let $\mC^{\prime \prime}$ denote the set of nodes in $\mC$ that were scheduled exactly 
once in time-slots $p$ and $p+1$. We want to show that all the nodes in $\mC^{\prime \prime}$ 
will get scheduled in time-slot $p+2$. Note that all the nodes in $\mC^{\prime \prime}$ are 
among the ones with the heaviest weights. This is true due to conditions (C1*) and (C2*), 
as well as the weight assignments in Eq.~(\ref{eq:weight}): every node in $\mC^{\prime \prime}$ 
was scheduled exactly once in time-slots $p$ and $p+1$, and thus has a weight larger than 
$2\xrj(q_{\max}(t)-\beta)$, while any node in $\Vertex \backslash \mC^{\prime \prime}$ cannot 
have a weight larger than $\max \{2\xrj(q_{\max}(t)-2\beta), \xrj(q_{\max}(t)+\beta)\}$. Further, 
let $\Graph_{\mC^{\prime \prime}}$ denote the subgraph induced by all the nodes in 
$\mC^{\prime \prime}$ at the beginning of time-slot $p+2$, excluding all the multi-edges 
corresponding to the packets that arrived in time-slot $p+1$ and $p+2$. 
If $\Graph_{\mC^{\prime \prime}}$ is bipartite, then again by Lemmas~\ref{lem:existence} 
and \ref{lem:path}, following the same argument as in step 1), we can show that all the 
nodes in $\mC^{\prime \prime}$ are matched by $M(p+2)$ in time-slot $p+2$. Therefore, 
it remains to show that $\Graph_{\mC^{\prime \prime}}$ is bipartite. 

Next, we prove that $\Graph_{\mC^{\prime \prime}}$ is bipartite by contradiction. Suppose 
$\Graph_{\mC^{\prime \prime}}$ contains an odd cycle, say $C$. Then, no two adjacent 
nodes of $C$ were matched by $M(p+1)$ in time-slot $p+1$. This is true due to the following. 
Suppose there exist two adjacent nodes of $C$, say $i$ and $j$, matched by $M(p+1)$. 
Since $i$ and $j$ are in $\mC^{\prime \prime}$, both of them were matched exactly once in 
time-slots $p$ and $p+1$ from the definition of $\mC^{\prime \prime}$. This implies that both 
$i$ and $j$ were not matched in time-slot $p$, i.e., we have $i,j \in \mC^{\prime}$. However,
given that $i$ and $j$ are adjacent, this contradicts what we have shown earlier -- the nodes 
in $\mC^{\prime}$ form an independent set. Therefore, no two adjacent nodes of $C$ were 
matched by $M(p+1)$ in time-slot $p+1$. This, along with the fact that cycle $C$ is of odd size, 
implies that cycle $C$ must contain two adjacent nodes that were not matched by $M(p+1)$ 
in time-slot $p+1$. This further implies that the multi-edge between these two adjacent nodes 
can be added to $M(p+1)$, which contradicts the fact that $M(p+1)$ is a maximal matching 
over $\Graph(p+1)$. Therefore, the induced subgraph $\Graph_{\mC^{\prime \prime}}$ must 
be bipartite. This completes the proof of step 2) and that of Lemma~\ref{lem:2in3}.
\end{proof}

\bibliographystyle{IEEEtran}
\bibliography{nsb}

\begin{IEEEbiography}{Yu Sang}
received his B.E. degree in Electronic Information Engineering from
University of Science and Technology of China in 2014. He is currently
a Ph.D. student in the CIS department at Temple University. His research
interests are in wireless networks and cloud computing systems.
\end{IEEEbiography}

\begin{IEEEbiography}{Gagan R. Gupta}
%\begin{IEEEbiography}[{\includegraphics[width=1in,height=1.25in,clip,keepaspectratio]{Figs/gagan}}]{Gagan R. Gupta}
received the Bachelor of Technology degree in Computer Science and 
Engineering from the Indian Institute of Technology, Delhi, India in 
2005. He received the M.S. degree in Computer Science from University 
of Wisconsin, Madison in 2006. He received his Ph.D. degree from Purdue 
University, West Lafayette, Indiana in 2009 for his dissertation titled 
``delay efficient control policies for wireless networks." His research 
interests are in performance modeling and optimization of communication 
networks and parallel computing. He is currently working in the telecommunications 
industry.
\end{IEEEbiography}

\begin{IEEEbiography}{Bo Ji}(S'11-M'12)
%\begin{IEEEbiography}[{\includegraphics[width=1in,height=1.25in,clip,keepaspectratio]{Figs/BoJi}}]{Bo Ji}(S'11-M'12)
received his B.E. and M.E. degrees in Information Science and Electronic 
Engineering from Zhejiang University, China in 2004 and 2006, respectively. 
He received his Ph.D. degree in Electrical and Computer Engineering from 
The Ohio State University in 2012. He is currently an assistant professor of 
the CIS department at Temple University, Philadelphia, PA, and is also a 
faculty member of the Center for Networked Computing (CNC). Prior to joining 
Temple University, he was a Senior Member of Technical Staff with AT\&T Labs, 
San Ramon, CA, from January 2013 to June 2014. His research interests are 
in the modeling, analysis, control, and optimization of complex networked systems, 
such as communication networks, information-update systems, cloud/datacenter 
networks, and cyber-physical systems.
Dr. Ji received National Science Foundation (NSF) CAREER Award and 
NSF CISE Research Initiation Initiative (CRII) Award both in 2017.
\end{IEEEbiography}

\clearpage

\appendices

%\section{}
\section{Existence of Fluid Limits (Eqs.~\eqref{eq:fluid_a}-\eqref{eq:fluid_h})} \label{sec:fluid}
\begin{proof}
The proof follows a similar line of analysis in \cite{dai00}. Consider a fixed sample path $\omega$. 
For notational simplicity, in the following proof we omit the dependency on $\omega$.
Recall that a sequence of functions $f^{(s)}(\cdot)$ converges to $f(\cdot)$ uniformly 
on compact (\emph{u.o.c.}) intervals if for every $t \ge 0$, it is satisfied that 
$\lim_{s \rightarrow \infty} \sup_{0 \le t^{\prime} \le t} | f^{(s)}(t^{\prime}) - f(t^{\prime}) | = 0$.

First, for each $x>0$, we define the following:
\[
\begin{split}
a_i^{(x)}(t) &\triangleq \frac{A_i(xt)}{x}, \\
q_i^{(x)}(t) &\triangleq \frac{Q_i(xt)}{x}, \\
d_i^{(x)}(t) &\triangleq \frac{D_i(xt)}{x}, \\
h_M^{(x)}(t) &\triangleq \frac{H_M(xt)}{x}.
\end{split}
\]
Note that in each time-slot, only one matching can be chosen, and each node is scheduled at
most once. Hence, both $\{h_M^{(x)}(t)\}$ and $\{d_i^{(x)}(t)\}$ are a uniformly bounded sequence 
of functions with bounded Lipschitz constant. Specifically, for any $x>0$ and $0 \le t_1 \le t_2 \le t$, 
the following is satisfied:
\[
\begin{split}
h_M^{(x)}(t_2) - h_M^{(x)}(t_1) &= \frac{H_M(xt_2) - H_M(xt_1)}{x} \\
& \le \frac{x(t_2 - t_1)}{x} = t_2 - t_1, 
\end{split}
\]
and similarly,
\[
d_i^{(x)}(t_2) - d_i^{(x)}(t_1) \le t_2 - t_1.
\]
Then, the Arzela-Ascoli Theorem (e.g., see \cite{resnick13}) implies that for any positive sequence $\xr \rightarrow \infty$, 
there exists a subsequence $\xrj$ with $\xrj \rightarrow \infty$ as $j \rightarrow \infty$ and continuous
functions $h_M(\cdot)$ and $d_i(\cdot)$ such that for any $t \ge 0$,
\begin{eqnarray}
&& \lim_{j \rightarrow \infty} \sup_{0 \le t^{\prime} \le t} \left | h_M^{(\xrj)}(t^{\prime}) - h_M(t^{\prime}) \right | = 0, \label{eq:uoc_h} \\
&& \lim_{j \rightarrow \infty} \sup_{0 \le t^{\prime} \le t} \left | d_i^{(\xrj)}(t^{\prime}) - d_i(t^{\prime}) \right | = 0. \label{eq:uoc_d}
\end{eqnarray}
Also, it is easy to see that 
\begin{equation}
\label{eq:uoc_a}
\lim_{x \rightarrow \infty} \sup_{0 \le t^{\prime} \le t} \left | a_i^{(x)}(t^{\prime}) - \lambda_i t^{\prime} \right | = 0 
\end{equation}
for all the sample paths that satisfy the SLLN assumption (i.e., Eq.~\eqref{eq:slln_l}).
By combining Eqs.~\eqref{eq:uoc_d} and \eqref{eq:uoc_a}, we have that, for almost all sample 
paths (i.e., those that satisfy the SLLN assumption) and for any positive sequence $\xr \rightarrow \infty$, 
there exists a subsequence $\xrj$ with $\xrj \rightarrow \infty$ as $j \rightarrow \infty$
and continuous function $q_i(\cdot)$ such that for any $t \ge 0$, 
\[
\lim_{j \rightarrow \infty} \sup_{0 \le t^{\prime} \le t} \left | q_i^{(\xrj)}(t^{\prime}) - q_i(t^{\prime}) \right | = 0,
\]
and Eqs.~\eqref{eq:uoc_h} and \eqref{eq:uoc_d} hold.
This completes the proof of the convergence in Eqs.~\eqref{eq:fluid_a}-\eqref{eq:fluid_h}.
\end{proof}

\section{Proof of Theorem~\ref{thm:nsb_opt}} \label{sec:thm:nsb_opt}
\begin{proof}
Suppose that the underlying network graph is bipartite.
We first show that NSB achieves evacuation time optimality. Recall that for a given network 
with initial packets waiting to be transmitted, $\Delta(0)$ denotes the maximum node degree at the 
very beginning, and $\ChromaticIndex$ denotes the minimum evacuation time. If the underlying network is bipartite, 
there are no odd-size cycles, and we have $\ChromaticIndex=\Delta(0)$ (e.g., see \cite{konig16,kapoor00,guptathesis}). 
Hence, if the maximum degree decreases by one in every time-slot, the minimum evacuation time can be achieved. 
Therefore, we want to show that in every time-slot, all the critical nodes are matched under NSB.
%Next, we want to show that this is indeed the case.

Consider any time-slot $k$. Since the underlying network graph is bipartite, it is obvious that the subgraph
induced by all the heavy nodes is also bipartite. By Lemma~\ref{lem:existence}, we have that there exists 
a matching that matches all the heavy nodes. Due to the weight assignment rule in Eq.~\eqref{eq:weight},
we have $w_i(k) > w_{i^\prime}(k)$ for any $i \in \mH(k)$ and any $i^{\prime} \notin \mH(k)$. This is because
the weight of any heavy node is equal to either its queue length or twice its queue length, which is at least 
$(n-1)/n \cdot \Delta(k)$, i.e., $w_i(k) \ge (n-1)/n \cdot \Delta(k)$ for any $i \in \mH(k)$, and the weight of 
any non-heavy node is equal to its queue length, which is less than $(n-1)/n \cdot \Delta(k)$, i.e., 
$w_{i^{\prime}}(k) < (n-1)/n \cdot \Delta(k)$ for any $i^{\prime} \notin \mH(k)$. 
Then, due to Lemma~\ref{lem:path}, NSB matches all the heavy nodes, including all the critical nodes.
Therefore, NSB is evacuation-time-optimal.

Next, we show that NSB is throughput-optimal.
The analysis follows a similar line of argument as in the proof of Theorem~\ref{thm:nsb_throughput} 
for general graphs. We now want to show that for any given arrival rate vector $\lambda$ strictly inside 
$\Lambda^*$, the system is rate stable under the NSB algorithm. Note that $\lambda$ is also strictly 
inside $\Psi$ (i.e., $\lambda_i < 1$ for all $i \in \Vertex$) since $\Lambda^* \subseteq \Psi$.
(In fact, we have $\Lambda^* \subseteq \Psi$ for bipartite graphs.) 
We define $\epsilon \triangleq \min_{i \in \Vertex} (1-\lambda_i)$. Clearly, we must have $\epsilon > 0$. 

Similarly, we proceed the proof using the fluid limit technique. 
Recall that existence of fluid limits (i.e., Eqs.~\eqref{eq:fluid_a}-\eqref{eq:fluid_h}) has been shown 
in Section~\ref{sec:fluid}. We also have the fluid model equations (i.e., Eqs.~\eqref{eq:qqld}-\eqref{eq:diff}).
Then, to show rate stability of the original system, it suffices to show weak stability of the fluid model 
due to Lemma~\ref{lem:weak2rate}. Recall that the Lyapunov function is defined as 
$V(q(t)) = \max_{i \in \Vertex} q_i(t)$.
We want to show that if $V(q(t))>0$ for $t>0$, then $V(q(t))$ has a negative drift. 
Specifically, we want to show that for all regular times $t>0$, if $V(q(t)) > 0$, we have $\diff V(q(t)) \le - \epsilon$.

Recall that $\mC$ is the set of critical nodes in the fluid limits at scaled time $t$ (Eq.~(\ref{eq:critical})),
$\xrj$ denotes a positive subsequence for which the convergence to the fluid limit holds,
and $T \triangleq \{\lceil \xrj t \rceil, \lceil \xrj t \rceil + 1, \dots, \lfloor \xrj (t+\delta) \rfloor\}$ denotes 
a set of consecutive time-slots in the original system corresponding to the scaled time interval $(t,t+\delta)$ 
in the fluid limits, where $\delta$ is a small enough positive number. Now, if we can show that under 
the NSB algorithm, all the nodes in $\mC$ will be scheduled in every time-slot of interval $T$, i.e.,
for all $i \in \mC$, we have $\sum_{M \in \Matching} \sum_{l \in L(i)} M_l \cdot (H_M(\xrj (t+\delta))-H_M(\xrj t)) 
= \lfloor \xrj (t+\delta) \rfloor - \lceil \xrj t \rceil + 1$, then we can show $\sum_{M \in \Matching} \sum_{l \in L(i)} M_l \cdot \diff h_M(t) = 1$ (similar to Eq.~\eqref{eq:dh}),
and thus, it follows from Eq. (\ref{eq:diff}) that for all $i \in \mC$, we have $\diff q_i(t) \le \lambda_i - 1 \le -\epsilon$.

We know from the proof of Lemma~\ref{lem:2in3} that all the nodes in $\mC$ are heavy nodes in 
every time-slot of $T$. Then, following our earlier proof of evacuation time optimality and using 
Lemmas~\ref{lem:existence} and \ref{lem:path}, we can show that all the nodes in $\mC$ will be 
scheduled in every time-slot of interval $T$. This completes the proof of Theorem~\ref{thm:nsb_opt}.
\end{proof}

\section{Proof of Theorem~\ref{thm:lc_nsb}} \label{sec:thm:lc_nsb}
\begin{proof}
The proof is almost the same as that of Theorems~\ref{thm:nsb_et}, \ref{thm:nsb_throughput}, 
and \ref{thm:nsb_opt} for the NSB algorithm.
Hence, in the following we mainly focus on explaining the key differences of the proof and omit the details.

We first want to show that the LC-NSB algorithm has an approximation ratio no greater than $3/2$ 
for the evacuation time. From the proof of Theorem~\ref{thm:nsb_et}, it is not difficult to see that 
the result follows exactly if Proposition~\ref{pro:lc_nsb} (similar to Proposition~\ref{pro:nsb} 
for NSB) holds.

\begin{proposition}
\label{pro:lc_nsb}
Consider any frame.
Suppose the maximum node queue length is no smaller than two at the beginning of a frame. 
Under the LC-NSB algorithm, the maximum node queue length decreases by at least two 
by the end of the frame.
\end{proposition}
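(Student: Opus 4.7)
My plan is to mirror the two-step structure of the proof of Proposition~\ref{pro:nsb}, substituting the LC-NSB weight assignment from Eq.~\eqref{eq:weight_lc} for the NSB weight assignment in Eq.~\eqref{eq:weight}. Represent the network and the packets at the beginning of time-slot $k$ as a loopless multigraph $\Graph(k)$, so that $Q_i(k)$ equals the degree of $i$ in $\Graph(k)$ and $\Delta(k)$ equals the maximum degree. Fix a frame $\{p,p+1,p+2\}$ with $\Delta(p) \ge 2$; the goal is to show $\Delta(p+3) \le \Delta(p)-2$. The crucial invariant I would rely on is that although LC-NSB uses only five distinct weights, the critical nodes whose scheduling matters to the argument always get the maximum weight $5$, strictly greater than the weight of every other node (which is at most $4$). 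This is what lets Lemma~\ref{lem:path} transfer.

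In step 1, I would show $\Delta(p+2) \le \Delta(p)-1$. The non-trivial case is when $\Delta(p+1) = \Delta(p)$, i.e., no max-degree node was scheduled in slot $p$. Then a node $i$ is critical in slot $p+1$ iff $i$ had degree $\Delta(p)$ at the start of $p$ and was not scheduled in $p$, so $R_i(p)=0$ and hence $U_i(p+1)=0$. By Eq.~\eqref{eq:weight_lc} each such node has weight $5$ under LC-NSB, and every other node has weight at most $4$. Exactly as in the NSB proof, the maximality of $M(p)$ forces $\mathcal{C}(p+1)$ to be an independent set in $\Graph(p+1)$, so the induced subgraph is trivially bipartite. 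Lemma~\ref{lem:existence} then produces a matching of $\Graph(p+1)$ that matches every critical node, and Lemma~\ref{lem:path} applied with $s = |\mathcal{C}(p+1)|$ guarantees that the MVM computed by LC-NSB matches them all, reducing the max degree by one.

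In step 2, I would assume $\Delta(p+2) = \Delta(p)-1$ and show $\Delta(p+3) \le \Delta(p)-2$. A node $i$ critical in slot $p+2$ has degree $\Delta(p)-1$, so it was scheduled at most once during $\{p,p+1\}$; hence $R_i(p)R_i(p+1)=0$ and $U_i(p+2)=0$ by Eq.~\eqref{eq:U}, giving weight $5$ while every other node has weight at most $4$. To invoke Lemmas~\ref{lem:existence} and \ref{lem:path} I would then show that the subgraph $\Graph_{\mathcal{C}(p+2)}$ induced by the critical nodes is bipartite, via the same odd-cycle contradiction used in the NSB proof: if an odd cycle $C$ existed, no two adjacent nodes of $C$ could have both been matched by $M(p+1)$ (either $\Delta(p+1) < \Delta(p)$, in which case two such nodes would have had degree $\Delta(p)$ in slot $p+1$, a contradiction, or $\Delta(p+1)=\Delta(p)$, in which case the max-degree independent-set property from step~1 yields the contradiction), and because $|C|$ is odd some adjacent pair is unmatched by $M(p+1)$, contradicting its maximality. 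With $\Graph_{\mathcal{C}(p+2)}$ bipartite, the same Lemma~\ref{lem:existence} plus Lemma~\ref{lem:path} argument produces $M(p+2)$ matching every critical node, completing the frame-wise decrease.

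The main obstacle is conceptual rather than technical: one must verify that LC-NSB's coarse five-level weighting preserves the essential ordering properties that drive the NSB proof. Once one pins down that in both step~1 and step~2 the relevant critical nodes always inherit the top weight of $5$ (because the $U_i$ value at the moment of interest is forced to be $0$), and that every non-critical node is capped at weight $4$, Lemma~\ref{lem:path} applies exactly as in the NSB case and the rest of the proof is a line-for-line translation. No new combinatorial ingredient beyond Lemmas~\ref{lem:path} and \ref{lem:existence} is required.
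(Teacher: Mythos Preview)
Your proposal is correct and follows essentially the same approach as the paper: the paper's proof of Proposition~\ref{pro:lc_nsb} literally instructs the reader to repeat the proof of Proposition~\ref{pro:nsb} verbatim, replacing the weight bounds $2\Delta(p)$ and $2(\Delta(p)-1)$ by the constant $5$, which is exactly the substitution you carry out. The only cosmetic difference is that the paper's evacuation-time argument (Proposition~\ref{pro:nsb}) cites Lemma~\ref{lem:delta} rather than the more general Lemma~\ref{lem:existence} to obtain the matching covering the critical nodes; since critical nodes are in particular heavy, either lemma works and your argument goes through unchanged.
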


The proof of Proposition~\ref{pro:lc_nsb} is the same as that of Proposition~\ref{pro:nsb},
except that we need to replace both $2\Delta(p)$ and $2(\Delta(p)-1)$ in the proof 
of Proposition~\ref{pro:nsb} with 5. This is due to the new way of assigning the 
node weights in Eq.~\eqref{eq:weight_lc} under the LC-NSB algorithm.

Next, we want to show that the LC-NSB algorithm has an efficiency ratio no smaller than $2/3$ for the throughput.
From the proof of Theorem~\ref{thm:nsb_throughput}, it is not difficult to see that the result follows exactly 
if Lemma~\ref{lem:2in3_lc} (similar to Lemma~\ref{lem:2in3} for NSB) holds.
Recall that $\mC$ is the set of critical nodes in the fluid limits at scaled time $t$ (Eq.~(\ref{eq:critical})).

\begin{lemma}
\label{lem:2in3_lc}
Under the LC-NSB algorithm, all the nodes in $\mC$ will be scheduled at least twice within 
\high{each frame of interval $T$}.
\end{lemma}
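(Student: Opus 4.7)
The plan is to follow the two-step structure of the proof of Lemma~\ref{lem:2in3}, modifying the weight comparisons to accommodate the fact that LC-NSB uses only five discrete weight values in Eq.~\eqref{eq:weight_lc}. The central observation is that a node has weight at least $4$ if and only if it is heavy and satisfies $U_v(k)=0$, with a strict gap between weight $4$ and weight $3$. Hence the set of weight-at-least-$4$ nodes constitutes ``the $s$ heaviest'' for $s$ equal to its cardinality, and Lemma~\ref{lem:path} can be invoked with this clean cut, provided we can exhibit a matching covering it. I would run the same frame decomposition $\{p,p+1,p+2\}$ as in Lemma~\ref{lem:2in3} and prove separately that every node of $\mathcal{C}'$ is matched at $p+1$ and that every node of $\mathcal{C}''$ is matched at $p+2$.

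For Step~1, set $S_1 \triangleq \{v \in V : w_v(p+1) \ge 4\}$. Conditions (C1*) and (C2*) in the proof of Lemma~\ref{lem:2in3} ensure that every node of $\mathcal{C}$ is heavy in every slot of $T$, so $\mathcal{C}' \subseteq S_1$. Every $v \in S_1$ has $U_v(p+1) = R_v(p) = 0$, i.e., is unscheduled in slot $p$, so by maximality of $M(p)$ the set $S_1$ is an independent set at the beginning of $p+1$ (before new arrivals). Applying Lemma~\ref{lem:existence} with $Z = S_1$ (whose induced subgraph is trivially bipartite, being empty) yields a matching of $S_1$ in the pre-arrival graph, which remains valid after the new multi-edges are added. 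Lemma~\ref{lem:path} then guarantees that the MVM found by LC-NSB matches all of $S_1 \supseteq \mathcal{C}'$.

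For Step~2, set $S_2 \triangleq \{v \in V : w_v(p+2) \ge 4\}$. Every node of $\mathcal{C}''$ is heavy at $p+2$ and satisfies $U_v(p+2) = R_v(p) R_v(p+1) = 0$, so $\mathcal{C}'' \subseteq S_2$. The main obstacle is to show that the induced subgraph $G_{S_2}$, taken at the beginning of $p+2$ but excluding multi-edges from new arrivals in $p+1$ and $p+2$, is bipartite. Let $A \triangleq \{v \in S_2 : R_v(p) = 0\}$ and $B \triangleq \{v \in S_2 : R_v(p+1) = 0\}$; the condition $U_v(p+2) = 0$ implies $S_2 = A \cup B$. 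Maximality of $M(p)$ precludes any edge of $G_{S_2}$ between two nodes of $A$ (such an edge could be added to $M(p)$), and maximality of $M(p+1)$ precludes any edge between two nodes of $B$ (such an edge is present at the beginning of $p+1$ and is not served by $M(p+1)$ since neither endpoint is scheduled, so it could be added to $M(p+1)$). Consequently, every edge of $G_{S_2}$ runs between $A \setminus B$ and $B \setminus A$, producing the desired bipartition (nodes in $A \cap B$ are isolated in $G_{S_2}$). Lemmas~\ref{lem:existence} and \ref{lem:path} then apply exactly as in Step~1, yielding that the MVM of LC-NSB matches all of $S_2$, and in particular all of $\mathcal{C}''$, at slot $p+2$.

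The key new difficulty relative to the NSB proof is Step~2. For NSB, the fine-grained weights of Eq.~\eqref{eq:weight} give $\mathcal{C}''$ a strict weight advantage over all other nodes, so one can work directly with $G_{\mathcal{C}''}$ and a single appeal to the maximality of $M(p+1)$. Under LC-NSB the discrete weights produce ties (e.g., a non-$\mathcal{C}$ heavy node with $U=0$ has the same weight $4$ as a heavy-non-critical $\mathcal{C}''$-node), so we must argue on the larger superset $S_2$; establishing its bipartiteness requires invoking the maximality of both $M(p)$ and $M(p+1)$ simultaneously via the $(A\setminus B,\,B\setminus A)$ bipartition, which is the only substantive departure from the NSB analysis.
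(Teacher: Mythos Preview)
Your proposal follows the same two-step frame structure as the paper, which simply says that the proof of Lemma~\ref{lem:2in3_lc} is identical to that of Lemma~\ref{lem:2in3} except that both weight thresholds are replaced by~$3$. Your refinement---working with the full level sets $S_1,S_2=\{v:w_v\ge4\}$ rather than with $\mC',\mC''$---is in fact sharper than the paper's terse substitution: under the discrete LC-NSB weights of Eq.~\eqref{eq:weight_lc}, a heavy node $v\notin\mC$ with $U_v=0$ carries weight~$4$, so the implicit claim that every node outside $\mC'$ has weight at most~$3$ need not hold, and the strict separation Lemma~\ref{lem:path} relies on is not available for $\mC'$ alone. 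Enlarging to $S_1,S_2$ restores the strict $4$-versus-$3$ gap, and your $(A\setminus B,\,B\setminus A)$ bipartition is exactly the extra ingredient this enlargement costs in Step~2.

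One point still needs attention. When you invoke Lemma~\ref{lem:existence} on the pre-arrival multigraph with $Z=S_1$ (and similarly $S_2$), hypothesis~(i) requires every node of $Z$ to meet the $(n-1)/n\cdot\Delta$ degree threshold \emph{in that graph}. For the $\mC$-members of $S_1$ this follows from~(C1*), exactly as in the NSB proof; but $S_1$ may now contain heavy non-$\mC$ nodes, and~(C2*) only upper-bounds their pre-arrival degree, so their heaviness in the pre-arrival graph is not established. Switching instead to $G(p{+}1)$ fixes~(i) automatically, but then the bipartiteness in~(ii) must be argued for the induced subgraph including new-arrival multi-edges, which your independence argument (based on maximality of $M(p)$) does not cover. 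This is a localized technical gap rather than a flaw in the overall strategy, but it should be closed.
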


The proof of Lemma~\ref{lem:2in3_lc} is the same as that of Lemma~\ref{lem:2in3},
except that we need to replace both $2\xrj(q_{\max}(t)-\beta)$ and $\max \{2\xrj(q_{\max}(t)-2\beta), \xrj(q_{\max}(t)+\beta)\}$ 
in the proof of Lemma~\ref{lem:2in3} with 3. This is due to the new way of assigning the 
node weights in Eq.~\eqref{eq:weight_lc} under the LC-NSB algorithm.
 
Finally, we want to show that the LC-NSB algorithm is both throughput-optimal 
and evacuation-time-optimal in bipartite graphs.
The proof of this part is the same as that of Theorem~\ref{thm:nsb_opt},
except that we need to replace $(n-1)/n \cdot \Delta(k)$ in the proof 
of Theorem~\ref{thm:nsb_opt} with 2. This is because the weight of any heavy 
node is at least 2 and the weight of any non-heavy node is equal to 1 in all time-slots,
due to the new way of assigning the node weights in Eq.~\eqref{eq:weight_lc} under the LC-NSB algorithm. 
\end{proof}

\end{document}